\theoremstyle{plain}
\newtheorem{thm}{Theorem}
\newtheorem{lemma}[thm]{Lemma}
\newtheorem{prop}[thm]{Proposition}
\newtheorem{cor}[thm]{Corollary}
\theoremstyle{definition}
\newcommand{\bbR}{\mathbb{R}}
\newcommand{\cA}{\mathcal{A}}
\newcommand{\cC}{\mathcal{C}}
\newcommand{\cP}{\mathcal{P}}
\newcommand{\cS}{\mathcal{S}}
\newcommand{\rA}{\mathrm{A}}
\newcommand{\rB}{\mathrm{B}}
\newcommand{\rC}{\mathrm{C}}
\newcommand{\rD}{\mathrm{D}}
\newcommand{\rI}{\mathrm{I}}
\newcommand{\rS}{\mathrm{S}}
\newcommand{\epsf}{\varepsilon_{\mathrm{fast}}}
\newcommand{\epsx}{\varepsilon_{\mathrm{x}}}
\DeclareMathOperator{\atan}{\mathrm{arctan}}
\begin{document}

\title{Contrasting chaotic and stochastic forcing: tipping windows and attractor crises}

\author{Peter Ashwin, Julian Newman and Raphael R\"omer\\
Department of Mathematics and Statistics,\\ University of Exeter, Exeter EX4 4QF, UK}

\maketitle

\begin{abstract}
Nonlinear dynamical systems subjected to a combination of noise and time-varying forcing can exhibit sudden changes, critical transitions or tipping points where large or rapid dynamic effects arise from changes in a parameter that are small or slow. Noise-induced tipping can occur where extremes of the forcing causes the system to leave one attractor and transition to another. If this noise corresponds to unresolved chaotic forcing, there is a limit such that this can be approximated by a stochastic differential equation (SDE) and the statistics of large deviations determine the transitions. Away from this limit it makes sense to consider tipping in the presence of chaotic rather than stochastic forcing. In general we argue that close to a parameter value where there is a bifurcation of the unforced system, there will be a {\em chaotic tipping window} outside of which tipping cannot happen, in the limit of asymptotically slow change of that parameter. This window is trivial for a stochastically forced system. Entry into the chaotic tipping window can be seen as a boundary crisis/non-autonomous saddle-node bifurcation and corresponds to an exceptional case of the forcing, typically by an unstable periodic orbit. We discuss an illustrative example of a chaotically forced bistable map that highlight the richness of the geometry and bifurcation structure of the dynamics in this case. If a parameter is changing slowly we note there is a {\em dynamic tipping window} that can also be determined in terms of unstable periodic orbits.
\end{abstract}

\tableofcontents

\section{Introduction} \label{Intro}

For complex nonlinear systems subjected to a time-varying input (that we call forcing in general), there may be critical transitions (or tipping points) where a small change in input leads to a large change in system state.  In many contexts both intrinsic noisy variability and slow drift variability in parameters have a strong deterministic component and either of these may give rise to abrupt changes via noise-induced or bifurcation-induced tipping \cite{kuehn2011mathematical,kuehn2013mathematical}. Moreover, if the slow drift is not slow enough, rate-induced tipping may also appear \cite{Ashwinetal:2012,Ashwinetal:2017,Wieczorek:2023}. 

Indeed, such effects may already appear within a subsystem of an autonomous (unforced) nonlinear multiscale system because of other timescales present within the system. As an example, noise in climate models is often seen to be result of fast forcing by turbulent fluid dynamics of ocean and atmosphere, while variation appears due to slow forcing such as changes in sea level or land ice. Each of these fast noise and slow drift processes has a distinct effect on the dynamics of the climate. This relates to Hasselmann's programme to understand weather as ``noise'' added to climate: see for example \cite{Lucarini2023theoretical} for a recent review, and is particularly relevant in case where critical transitions/tipping points and chaotic variability are both present - see for example \cite{lucarini2017edge,bodai2012annual,ashwin2021physical}. In many problems of practical importance there are no clear timescale separations, meaning that we are not at the limit where stochastic forcing is appropriate.

The paper is organized as follows. In Sec.~\ref{sec:limits} we briefly review relations between stochastic and chaotic forced models with tipping, and describe the chaotic tipping window that appears in the chaotically forced case. Sec.~\ref{sec:Toy} discusses a discrete-time chaotically forced bistable system where we analyse the structure of the dynamics and geometry within phase and parameter space. Implications of slow variation of parameters in this model are discussed in Sec.~\ref{sec:ramped}, where we introduce the dynamic tipping window. In Sec.~\ref{sec:discuss} discuss possible implications and lessons for more general cases.

\section{Fast chaos limits and stochastic forcing}
\label{sec:limits}

Let us consider a system modelled by a multiscale ODE of the form
\begin{equation}
\begin{aligned}
 \frac{dx}{dt} = & f(x,y,\beta(t)) + \frac{1}{\varepsilon_x}f_0(x,y,\beta(t))  \\
  \frac{dy}{dt} = & \frac{1}{\epsf} g(x,y,\beta(t)) 
 \end{aligned}
 \label{eq:2scalenon-auton}
\end{equation}
where $x\in\bbR^d$ is the system we wish to characterise, $y\in\bbR^f$ is a chaotic subsystem representing the variability, $\beta(t)\in\bbR^p$ are parameters that may also be varying.

The quantity $\epsf>0$ represents a ratio between the timescales fast and slow processed. The terms multiplied by $1/\varepsilon_{x}$ represent the amplitude of fast fluctuations on $x$, with an assumption that $1\gg\epsx\gg\epsf$. Assume that $f_0$ averages to zero over the fast $y$ dynamics. For fixed $\beta$, reduction of systems derived from \eqref{eq:2scalenon-auton} to \eqref{eq:forcedsde} is possible for some cases of large forcing by fast chaos. For example consider $\epsx=\varepsilon^{1/2}$ and $\epsf=\varepsilon$ then
\begin{equation}
\begin{aligned}
\frac{dx}{dt} = &  f(x,\beta)+ \frac{1}{\sqrt{\varepsilon}}f_0(x,y,\beta)  \\
\frac{dy}{dt} = & \frac{1}{\varepsilon} g(y,\beta)
\end{aligned}
\label{eq:forcedchaos}
\end{equation}
where $y$ is chaotic and the average of $f_0$ over this attractor $\langle f_0\rangle_y=O(\varepsilon)$. Recent limit theorem results~\cite{Gaspard:2015,Kelly:2017} (see the review \cite{gottwald2017}) can be applied to show that trajectories of $x$ of \eqref{eq:forcedchaos} are well-approximated in the limit $\varepsilon_f\rightarrow 0$ as solutions of the It\^{o} SDE
\begin{equation}
dx = f(x,\beta(t))dt + \sigma(x) dW_t
\label{eq:forcedsde} 
\end{equation}
where $(W_t)_{t \in \mathbb{R}}$ is a Wiener process. If the SDE (\ref{eq:forcedsde}) is written informally as
\begin{equation}
\frac{dx}{dt} = f(x,\beta(t)) + \sigma(x)\eta(t)
\label{eq:forcedsdeinformal}
\end{equation}
then it can be viewed as a deterministic nonlinear system 
\begin{equation}
\frac{dx}{dt} =f(x,\beta)
\label{eq:ode}
\end{equation} 
subjected to white noise forcing $\eta(t)$ (a ``time derivative'' of $W_t$) and slow variation of parameters $\beta(t)$.

\subsection{Tipping points and the chaotic tipping window}

Systems of the form (\ref{eq:2scalenon-auton}) and (\ref{eq:forcedsde}) allow one to discuss the variables $x$ of interest in terms how they are forced by the other variables. Tipping events (large or rapid changes in state) can arise in \eqref{eq:forcedsde} for various reasons \cite{Ashwinetal:2012}. These reasons include \emph{bifurcation-induced} and \emph{noise-induced} tipping that can be attributed to loss of stability on slowly changing $\beta$ or because of extremes of $\eta$. More rapid changes of parameters $\beta(t)$ can give rise to \emph{rate-induced} tipping and in general abrupt transitions may be attributable to a mixture of these effects \cite{Ashwinetal:2012}.

Because the white noise in \eqref{eq:forcedsde} is unbounded, with small but finite probability the noise can drive any system to exceed an arbitrarily large magnitude in an arbitrarily short-duration time-interval. This means that in this context noise-induced tipping between any attractor is always possible on any timescale. The theory of large deviations for such systems \cite{berglund2006noise,Freidlin2012} gives a powerful set of tools to approximate the statistics of such rare events, especially in the case of asymptotically small $\sigma\ll 1$ but the presence of white noise \emph{destroys} any multistability of \eqref{eq:ode}: there is a non-zero chance of transition between neighbourhoods of any pairs of attractors.

However, except in the limit of asymptotically fast chaotic forcing, the noise component of \eqref{eq:2scalenforced} is never truly a Wiener process. Indeed, it is often bounded: the extremes that drive tipping events cannot exceed certain levels.
Consider a model \eqref{eq:2scalenon-auton}  with bounded noise and forcing by a deterministic chaotic system. We find that tipping event induced by the bounded noise can still occur, but there the boundedness means that there are limited regions where tipping is possible. The bifurcations to tipping associated with bounded noise can be understood from the viewpoint of set-valued dynamics \cite{lamb2015topological,kuehn2018early}. We consider a chaotically forced system to the dynamics:
\begin{equation}
\begin{aligned}
\frac{dx}{dt} = &\ f(x,\beta(t)) + \sigma f_0(y)  \\
\frac{dy}{dt} = &\ g(y).
\end{aligned}
\label{eq:2scalenforced}
\end{equation}
The dynamical behaviour of the $x$ dynamics of \eqref{eq:2scalenforced} is forced by chaotic behaviour $y$ and a slow drift $\beta$
both assumed independent from $x$ and from each other. For fixed $\beta$ we have the {\em frozen system} with chaotic forcing
\begin{equation}
\begin{aligned}
\frac{dx}{dt} = &\ f(x,\beta) + \sigma f_0(y)  \\
\frac{dy}{dt} = &\ g(y)
\end{aligned}
\label{xaut2}
\end{equation}
where we assume the initial $y(0)$ is chosen to be typical with respect to some ergodic invariant measure $m$ for the $y$ dynamics: we can view this as a non-autonomous system
\begin{equation}
\begin{aligned}
\frac{dx}{dt} = &\ f(x,\beta) + \sigma f_0(y(t)).
\end{aligned}
\label{xna}
\end{equation}

Suppose this system with $\sigma=0$ has a branch of stable equilibria for $\beta<\beta^*$ and a bifurcation at some $\beta^*$ such that there is no nearby attractor for $\beta>\beta^*$. The addition of chaotic forcing with $\sigma>0$ will create a {\em chaotic tipping window} $(\beta_-,\beta_+)$  depending on $\sigma$ such that the chaotic forcing outside the window cannot influence the tipping, such that this window limits to $\beta^*$ as $\sigma\rightarrow 0$.

The entry to the chaotic tipping window at $\beta^-$ is a {\em boundary crisis} of an attractor of \eqref{xaut2}. Note that in the originally a crisis is an attractor hitting the stable manifold of an unstable periodic orbit on the basin boundary \cite{grebogi1983crises} but a more general notion of boundary crisis when attractor hitting its basin boundary can occur at a chaotic saddle \cite{hong2004achaotic,lai2011transient}. Beyond the exit of the chaotic tipping window at $\beta^+$ the dynamics of the chaotic system $y$ can no longer delay tipping and there will be rapid escape from a neighbourhood of where the stable branch previously existed. Within the chaotic tipping window the behaviour will depend on the precise details of the chaotic trajectory $y(t)$. In particular, the chaotic attractor will typically contain many unstable periodic orbits (UPOs) and other exceptional measures whose statistics do not reflect that of the invariant measure $m$.

Let us consider the set $\cS(m)$ of invariant measures for the $y$ dynamics that are supported within the support of $m$. For our system we can characterise the chaotic tipping window as the smallest window that contains 
$$
\beta_c(s):=\sup\{\beta~:~ \mbox{the system has more than one attractor} \mbox{ when forced by }s \}.
$$
for all $s\in\cS(m)$. Determining the end points of this window is hence a problem of ergodic optimization: we need to find
$$
\beta_-=\inf \{ \beta_c(s)~:~s\in\cS(m)\},~
~~\beta_+=\sup \{ \beta_c(s)~:~s\in\cS(m)\}.
$$
For typical systems we expect the measures realising these extremes will be periodic \cite{jenkinson2019ergodic}. However, because $\beta_c(s)$ is not simply an average but the boundary of a region of multistability, it does not seem to be obvious how to directly apply these results. We note that the threshold on entry to the chaotic tipping window can also be understood in terms of a \emph{nonautonomous saddle-node bifurcation}~\cite{Anagnostopoulou:2012} of the system \eqref{xna}.

\section{Dynamics and geometry for a chaotically forced bistable map}
\label{sec:Toy}

A noise-forced tipping of an ODE forced by a chaotic component \eqref{eq:forcedchaos} (as for example in \cite{NewmanAshwin:2023}) will be very hard to understand in detail and so for illustration we examine an iterated bistable map forced by a prototypical chaotic map. The example we consider is the skew product dynamical system on $(x,\theta)\in \bbR\times [0,2\pi)$ defined by
\begin{equation}
\label{eq:map}
\begin{aligned}
x_{n+1} = & f(x_n,\theta_n,\beta,\sigma)\\
\theta_{n+1}=  &g(\theta_n)
\end{aligned}
\end{equation}
where $g(\theta)=3\theta ~(\bmod ~2\pi)$ and $f$ can have bistability for fixed parameters. We choose
\begin{equation}
f(x,\theta,\beta,\sigma)= \alpha\atan(x) + \beta + \sigma \cos \theta. \label{eq:arctan}
\end{equation}
For the numerical explorations we fix $\alpha=2$ and vary the real parameters $\beta,\sigma$. Most of the statements go through for any $\alpha>1$ though we note that, for example, the regularity of the invariant graph that separates the attractor basins will decrease for $\alpha$ close to $1$.

Note that the map \eqref{eq:map}--\eqref{eq:arctan} is monotonic in $x$: that is, if $(x_n,\theta_n)$ and $(y_n,\theta_n)$ are two forward trajectories with the same $\theta$-component, then
\begin{equation}
\label{eq:monotonicity}
x_n<y_n  ~~\Rightarrow~~ x_{n+1}<y_{n+1}.
\end{equation}
Moreover, all points $x$ throughout the whole real line $\bbR$ are mapped after one iteration into the finite-width interval 
\begin{equation}
\label{eq:boundedness}
f(x,\theta,\beta,\sigma)\in \rI_{\beta,\sigma} :=\left( \beta-\alpha \frac{\pi}{2}-\sigma,\beta+\alpha \frac{\pi}{2}+\sigma\right).
\end{equation}

For the $\theta$-component of the state space of the map \eqref{eq:map}, we regard $[0,2\pi)$ topologically as wrapped round as a circle. A non-empty compact subset of the state space $\bbR\times [0,2\pi)$ is called \emph{invariant} if it is mapped surjectively onto itself by \eqref{eq:map}. The \emph{basin} (\emph{of attraction}) of a compact invariant set $\cA$ is the set of initial conditions $(x_0,\theta_0)$ for which the distance of the trajectory $(x_n,\theta_n)$ from $\cA$ tends to $0$ as $n \to \infty$. A set of initial conditions $U \subset \bbR\times [0,2\pi)$ is said to be \emph{uniformly attracted} to a compact invariant set $\cA$ if there exists a sequence of values $\varphi_n \geq 0$ tending to $0$ such that for every initial condition $(x_0,\theta_0) \in U$, the distance of $(x_n,\theta_n)$ from $\cA$ is at most $\varphi_n$ for all $n$. A (\emph{local}) \emph{attractor} is a compact invariant set $\cA$ admitting a neighbourhood $U$ that is uniformly attracted to $\cA$. An attractor's basin of attraction is always an open set. A \emph{global attractor} is an attractor whose basin is the whole of $\bbR \times [0,2\pi)$.

Observe that the map (\ref{eq:map}) is (assuming periodicity in $\theta$) a smooth skew product map over the chaotic map $g$ for which Lebesgue measure is ergodic. 

The special case $\beta=\sigma=0$ corresponds to a product map with two attractors 
$$
\cA^{\pm}=\{\pm a_0\}\times [0,2\pi)
$$
where $a_0$ is the positive solution of $x=\alpha \atan x$; for $\alpha=2$, we have $a_0 \approx 2.33112$. The attractors share a basin boundary, namely
$$
\cC=\{0\}\times [0,2\pi),
$$
which is an edge state of the system (i.e. it is an attractor for the system restricted to the basin boundary). Note that $\beta\neq 0$ breaks the symmetry of the system $x\mapsto -x$ while $\sigma>0$ will provide bounded chaotic forcing to the dynamics of $x$.  For the noise-unforced case $\sigma=0$ we can say more -- that (\ref{eq:map}) reduces to
\begin{equation}
\label{eq:unforced}
x_{n+1}=\alpha \arctan(x_n)+\beta
\end{equation}
which has a double-fold hysteresis of attractors on varying $\beta$, as shown in Figure~\ref{fig:xmap}. Namely, letting $x^*$ be the unique positive value at which $\left.\frac{\partial f}{\partial x}\right|_{x=x^*}=\frac{\alpha}{1+(x^*)^2}=1$, there is\footnote{Note that $x^*=\sqrt{\alpha-1}$ and so $\beta^*=\alpha \arctan(\sqrt{\alpha-1})-\sqrt{\alpha-1}$. In the particular case $\alpha=2$ we have $x^*= 1$ and $\beta^*=\frac{\pi}{2}-1 \approx 0.5708$.} a $\beta^*>0$ and continuously $\beta$-parameterised real numbers $(a^+_\beta)_{\beta \in [-\beta^*,\infty)}$, $(a^-_\beta)_{\beta \in (-\infty,\beta^*]}$ and $(r_\beta)_{\beta \in [-\beta^*,\beta^*]}$, with $a^\pm_\beta$ strictly increasing in $\beta$ and $r_\beta$ strictly decreasing in $\beta$, such that the following hold for $\sigma=0$:
\begin{enumerate}
    \item For $-\beta^*<\beta<\beta^*$ there are two attractors $\cA^{\pm}_{\beta}=\{a^\pm_\beta\} \times [0,2\pi)$, and these two attractors share the same basin boundary, namely the edge state $\cC_{\beta}=\{r_\beta\} \times [0,2\pi)$. 
    \item The attractor $\cA^+$ and edge state collide at a saddle-node in the $x$ variable at $-\beta^*$, where we have $a^+_{-\beta^*}=r_{-\beta^\ast}=x^*$. For $\beta<-\beta^*$, $\cA^-_\beta=\{a^-_\beta\} \times [0,2\pi)$ is a global attractor.
    \item The attractor $\cA^-$ and edge state  collide at a saddle-node in the $x$-variable at $\beta^*$, where we have $a^-_{\beta^*}=r_{\beta^*}=-x^*$. For $\beta>\beta^*$, $\cA^+_\beta=\{a^+_\beta\} \times [0,2\pi)$ is a global attractor.
\end{enumerate}

\begin{figure}
    \centering
    \includegraphics[width=9cm]{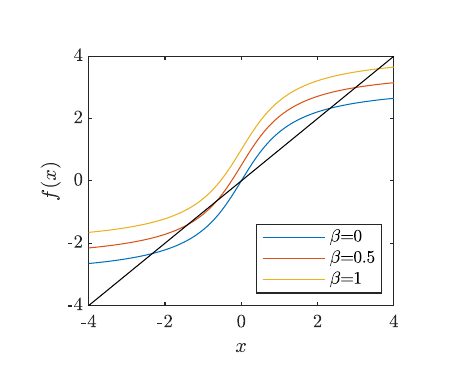}
    \caption{Plots of $x$ vs $f(x)$ for $\alpha=2$ and $\sigma=0$, varying $\beta$. There are two stable fixed points in cases where $|\beta|<\beta^*$ but only one stable fixed point for $|\beta|>\beta^*$. 
    }
    \label{fig:xmap}
\end{figure}

\begin{figure}
    \centering
    (a)\includegraphics[width=8cm]{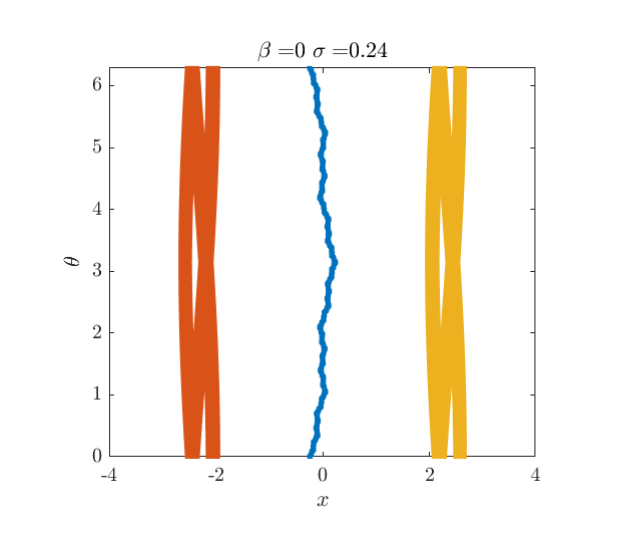}~ (b)\includegraphics[width=8cm]{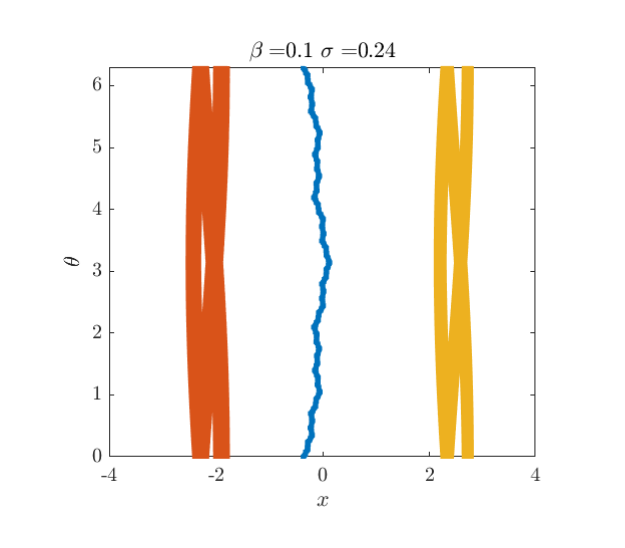}

    (c)\includegraphics[width=8cm]{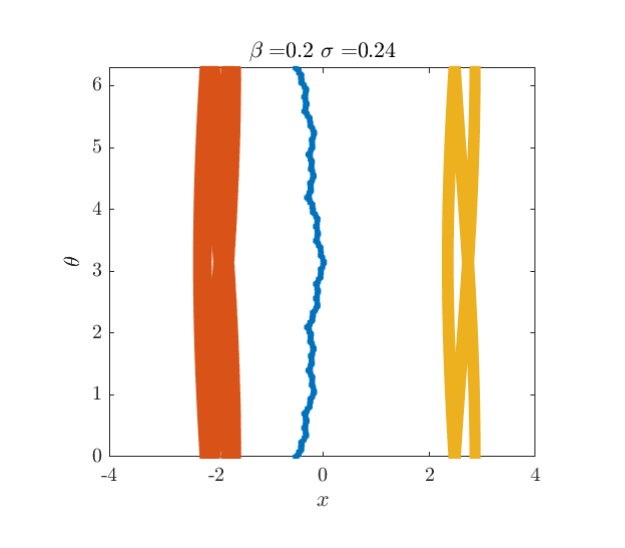}~
    (d)\includegraphics[width=8cm]{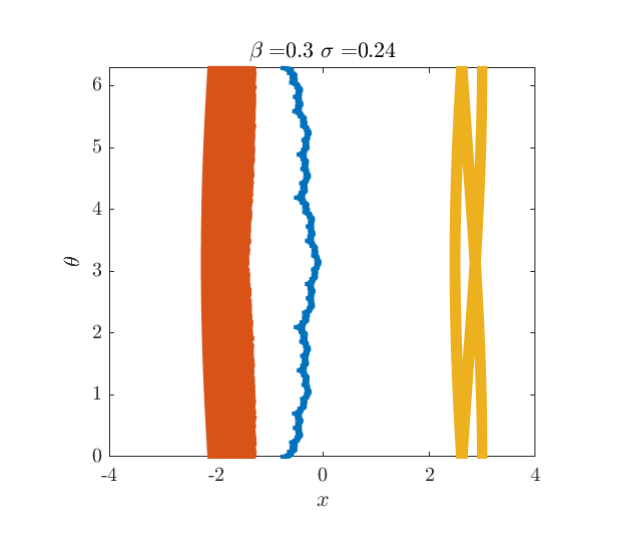}

   (e)\includegraphics[width=8cm]{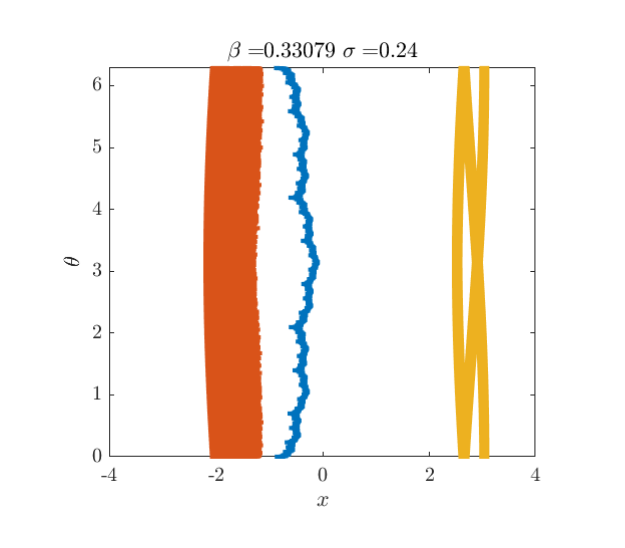}~
    (f)\includegraphics[width=8cm]{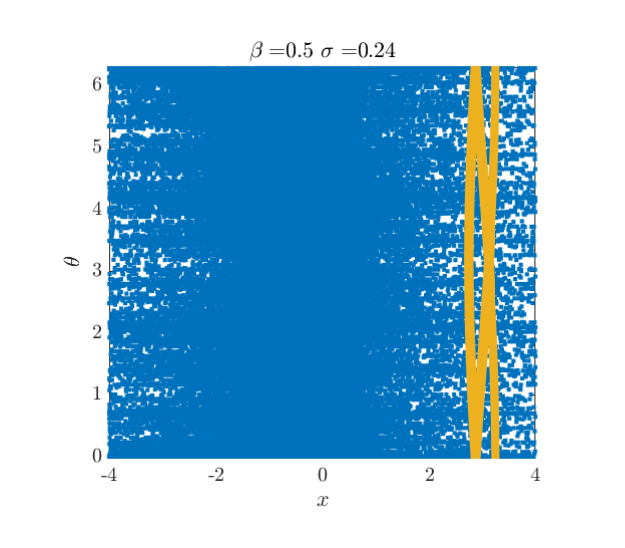}

    \caption{Phase portraits showing numerical approximations of $\cA^{+}$ (yellow), $\cA^-$ (red) and $\cC$ (blue) for $\sigma=0.24$ and (a) $\beta=0.0$ (b) $\beta=0.1$ (c) $\beta=0.2$ (d) $\beta=0.3$ (e) $\beta=0.33079$ (f) $\beta=0.5$. In case (f) the numerically approximated blue  ``saddle'' are in fact only a transient - forward orbits end up at $\cA^+$ if $\beta>\beta^*-\sigma=0.3307$. There is a non-autonomous saddle node of the forced system at $\beta=\pi/2-1\approx 0.330796$.}
    \label{fig:autonomous}
\end{figure}

This picture persists in many regards for $\sigma \neq 0$, as we describe in the next section.

\subsection{Parameter dependence for the map} \label{sec:bulletpoints}

Due to the skew-product structure, the set of tangent vectors parallel to the $x$-axis in the $(x,\theta)$-state space is an invariant subbundle of the tangent bundle for each $\beta$ and $\sigma$. Accordingly, for any $\beta$ and $\sigma$, for any ergodic invariant probability measure $\mu$ on $\mathbb{R} \times [0,2\pi)$, we write $\lambda_{\beta,\sigma}(\mu)$ for the Lyapunov exponent of $\mu$ in this invariant subbundle. The basin of attraction of an invariant probability measure $\mu$ refers to the set of initial conditions $(x_0,\theta_0)$ for which the empirical measure $\frac{1}{N} \sum_{n=0}^{N-1} \delta_{(x_n,\theta_n)}$ converges weakly to $\mu$ as $N \to \infty$; if the basin of attraction of $\mu$ has positive Lebesgue measure, then we say that $\mu$ is a \emph{physical measure} (see e.g.\ \cite{Young:2016}).

We consider how the dynamics varies qualitatively over the $(\beta,\sigma)$-parameter space $\bbR \times [0,\infty)$ shown in Figure~\ref{fig:parameter_space}. Note that in this parameter space, a line of gradient $+1$ corresponds to a constant value of $\beta-\sigma$ and a line of gradient $-1$ corresponds to a constant value of $\beta+\sigma$ (with these constant values being equal to the $\beta$-value at the point where the line intersects $\{\sigma=0\}$). We define the following regions of the parameter space, as depicted in Figure~\ref{fig:parameter_space}:
$$
\begin{array}{rl}
    S^- &= \{(\beta,\sigma) \in \bbR \times [0,\infty) : \beta+\sigma \leq \beta^* \} \\
    S^+ &= \{(\beta,\sigma) \in \bbR \times [0,\infty) : \beta-\sigma \geq -\beta^* \} \\
    \rA &= S^+ \cap S^- = \mathrm{hull}\Big( (-\beta^*,0) \, , \, (0,\beta^*) \, , \, (\beta^*,0) \Big) \\
    \rB^- &= \{(\beta,\sigma) \in \bbR \times [0,\infty) : \beta+\sigma < -\beta^* \} \subset S^- \setminus S^+ \\ 
    \rB^+ &= \{(\beta,\sigma) \in \bbR \times [0,\infty) : \beta-\sigma > \beta^* \} \subset S^+ \setminus S^- \\
    \rC^- &= S^- \setminus (S^+ \cup \rB^-) \\ 
    \rC^+ &= S^+ \setminus (S^- \cup \rB^+) \\
    \rD &= \big(\bbR \times [0,\infty)\big) \setminus \big( S^+ \cup S^- \big).
\end{array}
$$
By construction, the six regions $\rA$, $\rB^-$, $\rC^-$, $\rD$, $\rC^+$, $\rB^+$ (as seen going clockwise through Figure~\ref{fig:parameter_space}) are mutually disjoint and form a partition of the whole parameter space $\bbR \times [0,\infty)$. Note that the pairs $S^{\pm}$, $\rB^{\pm}$ and $\rC^{\pm}$ are respectively each others reflection in the $\sigma$-axis. The sets $\rA$ and $\rD$ are both symmetric in the $\sigma$-axis.

Let us also define the closed line-segments:
\[
L^- = \{(\beta,\beta^*+\beta) : \beta \in [-\beta^*,0] \} 
\qquad 
L^+ = \{(\beta,\beta^*-\beta) : \beta \in [0,\beta^*] \}. \]
In other words, $L^-$ is the intersection of the boundaries of $\rA$ and $\rC^-$; and $L^+$ is the intersection of the boundaries of $\rA$ and $\rC^+$. So the boundary of $\rA$ (relative to $\bbR \times [0,\infty)$) is precisely $L^- \cup L^+$.
Note that $S^{\pm}$ and $\rA$ are closed sets; and $\rB^{\pm}$ and $\rD$ are (relative to $\bbR \times [0,\infty)$) open sets. The sets $\rC^{\pm}$ are neither open nor closed: they exclude the side of their boundary constituted by $L^-$ and $L^+$ respectively, but they each include the open half-lines forming the remaining two sides of their three-sided boundary. Proofs of the statements below (except those regarding region~$\rD$) are given in Appendix~\ref{app:proofs}.

\subsubsection*{Attractors and physical measures in $S^\pm$}

In each of the sentences in the bullet points below, two statements are being made: one where $\pm$ and $\mp$ are consistently substituted for $+$ and $-$ respectively throughout, and one where $\pm$ and $\mp$ are consistently substituted for $-$ and $+$ respectively throughout.
\begin{itemize}
    \item For all $(\beta,\sigma) \in S^\pm$, there is a unique (so in particular, ergodic) invariant probability measure $\mu_{\beta,\sigma}^\pm$ whose support $\cA_{\beta,\sigma}^\pm$ is contained in $[a^\pm_{\beta-\sigma},a^\pm_{\beta+\sigma}] \times [0,2\pi)$ and whose $\theta$-marginal is the normalised Lebesgue measure on $[0,2\pi)$.
    
    (See Secs.~\ref{sec:A1}, \ref{sec:App_ergcts}, \ref{sec:App_thetaspan} and the end of \ref{sec:App_basinmu} for statements about $\mu^+$; analogous statements for $\mu^-$ hold by Sec.~\ref{sec:App_mu-}.)
    \item For $(\beta,\sigma)$ in the interior of $\rS^\pm$, $\cA_{\beta,\sigma}^\pm$ is an attractor.

    (See Corollary~\ref{cor:App_localattr} in Sec.~\ref{sec:App_attrA+}.)
    \item For $(\beta,\sigma) \in S^\pm$, Lebesgue-almost every point in the basin of $\cA_{\beta,\sigma}^\pm$ is in the basin of $\mu_{\beta,\sigma}^\pm$; in particular, $\mu_{\beta,\sigma}^\pm$ is a physical measure.

    (See Proposition~\ref{prop:App_phys} in Sec.~\ref{sec:App_basinmu}.)
    \item For $(\beta,\sigma) \in S^\pm$ with $\sigma>0$, the intersection of $\cA_{\beta,\sigma}^\pm$ with the line $\cA_{\beta-\sigma}^\pm$ is the fixed point $(a^\pm_{\beta-\sigma},\pi)$ and the intersection of $\cA_{\beta,\sigma}^\pm$ with the line $\cA_{\beta+\sigma}^\pm$ is the fixed point $(a^\pm_{\beta+\sigma},0)$.\footnote{Since $\cA_{\beta,\sigma}^\pm$ is an invariant set, and $0$ and $\pi$ are fixed points of $g$, the $(\theta=0)$-section and the $(\theta=\pi)$-section of $\cA_{\beta,\sigma}^\pm$ must be \emph{positively} invariant sets (i.e.\ they map into themselves, but do not have to map onto the whole of themselves); one can show that they contract in forward time towards the singletons $\{(a^\pm_{\beta+\sigma},0)\}$ and $\{(a^\pm_{\beta-\sigma},\pi)\}$ respectively, but that if $\sigma>0$ then they include more than just those singletons and therefore are not invariant sets.}

    (See Sec.~\ref{sec:App_thetaspan}.)
    \item For $(\beta,\sigma) \in S^+ \setminus \{(\mp \beta^*,0)\}$, the Lyapunov exponent $\lambda_{\beta,\sigma}(\mu_{\beta,\sigma}^\pm)$ is strictly negative.

    (See Sec.~\ref{sec:App_LE}.)
    \item Over $(\beta,\sigma) \in S^\pm$: the dependence of $\mu_{\beta,\sigma}^\pm$ on $(\beta,\sigma)$ is continuous in the topology of weak convergence; the dependence of $\cA_{\beta,\sigma}^\pm$ on $(\beta,\sigma)$ is continuous in Hausdorff distance; and the dependence of $\lambda_{\beta,\sigma}(\mu_{\beta,\sigma}^\pm)$ on $(\beta,\sigma)$ is continuous.

    (See Sec.~\ref{sec:App_ergcts}.)
\end{itemize}

\subsubsection*{Basin-boundary dynamics in $\rA$}

\begin{itemize}
    \item For $(\beta,\sigma) \in \rA$, the basin of attraction of $\cA_{\beta,\sigma}^-$ and the basin of attraction of $\cA_{\beta,\sigma}^+$ share the same boundary $\cC_{\beta,\sigma}$. This basin boundary is contained in $[r_{\beta+\sigma},r_{\beta-\sigma}] \times [0,2\pi)$.

    (See Sec.~\ref{sec:App_BB}, which is contingent on definitions in Sec.~\ref{sec:App_rconstr}.)
    \item There exists a continuous map $(\beta,\sigma,\theta) \mapsto r_{\beta,\sigma}(\theta)$ from $\rA \times [0,2\pi)$ to $[-x^\ast,x^\ast]$ such that for each $(\beta,\sigma) \in \rA$, the basin boundary $\cC_{\beta,\sigma}$ is precisely the graph $\{(r_{\beta,\sigma}(\theta),\theta) : \theta \in [0,2\pi)\}$ of $r_{\beta,\sigma}$.

    (See Secs.~\ref{sec:App_rconstr}, \ref{sec:App_prop_r} and \ref{sec:App_BB}.)
    \item For $(\beta,\sigma) \in \rA$ with $\sigma>0$, $r_{\beta,\sigma}(\cdot)$ attains the value $r_{\beta+\sigma}$ uniquely at $\theta=0$, and $r_{\beta,\sigma}(\cdot)$ attains the value $r_{\beta-\sigma}$ uniquely at $\theta=\pi$.
    \item For $(\beta,\sigma) \in \rA$, the pushforward $\nu_{\beta,\sigma}$ of the normalised Lebesgue measure under $\theta \mapsto (r_{\beta,\sigma}(\theta),\theta)$ is an ergodic invariant probability measure.

    (See Sec.~\ref{sec:App_prop_r}.)
    \item For $(\beta,\sigma) \in \rA \setminus \{(-\beta^\ast,0),(\beta^\ast,0)\}$, the Lyapunov exponent $\lambda_{\beta,\sigma}(\nu_{\beta,\sigma})$ is strictly positive.

    (See Sec.~\ref{sec:App_prop_r}.)
    \item Due to the continuity of $r_{\beta,\sigma}(\theta)$ over $(\beta,\sigma,\theta) \in \rA \times [0,2\pi)$: the support of $\nu_{\beta,\sigma}$ is $\cC_{\beta,\sigma}$; the dependence of $\nu_{\beta,\sigma}$ on $(\beta,\sigma)$ is continuous in the topology of weak convergence; the dependence of $\cC_{\beta,\sigma}$ on $(\beta,\sigma)$ is continuous in Hausdorff distance; and the dependence of $\lambda_{\beta,\sigma}(\nu_{\beta,\sigma})$ on $(\beta,\sigma)$ is continuous.

    (See Sec.~\ref{sec:App_prop_r}.)
\end{itemize}

\subsubsection*{Intersections of $\cA^\pm_{\beta,\sigma}$ with $\cC_{\beta,\sigma}$}

Combining some of the points made so far:
    \begin{itemize}
        \item For $(\beta,\sigma)$ in the interior of $\rA$, $\cC_{\beta,\sigma}$ has empty intersection with $\cA_{\beta,\sigma}^-$ and with $\cA_{\beta,\sigma}^+$.
        \item For $(\beta,\sigma) \in L^+$, $\cC_{\beta,\sigma}$ has non-empty intersection with $\cA_{\beta,\sigma}^-$; except at $(\beta,\sigma)=(\beta^\ast,0)$, this intersection is the single point $(-x^\ast,0)$.
        \item For $(\beta,\sigma) \in L^-$, $\cC_{\beta,\sigma}$ has non-empty intersection with $\cA_{\beta,\sigma}^+$; except at $(\beta,\sigma)=(-\beta^\ast,0)$, this intersection is the single point $(x^\ast,\pi)$. 
    \end{itemize}
(See Sec.~\ref{sec:App_inters} for all three statements.)

\subsubsection*{Attractor basins in $S^\pm \setminus \rA$}

\begin{itemize}
    \item For $(\beta,\sigma)\in\rC^+$, the basin of $\cA_{\beta,\sigma}^+$ has full Lebesgue measure and includes $(r_{\beta-\sigma},\infty) \times [0,2\pi)$.
    \item For $(\beta,\sigma)\in\rC^-$, the basin of $\cA_{\beta,\sigma}^-$ has full Lebesgue measure and includes $(-\infty,r_{\beta+\sigma}) \times [0,2\pi)$.
    \item For $(\beta,\sigma)\in\rB^{\pm}$, $\cA_{\beta,\sigma}^\pm$ is a global attractor.
\end{itemize}
(See Sec.~\ref{sec:App_Bfull} for all three statements.)

In all three cases, $\mu_{\beta,\sigma}^\pm$ is the only invariant measure whose $\theta$-projection is the normalised Lebesgue measure. (See the end of Sec.~\ref{sec:App_basinmu}.)

\subsubsection*{Dynamics in and near $\rD$}

For each $(\beta,\sigma) \in \rD$:
\begin{itemize}
    \item For every initial condition $(x_0,\theta_0) \in \bbR \times [0,2\pi)$, the distance of the trajectory $(x_n,\theta_n)$ from the set $[a_{\beta-\sigma}^-,a_{\beta+\sigma}^+] \times [0,2\pi)$ tends to $0$ as $n \to \infty$.

    (See Proposition~\ref{prop:App_big_ga} in Sec.~\ref{sec:App_Dinv}.)
    \item Every invariant probability measure has its support contained in $[a_{\beta-\sigma}^-,a_{\beta+\sigma}^+] \times [0,2\pi)$.

    (See Sec.~\ref{sec:App_Dinv}, after Proposition~\ref{prop:App_big_ga}.)
    \item Every invariant measure whose $\theta$-marginal is the normalised Lebesgue measure has the points $(a_{\beta-\sigma}^-,\pi)$ and $(a_{\beta+\sigma}^+,0)$ in its support.

    (See Proposition~\ref{prop:App_endpoints} in Sec.~\ref{sec:App_Dinv}.)
    \item There exists at least one ergodic invariant measure whose $\theta$-marginal is the normalised Lebesgue measure.

    (See Sec.~\ref{sec:App_Dinv}, after Proposition~\ref{prop:App_big_ga}.)
\end{itemize}

Now for all $(\beta,\sigma) \in \bbR \times [0,\infty)$, the following statements are equivalent:
    \begin{itemize}
        \item[(i)] There is only one invariant measure whose $\theta$-marginal is the normalised Lebesgue measure.
        \item[(ii)] For a uniformly distributed $[0,2\pi)$-valued random variable $\theta$, the width $X_{\beta,\sigma,n}$ of the $n$-th image of $\bbR \times \{\theta\}$ under the map~\eqref{eq:map} converges in probability to $0$ as $n \to \infty$.
    \end{itemize}
(See Proposition~\ref{prop:App_tildeD} in Sec.~\ref{sec:App_GC}.) Note that $X_{\beta,\sigma,n}$ is indeed finite for each $n \geq 1$, due to Eqs.~\eqref{eq:monotonicity} and \eqref{eq:boundedness}.

Let $\tilde{\rD}$ be the set of all $(\beta,\sigma) \in \bbR \times [0,\infty)$ for which the equivalent statements~(i) and (ii) above hold. For each $(\beta,\sigma) \in \tilde{\rD}$, we will write $\mu_{\beta,\sigma}$ for the unique invariant measure whose $\theta$-marginal is the normalised Lebesgue measure, and we will write $\cA_{\beta,\sigma}$ for the support of $\mu_{\beta,\sigma}$. Let $\tilde{\rD}_{\lambda<0} \subset \tilde{\rD}$ be the set of all $(\beta,\sigma) \in \tilde{\rD}$ for which $\lambda_{\beta,\sigma}(\mu_{\beta,\sigma})<0$.

We have already established that $\tilde{\rD}_{\lambda<0}$ includes $(S^- \cup S^+) \setminus \rA$, and that $\tilde{\rD}$ has empty intersection with $\rA$. So to determine the sets $\tilde{\rD}$ and $\tilde{\rD}_{\lambda<0}$, all that would remain is to determine their intersection with $\rD$.

\begin{itemize}
    \item Over $(\beta,\sigma) \in \tilde{\rD}$, the dependence of $\mu_{\beta,\sigma}$ on $(\beta,\sigma)$ is continuous in the topology of weak convergence.

    (See Sec.~\ref{sec:App_ctsD}.)
    \item For $(\beta,\sigma) \in \tilde{\rD}_{\lambda<0}$, $X_{\beta,\sigma,n}$ converges exponentially to $0$ almost surely as $n \to \infty$.

    (See Proposition~\ref{prop:App_expconv} in Sec.~\ref{sec:App_GC}.)
    \item Nevertheless, there exist $(\beta,\sigma) \in \rD$ (such as $(0,\sigma)$ for any $\sigma>\beta^*$) for which there is no deterministic sequence $c_n>0$ tending to $0$ such that $X_{\beta,\sigma,n}$ is almost surely less than $c_n$.

    (See Proposition~\ref{prop:App_pi/2} in Sec.~\ref{sec:App_GC}.)
    \item For $(\beta,\sigma) \in \tilde{\rD}_{\lambda<0}$, Lebesgue-almost every point in $\bbR \times [0,2\pi)$ is in the basin of attraction of $\cA_{\beta,\sigma}$ and in the basin of attraction of $\mu_{\beta,\sigma}$; so in particular, $\mu_{\beta,\sigma}$ is a physical measure.

    (See Corollary~\ref{cor:App_Dbasin} and Proposition~\ref{prop:App_physD} in Sec.~\ref{sec:App_BasinD}.)
\end{itemize}

We conjecture that $\tilde{\rD}_{\lambda<0}$ contains the whole of $\rD$; this is equivalent to saying that
\[ \tilde{\rD} = \tilde{\rD}_{\lambda<0} = (\bbR \times [0,\infty)) \setminus \rA. \]
If this conjecture is true, then the physical measure $\mu_{\beta,\sigma}$ has continuous dependence in the topology of weak convergence over $(\beta,\sigma)$ varying throughout the complement of $\rA$. We note, however, that while the dependence in Hausdorff distance of $\cA_{\beta,\sigma}$ is continuous over $(\beta,\sigma) \in (S^- \cup S^+) \setminus \rA$, it cannot extend continuously from $(S^- \cup S^+) \setminus \rA$ into $\rD$. Numerical support for this conjecture is presented in Figure~\ref{fig:parameter_numerics}. Note that by Eqs.~\eqref{eq:monotonicity} and \eqref{eq:boundedness}, given any finite-width interval $[x_1^-,x_1^+] \supset \rI_{\beta,\sigma}$, any forward trajectory $(x_n,\theta_n)$ of \eqref{eq:map} starting at $n=0$ has its $x$-component lying between the ``bounding trajectories''
\begin{equation} \label{eq:attracting}
    x_n^- \leq x_n \leq x_n^+ \quad \forall \, n \geq 1
\end{equation}
where $(x_n^\pm,\theta_n)$ are forward trajectories starting at $n=1$ with the pre-fixed $x$-component $x_1^\pm$. So then, the random variable $X_{\beta,\sigma,n}$ is bounded above by the width of the $(n-1)$-th image of $[x_1^-,x_1^+] \times \{g(\theta)\}$ under the map~\eqref{eq:map}. So since the normalised Lebesgue measure on $[0,2\pi)$ is $g$-invariant, if we fix any finite-width interval $[x_0^-,x_0^+] \supset \rI_{\beta,\sigma}$, we have that $(\beta,\sigma) \in \tilde{\rD}$ if and only if the width of the $n$-th image of $[x_0^-,x_0^+] \times \{\theta\}$ under the map~\eqref{eq:map} converges in probability to $0$ as $n \to \infty$. So in the left plot of Figure~\ref{fig:parameter_numerics}, fixing such values $x_0^-,x_0^+$ and selecting a value $\theta_0$ from the uniform distribution on $[0,2\pi)$, we simulate subsequent trajectories $(x_n^\pm,\theta_n)$ starting at $(x_0^\pm,\theta_0)$ and plot over $(\beta,\sigma)$-space the value of $x_n^+-x_n^-$ for a large value of $n$. We see values close to $0$ everywhere except in and near $\rA$, supporting the conjecture that $\tilde{\rD}=(\bbR \times [0,\infty)) \setminus \rA$. In the right plot of Figure~\ref{fig:parameter_numerics}, we plot the fibre LE for the attractor (or the mximum of the fibre LE for the two attractors) found in the left plot. We see strictly negative values throughout the whole $(\beta,\sigma)$-parameter space, supporting the conjecture that $\tilde{\rD}_{\lambda<0}$ is the whole of $\tilde{\rD}$.

\begin{figure}
    \centering

    \includegraphics[width=18cm]{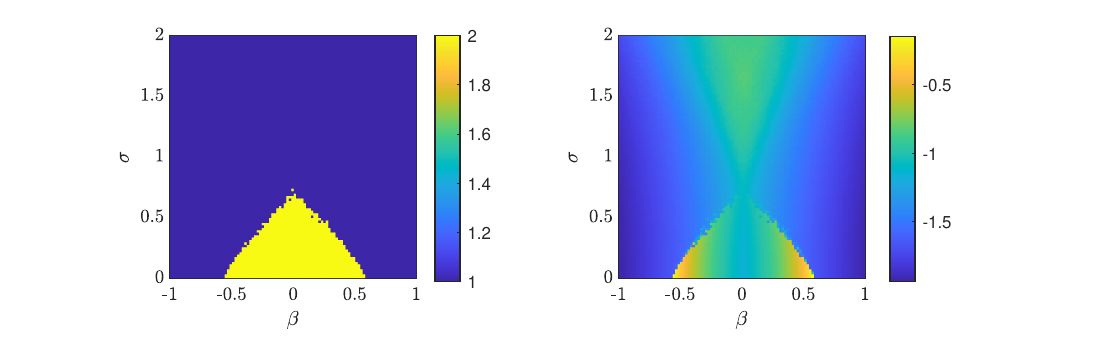}
    
    \caption{Left: Parameter space $(\beta,\sigma)$ showing number of attractors in each region, where blue indicates monostability. Bistability occurs in the yellow triangular region (see region $\rA$ in Figure~\ref{fig:parameter_space}. Right: Largest fibre Lyapunov exponent (LE) for any attractor shown for each point in parameter space, computed for a randomly chosen $\theta_0$. Note the non-autonomous saddle-node bifurcations on the boundary of the region of bistability.
    }
    \label{fig:parameter_numerics}
\end{figure}

\subsubsection*{Non-autonomous saddle-node bifurcation}

The boundary $L^\pm$ between regions~$\rA$ and $\rC^{\pm}$ can be seen as a \emph{non-autonomous saddle-node bifurcation}~\cite{Anagnostopoulou:2012} where one of the attractors $\cA^\pm_{\beta,\sigma}$ collides with the repeller $\cC_{\beta,\sigma}$, albeit just at a single fixed point. Let us emphasise that in contrast to autonomous saddle-node bifurcations (where $\sigma=0$), for $\sigma>0$ the Lyapunov exponent $\lambda^\pm_{\beta,\sigma}$ is strictly negative even at the critical boundary in the $(\beta,\sigma)$-parameter space. 

\subsection{Trajectories and UPOs} \label{sec:UPO}

Figure~\ref{fig:autonomous} illustrates typical dynamical behaviours one can find for the map \eqref{eq:map}--\eqref{eq:arctan}. The yellow and red trajectories are attractors found by iterating forwards respectively from $(\pm 2.3,0.1)$ and plotting $N=10^6$ points after omitting the first $T=50$ points as transient: they approximate $\cA^{\pm}_{\beta,\sigma}$. The blue trajectory is found by backwards iteration of the $x$ dynamics: this approximates the saddle/repeller $\cC_{\beta,\sigma}$. A forwards orbit $\theta_n=g^n(\theta_0)$ is generated then we choose $x_{N+T}=0$ and then define
$$
x_{n-1}= h(x_n,\theta_{n-1})
$$
where $h(x,\theta)= \tan((x-\beta-\sigma \cos \theta)/\alpha)$: note that this is the inverse of $f$ in the sense that
$$
h(f(x,\theta),\theta)=x.
$$
The trajectory $(x_i,\theta_i)$ is plotted for $i=1,\ldots,N$; the saddle-node bifurcation lines forced by unstable periodic orbits (UPOs) in Figure~\ref{fig:parameter_space}. 

We construct UPOs of period $N$ for $g$ symbolically by choosing a sequence $a_i\in \{0,1,2\}$ for $i=1..N$ and noting that the unique periodic point with coding $\{a_i\}_{i=1}^N$ is
$$
\theta=2\pi\sum_{i=1}^{N}a_i \frac{3^{N-i}}{3^N-1}.
$$
We force with the orbit of this point to numerically determine the saddle-node bifurcations for the return map forced over one period of the UPO; this is shown in Figure~\ref{fig:parameter_space} for all orbits of period $N\leq 3$ and selected orbits with $N=4$.

\begin{figure}
    \centering
    \includegraphics[width=12cm]{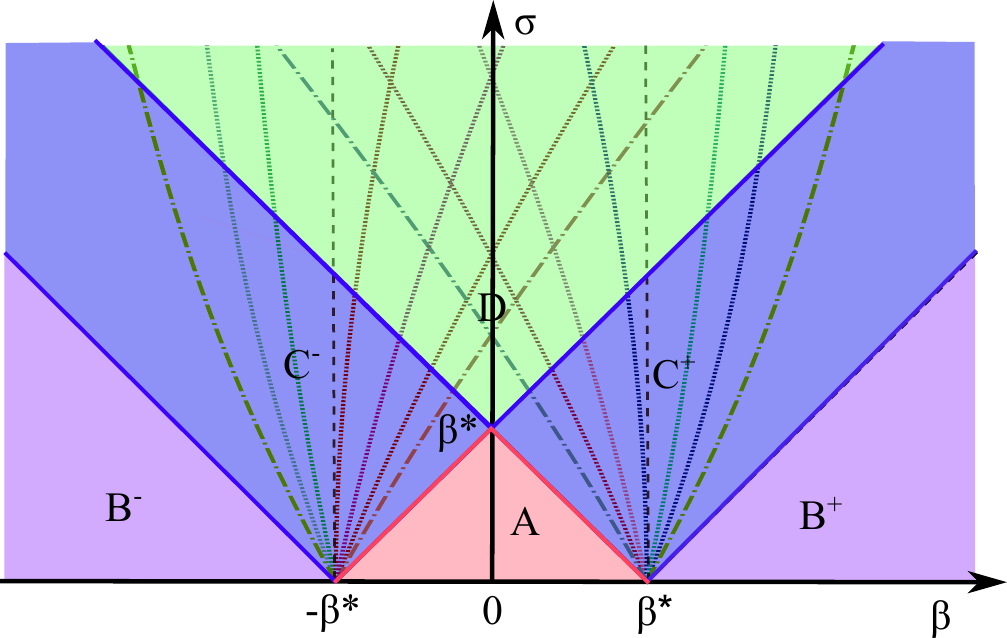}

    \caption{Division of the $(\beta,\sigma)$-parameter space $\mathbb{R} \times [0,\infty)$ for (\ref{eq:map}) and $\alpha=2$, into regions exhibiting qualitatively different dynamics. We discuss regions $S^-=\rB^-\cup\rC^-\cup \rA$ and  $S^+=\rB^+\cup\rC^+\cup \rA$ in the text. Note that the system is bistable in region $\rA$ and monostable elsewhere (see Figure~\ref{fig:parameter_numerics}). For parameters in regions $\rC^{\pm}$ the attractor is localised to one of the branches of attractors for $\sigma=0$ but there is non-uniform approach to the attractor. In region $\rD$ the attractors have fused into one large attractor. The solid/dashed/dotted/dash-dotted lines correspond to saddle-node bifurcations forced respectively by unstable fixed points/period 2 points/period 3 points/selected period 4 points of the $\theta$ dynamics. 
    }
    \label{fig:parameter_space}
\end{figure}

\subsection{Geometry and dimension of the chaotic saddle}

In cases of bistability, the basin boundary contains a chaotic saddle $\cC_{\beta,\sigma}$ (actually a repeller) that occupies all of the boundary. As observed in Figure~\ref{fig:autonomous}, this invariant set is a repeller of the skew product system whose geometry we remark on here in relation to the literature of repellers in skew affine systems; see for example \cite{walkden2018invariant,shen2018hausdorff}. In this section we give some numerical estimates of the dimension of $\cC_{\beta,\sigma}$ both via a box counting algorithm  \cite{falconer2007fractal} and  using the Lyapunov exponents and escape time from $\cC_{\beta,\sigma}$  to give a Lyapunov dimension \cite{Hunt1996}, as estimates of the Hausdorff dimension of $\cC_{\beta,\sigma}$.

Note that for the affine skew product
$$
x_{n+1}=[\lambda(\theta_n)]^{-1} x_n+\sigma \cos(\theta_n)
$$
with $\lambda(\theta)>1$ there will be a repelling invariant graph \cite{bedford1989} whose box dimension is given in terms of the thermodynamic formalism by  \cite{bedford1989} as a $t$ that is a zero of a pressure function that in our case is
$$
\cP\left((1-t)\ln |\partial_{\theta}g|-\ln |\partial_x f|\right)=0
$$
also called Bowen's formula \cite{bedford1989,walkden2018invariant}. In the case that the expansion in the $\theta$ direction is uniform at rate $B>1$ and in the $x$ direction is uniform at rate $A>1$ this reduces to
$$
\ln B+ (1-t)\ln B -\ln A=0
$$
and so it is possible to show that the invariant graph has Hausdorff dimension $2-\ln(A)/\ln(B)$. These results do not obviously extend to nonlinear skew products so we explore the dimensions of the chaotic saddle numerically.

\subsubsection{Box-counting dimension} \label{sbsec:Box-Counting}

Making use of the skew product structure of the system we approximate the shape of the saddle via forward and backward iterations as follows: We choose a box $B_{\beta,\sigma} = [ x_l,x_u ] \times [ \theta_l, \theta_u]$ that contains a part of $\cC_{\beta,\sigma}$, and randomly choose $k$ values of $\theta \in [ \theta_l, \theta_u]$. Iterating the base map (i.e. $\theta_{n+1}= 3\theta_n \mod 2$) forward for $N$ iterations and storing the trajectories, allows evaluation of the inverted fibre map under $N$ iterations backward in time. Note that under time inversion, the fibre map becomes attracting towards $\cC_{\beta, \sigma}$.
This procedure results in an approximation of the saddle with $k$ points.

Then, we compute the saddle's box-counting dimension. This is done by considering the number $T(\kappa)$ of non-intersecting boxes with edge-length $\kappa$ that is needed to cover $\cC_{\beta,\sigma}$. 
The dimension is then given by:
\begin{align}
    d_{box}(\cC_{\beta,\sigma})&= \lim_{\kappa \to 0} \frac {\ln T(\kappa)}{\ln(1/\kappa)}
\end{align}
Rearranging this formula for a finite value $\kappa$ gives 
\begin{align}\label{eq:box_dim_lin_reg}
    \ln T(\kappa) & \propto d_{box}(\cC)\cdot \ln T(\kappa). 
\end{align}
Thus, plotting $\ln T(\kappa)$ vs. $\ln T(\kappa)$, and assuming that this relation correctly describes the behaviour for small $\kappa$ allows an estimation of the box counting dimension via a linear fit\cite{Russel1980}. 
The results are presented in Table~\ref{tab:bc-Lyap-dim}.

\subsubsection{Lyapunov dimension}

Using results from \cite{Hunt1996}, we can calculate the Lyapunov dimension of $\cC_{\beta,\sigma}$ from the Lyapunov exponents as
\begin{align}
	D_{L} &= J + \frac{H - (h_1^+ + \cdot \cdot \cdot h_J^+)}{h_{J+1}^+},
\end{align} 
with $h_1^+ \leq h_2^+$ the two (positive) Lyapunov exponents, and $J=1$ given by
\begin{align}
	h_1^+ + \cdot \cdot \cdot h_J^+ +  h_{J+1}^+ \geq H \geq h_1^+ + \cdot \cdot \cdot h_J^+ .
\end{align}
$H$ is defined as  
\begin{align}
	H &= \sum_{i=1}^2 h_i^+ - \tau^{-1},
\end{align} 
and $\tau$ is the mean escape time from $\cC_{\beta,\sigma}$. This is defined by $\tau ^{-1} = - \lim_{n \rightarrow \infty} \frac{1}{n} \ln \left( \frac{\ell(R^{(n)})}{\ell(R)} \right)$ where $R$ is a box that contains $\cC_{\beta,\sigma}$ but no other invariant sets. $\ell(R)$ is the Lebesgue measure of the set $R$, and $\ell(R^{(n)})$ the Lebesgue measure of the set of trajectories that are still in $R$ after $n$ iterations. A numerical approximation can be done with a linear fit analogously to the procedure described after equation \ref{eq:box_dim_lin_reg} for the box-counting dimension.

To compute the Lyapunov exponents of $\cC_{\beta,\sigma}$, we make use of the skew product structure of the system (\ref{eq:map}) and Birkhoff's ergodic theorem. These allow us to compute the Lyapunov exponents as ensemble averages over the logarithm of the directional derivatives of sample points drawn from the invariant measure on the saddle $\cC_{\beta,\sigma}$. We approximate this invariant measure via the forward and backward evolution algorithm as described in Section~\ref{sbsec:Box-Counting} to compute the Lyapunov exponents: 
\begin{align}
    \lambda_1&=\left\langle \ln \bigg|\frac{\partial f(x, \theta, \beta, \sigma)}{\partial x}\bigg| \right\rangle \\
    \lambda_2&= \ln (3) = 1.0986
\end{align}
where $\langle\cdot\rangle$ denotes the average over an ensemble of $10^4$ points chosen randomly with respect to the invariant measure on the saddle. Using these Lyapunov exponents gives the Lyapunov dimensions shown in Table~\ref{tab:bc-Lyap-dim}. The parameters used for the sampling of $\cC_{\beta,\sigma}$ are shown in Table~\ref{tab:bc-Lyap-dim-parameters}. We note that the numerically estimated dimensions are limited by various factors, in particular the sampling of $\cC_{\beta,\sigma}$ may be quite sparse in places.

\begin{table}
    \centering
    \begin{tabular}{c|c|c|c|c|c}
        $\sigma$  &  $\beta$   &  $\lambda_1(\cC_{\beta,\sigma})$  &  $\tau^{-1}$ &  $d_{box}(\cC_{\beta,\sigma})$  &  $d_{L}(\cC_{\beta,\sigma})$ \\ \hline
        0.24  &  0.0  &  0.684  &  0.684 &  1.326  &  1.378   \\
        0.24  &  0.1  &  0.673  &  0.673 &  1.336  &  1.388   \\
        0.24  &  0.2  &  0.640  &  0.636 &  1.360  &  1.421   \\
        0.24  &  0.3  &  0.580  &  0.572 &  1.429  &  1.480  
    \end{tabular} 
    \caption{Numerically estimated fibre Lyapunov exponent $\lambda_1$, mean escape time $\tau$, box counting dimension $d_{box}$ and Lyapunov dimensions $d_{L}$ of $\cC_{\beta,\sigma}$ for a range of $\beta$ and $\sigma=0.24$ such that there is bistability. Observe the monotonic increase in both dimension estimates on increasing $\beta+\sigma$ towards to the non-autonomous saddle-node bifurcation at $\beta^*$.}
    \label{tab:bc-Lyap-dim}
\end{table}

\begin{table}
    \centering
    \begin{tabular}{c|c|c|c|c|c|c}
        $\sigma$  &  $\beta$   &  $n_\theta$ & $B_{[ x_l,x_u ]}$ & $B_{[ \theta_l, \theta_u]}$ & $R_{[ x_l,x_u ]}$ & $R_{[ \theta_l, \theta_u]}$ \\ \hline
        0.24  &  0.0  &  476  &  [0.155, 0.176]  &  [2.925, 2.95]  &  [-0.3, 0.3]  &  [0.0, $2\pi$]  \\
        0.24  &  0.1  &  441  &  [0.0523, 0.075]  &  [2.925, 2.95]  &  [-0.4, 0.2]  &  [0.0, $2\pi$]  \\
        0.24  &  0.2  &  333  &  [-0.055, -0.025]  &  [2.925, 2.95]  &  [-0.55, 0.05]  &  [0.0, $2\pi$]  \\
        0.24  &  0.3  &  200  &  [-0.175, -0.125]  &  [2.925, 2.95]  &  [-0.75, -0.05]  &  [0.0, $2\pi$]
    \end{tabular} 
    \caption{Parameters chosen to compute the box counting dimensions of $\cC_{\beta,\sigma}$ for different values of $\sigma$ and $\beta$. $n_\theta$ is the number of grid points in $\theta-$direction (resolution of the saddle in $\theta$-direction), $B_{[ x_l,x_u ]}$ and $B_{[ \theta_l, \theta_u]}$ give the size of the box $B_{\beta,\sigma}$ in which we approximated the saddle, and $R_{[ x_l,x_u ]}$ and $R_{[ \theta_l, \theta_u]}$ give the size of the box $R_{\beta,\sigma}$ that contains the saddle but no other invariant set. 
    The following parameters are the same for all values of $\beta$ and $\sigma$: the number of considered orbits is $k=10^6$, the time of forward and backward iterations is $N=10^3$, and the number of grid points in $x$-direction (resolution of the saddle in $x$-direction) is $n_x=400$.}
    \label{tab:bc-Lyap-dim-parameters}
\end{table}

\section{Slowly varying parameters and the dynamic tipping window}
\label{sec:ramped}

Returning to a discrete version of (\ref{xaut2}) where we allow the parameters to slowly vary in addition to chaotic forcing, one can study the interaction between bifurcation and noise induced effects (B- and N-tipping in the terminology of \cite{Ashwinetal:2012}). We investigate this by ramping the parameter beta
\begin{equation}
\beta_n=\beta_0+n\epsilon
\end{equation}
for small $\epsilon$ and simulate a non-autonomous version of (\ref{eq:map}), namely:
\begin{equation}
\label{eq:namap}
\begin{aligned}
x_{n+1} = & f(x_n,\theta_n,\beta_n)\\
\theta_{n+1}=  &g(\theta_n).
\end{aligned}
\end{equation}
We give an initial condition with negative $x_0$, typical $\theta_0$ and $\beta_0=0$ and fix $\alpha=2$ as above and find that the chaotic tipping window at $\epsilon=0$ is the limit of a {\em dynamic tipping window} for $\epsilon>0$.

Figure~\ref{fig:ramp} illustrates trajectories of this non-autonomous map (\ref{eq:namap}): these quickly converge to a neighbourhood of $\cA^-_{\beta,\sigma}$ for slow ramping ($\epsilon=0.001$) and various noise amplitudes $\sigma$. We plot $x$ against $\beta$ and show is the location of the hysteresis curve for $\epsilon=\sigma=0$, namely the unstable fixed points of (\ref{eq:unforced}) which satisfy $\beta=x-\alpha\arctan(x)$. Note that in the absence of noise $\sigma=0$ the system tracks the deterministic lower attractor to the tipping point and only transitions after that. For progressively larger amplitude noise and fixed $\epsilon$ we find progressively earlier transitions.

\begin{figure}
    \centering
    \includegraphics[width=15cm]{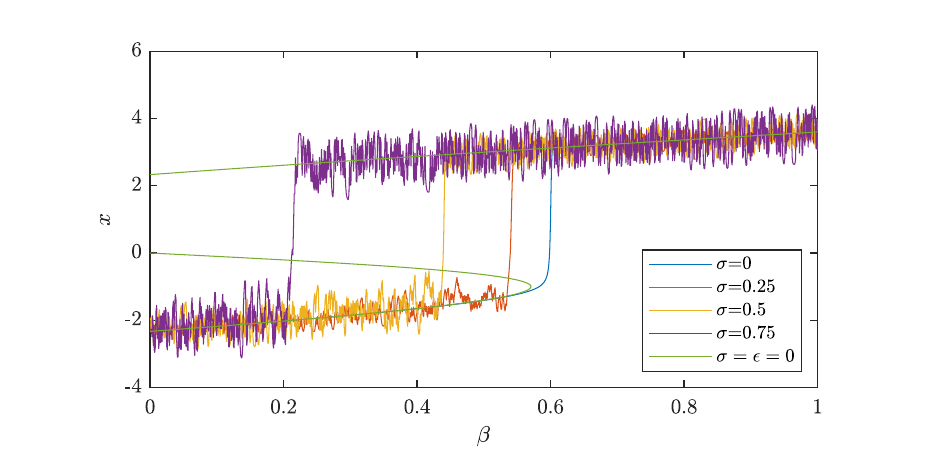} 
    
    \caption{Plots of typical trajectories of (\ref{eq:namap}) with $\beta_n$ ramped at a constant rate $\epsilon=0.001$ and a range of $\sigma$, for $\alpha=2$. The fixed points of (\ref{eq:unforced}) are also shown. Observe the delay of tipping to after the bifurcation in the smallest $\sigma$, but an advance tipping in the larger $\sigma$.}
    \label{fig:ramp}
\end{figure}

Figure~\ref{fig:upo} explores the dependence of the response of (\ref{eq:namap}) on varying rate of ramping $\epsilon$ for a fixed noise amplitude $\sigma=0.2$. Considering a variety of UPOs of $g$ with periods up to six we plot (in colours) the trajectories of the ramped system corresponding to the lower attractor as $\beta$ increases. We show a typical trajectory in bold black, and the hysteresis curve of the unforced fixed points $\beta(x)$ in non-bold black. The dashed black lines correspond to $\beta(x)\pm \sigma$.

There is a non-autonomous saddle-node bifurcation at the turning point of the leftmost black dotted line, where $\beta=\beta^*-\sigma$. This is the start of the chaotic tipping window and for typical $\theta$ the tipping can occur close to there by choosing arbitrarily small $\epsilon$. This corresponds also to the case of forcing by the extreme cases of the UPO that determine start and end of the chaotic tipping window. More precisely, the monotonicity \eqref{eq:monotonicity} and boundedness of the $x$ dynamics holds for \eqref{eq:namap} even in the case where $\beta_n$ varies\footnote{Or indeed $\sigma$!} with $n$. In consequence, for any given trajectory $\theta_n$ of $g$ we can find bounding trajectories $(x^+_n,\theta_n)$ and $(x^-_n,\theta_n)$ for $n\geq 1$ 
of \eqref{eq:namap} such that for any trajectory $(x_n,\theta_n)$ of \eqref{eq:namap} we have bounds as in \eqref{eq:attracting}.

The sequence Figure~\ref{fig:upo}(a)-(c) shows the effect of progressively faster ramping: all of the UPO forced trajectories tip over a range of values of $\beta$ around the unforced value $\beta^*$, and this range becomes biased to larger $\beta$ as $\epsilon$ increases. The typical trajectory (bold black) tips at some point that also becomes progressively later for $\epsilon$ larger. This means we can define a {\em dynamic tipping window}; the smallest interval that contains the set of all $\beta$ such that tipping for the non-autonomous system occurs at this $\beta$ with chaotic forcing typical for some measure $s\in\cS(m)$. Figure~\ref{fig:dtw} shows a numerical approximation of this dynamic tipping window for fixed $\sigma$ and varying $\epsilon$. Unlike the chaotic tipping window, the dynamic tipping window will depend on a chosen threshold to identify tipping. In our case we say there has been tipping from the lower branch at time $n$ if $x_n\geq 1$. However, the dynamic tipping window clearly limits to the chaotic tipping window for $\epsilon\rightarrow 0$.

\begin{figure}
    \centering

    (a)\includegraphics[align=t,width=12cm]{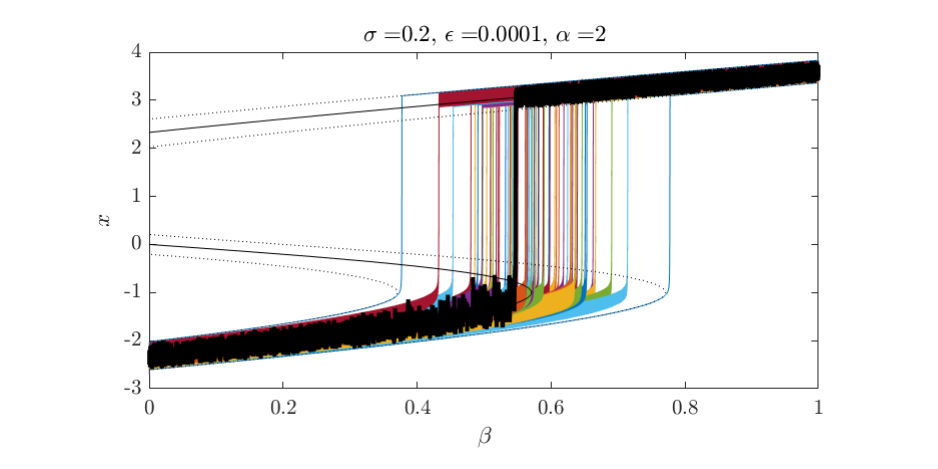} 

    (b)\includegraphics[align=t,width=12cm]{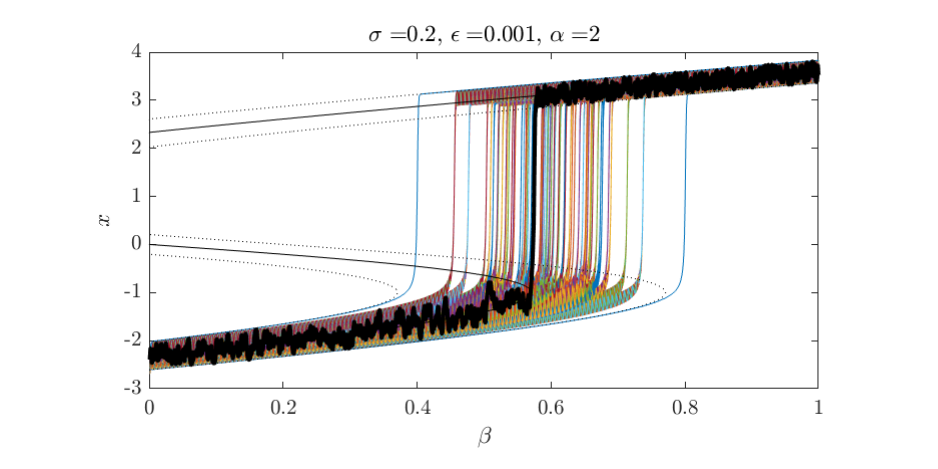}

    (c)\includegraphics[align=t,width=12cm]{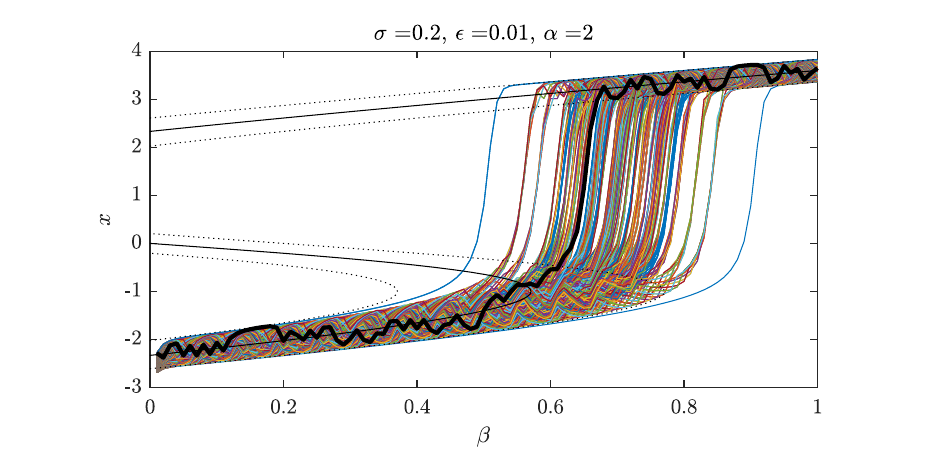} 

    \caption{Plots of trajectories of (\ref{eq:namap}) with $\beta_n$, $\sigma=0.2$ and $\alpha=2$, for a variety of UPOs of period up to period 6. The fixed points of (\ref{eq:unforced}) are also shown. A typical trajectory with chaotic forcing and these parameters is also shown (bold black line). We use $\beta_n=n\epsilon$ for (a) $\epsilon=0.0001$, (b) $\epsilon=0.001$ and (c) $\epsilon=0.01$.}
    \label{fig:upo}
\end{figure}

\begin{figure}
    \centering

    \includegraphics[align=t,width=13cm]{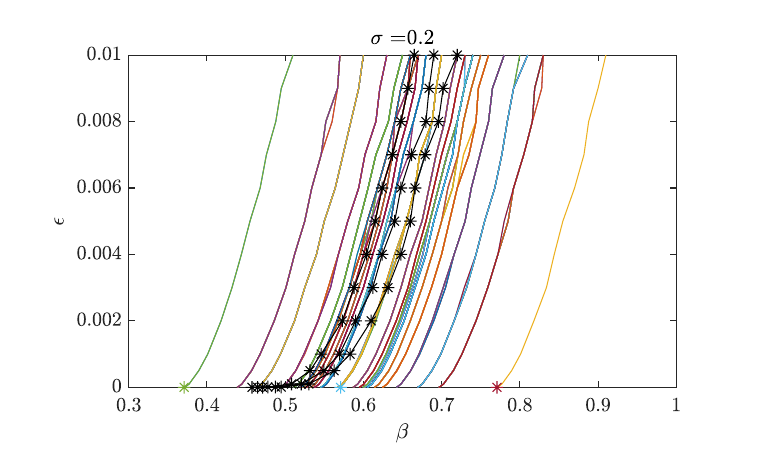}

    \caption{The dynamic tipping window for $\sigma=0.2$ and $\alpha=2$ for a range of values of $\epsilon\geq 10^{-5}$. The values of $\beta$ at tipping are shown for a number of UPOs of period up to period 5. The chaotic tipping window for $\epsilon=0$ is bounded by the values of $\beta$ at tipping for fixed points of (\ref{eq:unforced}) shown with coloured stars. The distribution of tipping  $\beta$ for an ensemble of $300$ values of $\theta$ from a uniform distribution is shown in terms of median and lower/upper quartiles (black stars). Note that the median passes the unforced tipping $\beta$ (blue star) at around $\epsilon=0.001$ (cf.\ Figure~\ref{fig:ll}).}
    \label{fig:dtw}
\end{figure}

In principle, typical trajectories can tip arbitrarily close to the non-autonomous saddle node, but because of super persistent chaotic transients that will exist close to a boundary crisis \cite{grebogi1985super,grebogi1987critical,lai2011transient} this will only be for $\epsilon$ that is exponentially small with respect to $\sigma$.

This is explored further in Figure~\ref{fig:cdfs} which shows numerically computed cdfs of the value of $\beta$ at tipping, measured as the value of $\beta$ where $x=1$ is first exceeded. Observe this cannot occur before the red line which indicates the start of a dynamic tipping window computed with forcing that is an extremal UPO. There is a competition between the rate of ramping $\epsilon$ and the amplitude of the noise $\sigma$: when $\epsilon$ is small then the median tipping location occurs before the unforced threshold, while when it exceed some rate the tipping location can be delayed. The relationship between $\sigma$, $\epsilon$ and the median time of tipping is shown in Figure~\ref{fig:ll}, together with the corresponding cdfs for the value of $\beta$ at tipping.

Analogous to the scaling for the ``strong noise" and ``weak noise" cases that are distinguished in \cite[Section 3.3]{berglund2006noise}, for $\epsilon\ll \sigma^2$ the median tipping will be before $\beta^*$ and so the chaotic forcing dominates, while for $\epsilon\gg\sigma^2$ the median tipping will be after $\beta^+$ and the bifurcation dominates. It is somewhat surprising that the dividing line of the scaling is so precise when considering the median tipping $\beta$ though this is almost certainly due to the special form of the map.

\begin{figure}
    \centering

    \includegraphics[align=t,width=14cm]{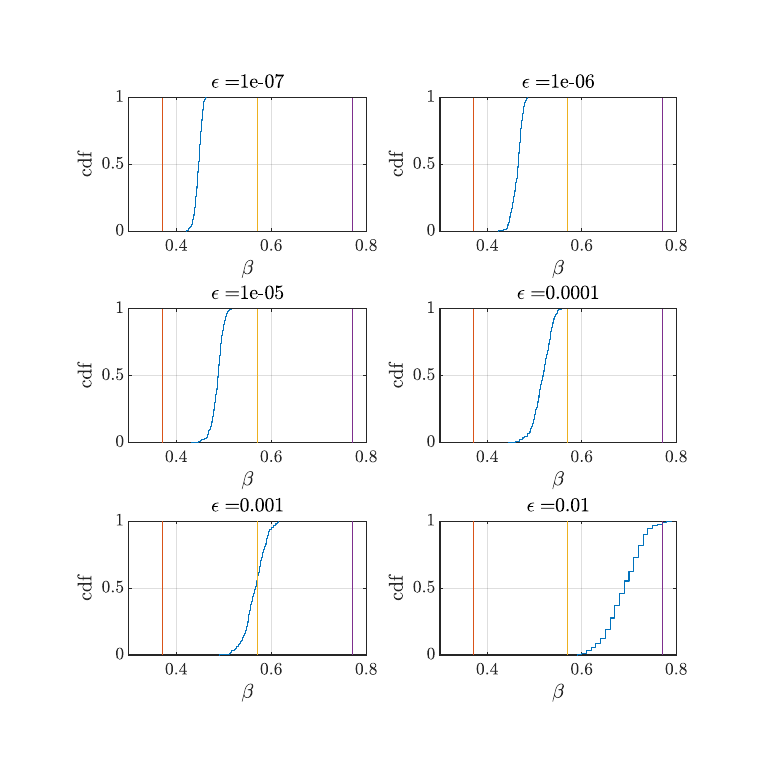}  

    \caption{Cumulative density functions (blue) of $\beta$ where tipping occurs for an ensemble of $N=300$ initial conditions, for $\sigma=0.2$ $\alpha=2$ and $\beta_n$ increasing by $\epsilon$ per timestep, as in Figure~\ref{fig:upo}. We say the tipping occurs when $x$ first exceeds $1$. The vertical lines show the locations of the chaotic tipping window (first tipping for UPO forcing) in red, and last tipping for UPO forcing in purple. The tipping for $\sigma=0$ is shown as vertical yellow line. Observe that for the smallest $\epsilon$ the median is close to the start of the chaotic tipping window.}
    \label{fig:cdfs}
\end{figure}

\begin{figure}
    \centering

    \includegraphics[align=t,width=9cm]{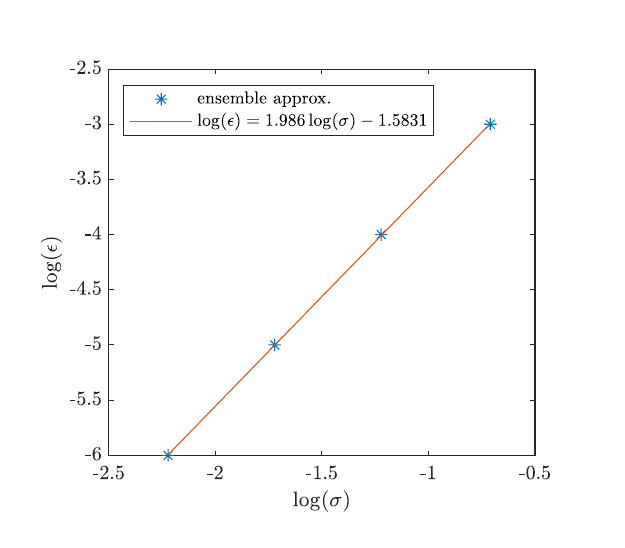}\includegraphics[align=t,width=8cm]{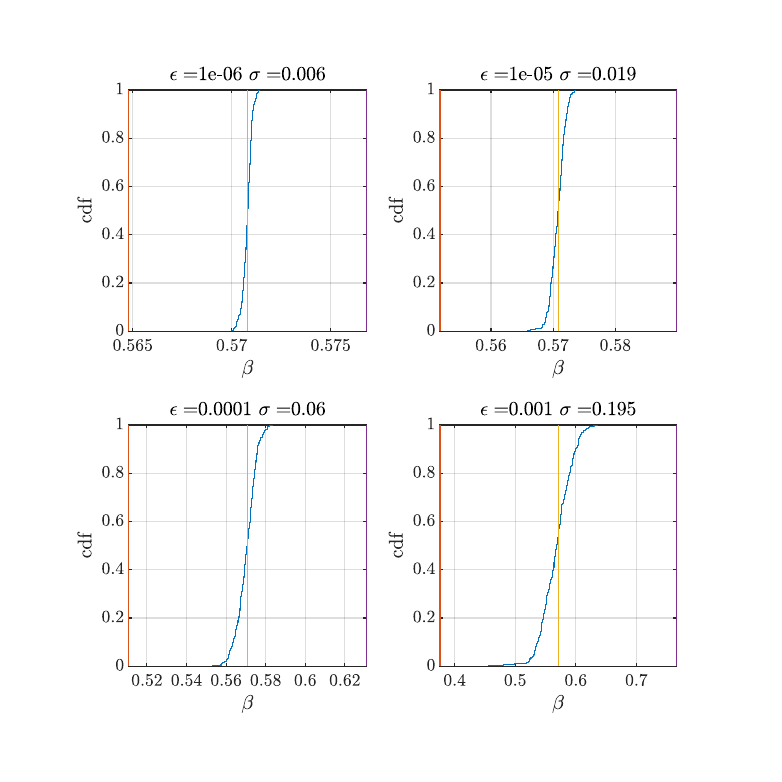}

    \caption{Left: relationship between $\sigma$ and $\epsilon$ such that $\beta_m$, the median of distribution of chaos-forced tipping, is equal to $\beta^*$. Note the close fit to $\epsilon\sim \sigma^2$ for the case where $\beta_m$ is an unbiased estimator of $\beta^*$. Right: the cdfs of four ensembles of $N=300$ initial conditions are shown within the dynamic tipping windows (as in Figure~\ref{fig:cdfs}). Note that the steps become discretized for large $\epsilon$ because only $\beta_n=n\epsilon$ are considered here.}
    \label{fig:ll}
\end{figure}

\section{Discussion}
\label{sec:discuss}

In this paper we have explored the effect of bounded {\em chaotic forcing} as a model for noise that we argue may in many applications be closer to the truth than the {\em stochastic forcing}, i.e. unbounded Wiener noise that can be obtained in a limit of fast chaos. 

By analysing an explicit example of a discrete time bistable map with chaotic forcing we are able to understand the interaction between multiple attractors and bounded noise forcing and the appearance of a chaotic tipping window. The start of this window can be seen as a boundary crisis \cite{lai2011transient} of an attractor of the skew product system or as a non-autonomous saddle-node bifurcation of the forced system. This becomes a dynamic tipping window if there is a slow variation of parameters in addition to chaotic forcing.

The problem of precisely which situations of forcing give rise to tipping and at what time is likely to be quite complicated in the presence of non-uniform parameter variation, especially when the timescale of change of $\beta$ is faster than the timescales associated with attraction to and mixing on the chaotic attractor. The chaos-forced transitions discussed can be related to bifurcations in set-valued dynamical systems such as investigated in \cite{lamb2015topological} as a model for bounded noise forced systems.

In this paper, although we initially discuss continuous time models with chaotic forcing, we do not consider any explicit examples for continuous time models with chaos. These have been considered elsewhere, for example \cite{ashwin2021physical,axelsen2023finite} consider a bistable (Stommel) system forced by a chaotic (Lorenz) system as a conceptual model for tipping of the Atlantic Meridional Overturning Circulation (AMOC) in response to fast chaotic variability, looking at cases where the system limits to an autonomous system in the past and future limits. 

Some of the results here can presumably be generalized to more general cases of non-autonomous parameter variation. For example, considering $\beta_n=\max(0,\min(n\epsilon,\beta^{\max}))$ can be investigated using examples as in Section~\ref{sec:ramped}. Such a case is analogous to a discrete ``parameter shift'' between $\beta=0$ and $\beta=\beta^{\max}>0$ at a rate $\epsilon$ \cite{Ashwinetal:2012,Ashwinetal:2017}. By picking $\theta_0$ and $x_0$ on one of the attractors for $\beta=0$ and then fixing $\beta^{\max}$ we expect to find different behaviours depending on where the chaotic tipping window lies in relation to $[0,\beta^{\max}]$. 

The tipping thresholds in general will involve considering extreme measures of the forcing map that may not be so simple as in this example: see \cite{jenkinson2019ergodic} for a recent review. For the example considered, one can include chaotic forcing of arbitrary amplitude by varying the parameter $\sigma$. Nonetheless, the large positive Lyapunov exponent $\ln(3)$ of the forcing chaos is always fast with respect to the dynamics of forced map unless $|\beta|$ is large. There are likely to be other effects present when the chaotic forcing becomes comparable or weaker than the attraction for the forced map, and it may be interesting to investigate such cases. We do not discuss here the problem of finding early warnings of an impending tipping point in this context. However, the observation that the fibre LE does not need to be zero at the crisis suggests that methods based purely on LEs are likely to miss prediction of an impending tipping point. See also for example \cite{kuehn2011mathematical,kuehn2018early}. Conversely, as suggested in \cite{nishikawa2014controlling}, it may be possible to use the detailed dynamics of the forced system to control the location of the tipping point or the behaviour after tipping. This will effectively require coupling from the forced system back to the forcing, and we do not consider this here.

\subsection*{Acknowledgements}

PA and JN received funding from the European Union’s Horizon 2020 research and innovation programme under grant agreement No. 820970 (TiPES). RR received funding from  the European Union’s Horizon 2020 Research and Innovation Programme under the Marie Skłodowska-Curie Grant Agreement No. 956170 (CriticalEarth).

\subsection*{Data availability}

The Matlab and Julia code to generate the figures in this paper is available from {\tt https://github.com/peterashwin/tipping-windows-2024}.

\bibliographystyle{plain}
\bibliography{nonautrefs}

\clearpage
\newpage

\appendix
\section{Proofs of dynamical properties}
\label{app:proofs}

The system~\eqref{eq:map} that we consider exhibits non-autonomous saddle-node bifurcations as studied in \cite{Anagnostopoulou:2012}. However, due to the special structure of our system, we can give more precise conclusions than in the general setup of \cite{Anagnostopoulou:2012}, using elementary proofs.

\subsection{Preliminaries}

By way of general notational convention, given an injective function $f$, we will write $f^{-1}$ to denote the inverse of $f$ defined on the range of $f$. For $x \in \bbR$ and $\delta>0$, $B_\delta(x):=(x-\delta,x+\delta)$.

Recall that $[0,2\pi)$ is endowed with the topology of a circle, and that $g \colon [0,2\pi) \to [0,2\pi)$ is the tripling map, $g(\theta)=3\theta ~(\bmod ~2\pi)$. We define the notations below where we fix $\alpha>1$.
\begin{align*}
    f_\beta \colon \overline{\bbR} &\to \bbR \\
    f_\beta(x) &= \alpha\atan(x) + \beta \qquad \forall \, \beta \in \bbR, \, x \in \overline{\bbR} \\ & \\
    f' \colon \bbR &\to \bbR \\
    f'(x) &= \frac{\alpha}{1+x^2} = f_\beta'(x) \qquad \textrm{(indep.\ of $\beta$)} \\ & \\
    f_{\beta,\sigma,\theta} &= f_{\beta+\sigma\cos(\theta)} \qquad \forall \, \beta \in \bbR, \, \sigma \geq 0, \, \theta \in [0,2\pi) \\ & \\
    f_{\beta,\sigma,\theta,n} &= f_{\beta,\sigma,g^{n-1}(\theta)} \circ \ldots \circ f_{\beta,\sigma,g(\theta)} \circ f_{\beta,\sigma,\theta} \qquad \forall \, n \in \mathbb{N}_0 \\ & \\
    G_{\beta,\sigma} \colon \bbR \times [0,2\pi) &\to \bbR \times [0,2\pi) \\
    G_{\beta,\sigma}(x,\theta) &= (f_{\beta,\sigma,\theta}(x),g(\theta)).
\end{align*}
So $G_{\beta,\sigma}$ is precisely the function mapping $(x_n,\theta_n)$ to $(x_{n+1},\theta_{n+1})$ in the difference equation~\eqref{eq:map}. Note that for all $n \in \mathbb{N}_0$ and $(x,\theta) \in \bbR \times [0,2\pi)$,
\[ G_{\beta,\sigma}^n(x,\theta) = (f_{\beta,\sigma,\theta,n}(x),g^n(\theta)). \]

Note that the function $f_0=\alpha\atan$ is an odd function, and that for each $\beta$, the function $f_\beta=f_0+\beta$ is strictly increasing on $\overline{\bbR}$, strictly convex on $(-\infty,0]$, and strictly concave on $[0,\infty)$, and that for all $y$ in the range of $f_\beta$,
\begin{equation} \label{eq:App_tan} f_\beta^{-1}(y) = \tan\!\left( \tfrac{y-\beta}{\alpha} \right). \end{equation}

Recall the $\beta$-parameterised families $a^\pm_\beta$ and $r_\beta$ described in Sec.~\ref{sec:Toy}:
\begin{itemize}
    \item for $\beta \in [-\beta^*,\infty)$, $f_\beta$ has a unique fixed point $a^+_\beta$ in $[x^*,\infty)$;
    \item for $\beta \in (-\infty,\beta^*]$, $f_\beta$ has a unique fixed point $a^-_\beta$ in $(-\infty,-x^*]$;
    \item for $\beta \in [-\beta^*,\beta^*]$, $f_\beta$ has a unique fixed point $r_\beta$ in $[-x^*,x^*]$;
\end{itemize}
where $x^*>0$ is the unique nonnegative root of $f'=1$ (so $-x^*$ is the unique nonpositive root of $f'=1$), and $\beta^*>0$ is the critical $\beta$-value for the saddle-node bifurcation at which $a^-_\beta$ and $r_\beta$ collide (so $-\beta^*$ is the critical $\beta$-value for the saddle-node bifurcation at which $a^+_\beta$ and $r_\beta$ collide). The fixed points of $f_\beta$ and their stability can be visualised from Fig.~\ref{fig:xmap}. Note that $x^*$ and $\beta^*$ depend on $\alpha>1$.

Now $x^*$ a fixed point of $f_{-\beta^*}$, and so $[x^*,\infty]$ is mapped into itself by $f_{-\beta^*}$; and so we can define a function $\varphi \colon [0,\infty] \to [0,\infty]$ by
\begin{align*}
    \varphi(x) &= f_{-\beta^*}(x^*+x) - x^* \\
    &= \alpha\atan(x^*+x) - \alpha\atan(x^*) \\
    &= f_\beta(x^*+x) - f_\beta(x^*) \qquad \textrm{for all } \beta.
\end{align*}
Note that $\varphi$ is increasing on $[0,\infty]$; and since $x^*$ is attracting from the right under $f_{-\beta^*}$, we have
\[ \varphi^n(\infty) = f_{-\beta^*}^n(\infty)-x^* \to 0 \quad \textrm{ as } n \to \infty. \]
Similarly, since $a_{-\beta^*}^-$ and $x^*$ are fixed points of $f_{-\beta^*}$, $[a_{-\beta^*}^-,x^*]$ is mapped bijectively onto itself by $f_{-\beta^*}$; and so we can define a function $\psi \colon [0,x^*-a_{-\beta^*}^-] \to [0,x^*-a_{-\beta^*}^-]$ by
\begin{align*}
    \psi(y) &= x^* - f_{-\beta^*}^{-1}(x^*-y) \\
    &= x^* - \tan\!\left( \atan(x^*) - \tfrac{y}{\alpha} \right) \\
    &= x^* - f_\beta^{-1}(f_\beta(x^*) - y) \qquad \textrm{for all } \beta.
\end{align*}
Note that $\psi$ is increasing on $[0,x^*-a_{-\beta^*}^-]$; and since, under $f_{-\beta^*}$, there is a heteroclinic connection from the left-sided-repelling point $x^*$ to the right-sided-attracting point $a_{-\beta^*}^-$, we have that for any $y \in [0,x^*-a_{-\beta^*}^-)$,
\[ \psi^n(y) = x^* - (f_{-\beta^*}^n)^{-1}(x^*-y) \to 0 \quad \textrm{ as } n \to \infty. \]

Let $\mathbb{P}$ be the normalised Lebesgue measure on $[0,2\pi)$. We have that $\mathbb{P}$ is a $g$-ergodic invariant probability measure. Hence in particular, $\mathbb{P}$ cannot be expressed non-trivially as a weighted average of $g$-invariant probability measures; consequently, since the ergodic decomposition theorem gives that every $G_{\beta,\sigma}$-invariant probability measure can be expressed as a weighted average of $G_{\beta,\sigma}$-ergodic invariant probability measures, we have the following:
\begin{lemma} \label{lemma:App_ergdec}
    For any Borel set $A \subset \bbR \times [0,2\pi)$,
    \begin{itemize}
        \item if there exists a $G_{\beta,\sigma}$-invariant probability measure with $\theta$-marginal $\mathbb{P}$ that assigns full measure to $A$, then there exists a $G_{\beta,\sigma}$-ergodic invariant probability measure with $\theta$-marginal $\mathbb{P}$ that assigns full measure to $A$;
        \item if there exists more than one $G_{\beta,\sigma}$-invariant probability measure with $\theta$-marginal $\mathbb{P}$ that assigns full measure to $A$, then there exists more than one $G_{\beta,\sigma}$-ergodic invariant probability measure with $\theta$-marginal $\mathbb{P}$ that assigns full measure to $A$.
    \end{itemize}
\end{lemma}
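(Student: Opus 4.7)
The plan is to invoke the ergodic decomposition theorem for $G_{\beta,\sigma}$-invariant probability measures (which applies since $\bbR \times [0,2\pi)$ is a standard Borel space and $G_{\beta,\sigma}$ is continuous) and then exploit the $g$-ergodicity of $\mathbb{P}$ to show that $\mu$-a.e.\ ergodic component of a measure $\mu$ satisfying the hypothesis inherits the relevant properties.

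Concretely, given any $G_{\beta,\sigma}$-invariant probability measure $\mu$ with $\theta$-marginal $\mathbb{P}$ and $\mu(A)=1$, I would decompose $\mu = \int \mu_\omega \, d\mu(\omega)$ into $G_{\beta,\sigma}$-ergodic components. From $\mu(A)=1$ it is immediate that $\mu_\omega(A)=1$ for $\mu$-a.e.\ $\omega$. Pushing the decomposition forward by the projection $\pi(x,\theta)=\theta$ yields $\mathbb{P} = \int \pi_\ast \mu_\omega \, d\mu(\omega)$, expressing $\mathbb{P}$ as an average of $g$-invariant probability measures. By the property of $\mathbb{P}$ flagged just before the lemma, this forces $\pi_\ast \mu_\omega = \mathbb{P}$ for $\mu$-a.e.\ $\omega$. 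Combining these two observations, $\mu$-a.e.\ ergodic component simultaneously has $\theta$-marginal $\mathbb{P}$ and assigns full measure to $A$.

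The first bullet then follows by selecting any $\omega$ in this full-measure set of ``good'' indices. For the second bullet, I would argue by contrapositive: if there were only one $G_{\beta,\sigma}$-ergodic invariant probability measure $\nu$ with $\theta$-marginal $\mathbb{P}$ and $\nu(A)=1$, then for every $\mu$ satisfying the hypothesis the argument above would force $\mu_\omega = \nu$ for $\mu$-a.e.\ $\omega$, hence $\mu = \int \nu \, d\mu(\omega) = \nu$. Thus every such $\mu$ must coincide with $\nu$, contradicting the assumption that there are at least two distinct such $\mu$.

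The only nontrivial step is the passage from the averaged equality $\mathbb{P} = \int \pi_\ast \mu_\omega \, d\mu(\omega)$ to the pointwise-a.e.\ equality $\pi_\ast \mu_\omega = \mathbb{P}$, i.e., the ``non-trivial weighted average'' statement made in the preamble. This is a standard consequence of $\mathbb{P}$ being an extreme point of the simplex of $g$-invariant probability measures (equivalently, of the uniqueness part of the ergodic decomposition), and is the one ingredient that goes beyond invoking the ergodic decomposition theorem itself; I would either cite it or give a one-line justification by testing the averaged equation against indicators of $g$-invariant Borel sets and using $\mathbb{P}(B)\in\{0,1\}$ for such $B$.
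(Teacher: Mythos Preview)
Your proposal is correct and follows essentially the same approach as the paper, which derives the lemma in a single sentence from the ergodic decomposition theorem for $G_{\beta,\sigma}$ together with the fact that the $g$-ergodicity of $\mathbb{P}$ prevents it from being written as a non-trivial average of $g$-invariant measures. You have simply spelled out the details of that sketch, including the contrapositive argument for the second bullet, and correctly identified the extremality of $\mathbb{P}$ as the one point requiring justification.
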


Let $\Theta$ denote the set of all two-sided orbits of $g$, that is
\[ \Theta = \{(\theta_i)_{i \in \mathbb{Z}} \in [0,2\pi)^\mathbb{Z} \, | \, \forall \, i \in \mathbb{Z}, \theta_{i+1} = g(\theta_i)\}. \]
Note that since $[0,2\pi)$ is compact under its topology as a circle and $g$ is continuous, $\Theta$ is compact. Let $\tilde{\mathbb{P}}$ be the unique probability measure on $\Theta$ with the property that for every $i \in \mathbb{Z}$, the $i$-th marginal of $\tilde{\mathbb{P}}$ is $\mathbb{P}$.

We define the $(\beta,\sigma)$-parameter regions $\rA$, $\rB^\pm$, $\rC^\pm$, $\rD$ and $S^\pm$ as in Sec.~\ref{sec:bulletpoints}.

\subsection{Construction of $\mu_{\beta,\sigma}^+$} \label{sec:A1}

For all $(\beta,\sigma,\theta) \in S^+ \times [0,2\pi)$, since
\[ \beta+\sigma\cos(\theta) \geq \beta-\sigma \geq -\beta^*, \]
we have
\[ f_{\beta,\sigma,\theta}(x^*) \geq f_{\beta-\sigma}(x^*) \geq f_{-\beta^*}(x^*) = x^* \]
and so
\[ f_{\beta,\sigma,\theta}([x^*,\infty]) \subset [x^*,\infty). \]
Therefore, for all $(\beta,\sigma) \in S^+$ and $\boldsymbol{\theta}=(\theta_i)_{i \in \mathbb{Z}} \in \Theta$, noting that
\[ f_{\beta,\sigma,\theta_{-(n+1)},n+1} = f_{\beta,\sigma,\theta_{-n},n} \circ  f_{\beta,\sigma,\theta_{-(n+1)}} \quad \forall \, n \geq 0, \]
we have (by induction) that the sequence of sets $f_{\beta,\sigma,\theta_{-n},n}([x^*,\infty])$ is nested, with diameter bounded above by $\varphi^n(\infty)$; and since $\varphi^n(\infty) \to 0$ as $n \to \infty$, the intersection of this sequence of sets is a singleton $\{a^+_{\beta,\sigma}(\boldsymbol{\theta})\}$.

For each $(\beta,\sigma) \in S^+$, we define the probability measure $\mu_{\beta,\sigma}^+$ on $\mathbb{R} \times [0,2\pi)$ to be the pushforward of $\tilde{\mathbb{P}}$ under the map $\Pi^+_{\beta,\sigma} \colon \boldsymbol{\theta} \mapsto (a^+_{\beta,\sigma}(\boldsymbol{\theta}),\theta_0)$, where $\theta_0$ denotes the $0$-coordinate of $\boldsymbol{\theta}$. Note that the $\theta$-marginal of $\mu_{\beta,\sigma}^+$ is the same as the $\theta_0$-marginal of $\tilde{\mathbb{P}}$, namely $\mathbb{P}$.

\subsection{Continuity of $a^+$ and $\Pi^+$} \label{sec:App_cont}

We now show that the map $(\beta,\sigma,\boldsymbol{\theta}) \mapsto a^+_{\beta,\sigma}(\boldsymbol{\theta})$ is continuous on $S^+ \times \Theta$; from this it immediately follows that the map $(\beta,\sigma,\boldsymbol{\theta}) \mapsto \Pi^+_{\beta,\sigma}(\boldsymbol{\theta})$ is also continuous on $S^+ \times \Theta$. Fix $(\beta,\sigma,\boldsymbol{\theta}) \in S^+ \times \Theta$, and fix $\varepsilon>0$. Let $N$ be such that
\[ f_{\beta,\sigma,\theta_{-N},N}([x^*,\infty]) \subset B_\varepsilon\big(a^+_{\beta,\sigma}(\boldsymbol{\theta})\big). \]
We can find a neighbourhood $U \subset \bbR \times [0,\infty) \times [0,2\pi)$ of $(\beta,\sigma,\theta_{-N})$ small enough that every $(\tilde{\beta},\tilde{\sigma},\tilde{\theta}) \in U$ has
\[ f_{\tilde{\beta},\tilde{\sigma},\tilde{\theta},N}([x^*,\infty]) \subset B_\varepsilon\big(a^+_{\beta,\sigma}(\boldsymbol{\theta})\big). \]
So then, every $(\tilde{\beta},\tilde{\sigma},\tilde{\boldsymbol{\theta}}) \in S^+ \times \Theta$ with $(\tilde{\beta},\tilde{\sigma},\tilde{\theta}_{-N}) \in U$ has
\[ a^+_{\tilde{\beta},\tilde{\sigma}}(\tilde{\boldsymbol{\theta}}) \in B_\varepsilon\big(a^+_{\beta,\sigma}(\boldsymbol{\theta})\big). \]

\subsection{Invariant-graph property of $a_{\beta,\sigma}^+$} \label{sec:App_inv}

For all $(\beta,\sigma) \in S^+$ and $(\theta_i)_i \in \Theta$,
\begin{align*}
    a^+_{\beta,\sigma}((\theta_{i+1})_i) &= \lim_{n \to \infty} f_{\beta,\sigma,\theta_{-(n-1)},n}(\infty) \\
    &= \lim_{n \to \infty} f_{\beta,\sigma,\theta_0}\Big( f_{\beta,\sigma,\theta_{-(n-1)},n-1}(\infty) \Big) \\
    &= f_{\beta,\sigma,\theta_0}\!\left( \lim_{n \to \infty} f_{\beta,\sigma,\theta_{-(n-1)},n-1}(\infty) \right) \qquad \textrm{since $f_{\beta,\sigma,\theta_0}$ is continuous} \\
    &= f_{\beta,\sigma,\theta_0}\Big( a^+_{\beta,\sigma}((\theta_i)_i) \Big).
\end{align*}

\subsection{Ergodicity and continuity properties of $\mu_{\beta,\sigma}^+$} \label{sec:App_ergcts}

Let $\tilde{g} \colon \Theta \to \Theta$ denote the shift map $(\theta_i)_i \mapsto (\theta_{i+1})_i$. Since $\mathbb{P}$ is $g$-ergodic, $\tilde{\mathbb{P}}$ is $\tilde{g}$-ergodic. By the invariant-graph property of $a_{\beta,\sigma}^+$ established in Sec.~\ref{sec:App_inv}, we have
\[ \Big( a^+_{\beta,\sigma}((\theta_{i+1})_i) \, , \, \theta_1 \Big) = G_{\beta,\sigma}\Big( a^+_{\beta,\sigma}((\theta_i)_i) \, , \, \theta_0 \Big) \]
for all $(\theta_i)_i \in \Theta$, i.e.
\[ \Pi^+_{\beta,\sigma} \circ \tilde{g} = G_{\beta,\sigma} \circ \Pi^+_{\beta,\sigma}. \]
In other words, $\Pi^+_{\beta,\sigma}$ is a morphism from $\tilde{g}$ to $G_{\beta,\sigma}$; consequently, since $\tilde{\mathbb{P}}$ is a $\tilde{g}$-ergodic invariant measure, it follows that $\mu_{\beta,\sigma}^+$ is a $G_{\beta,\sigma}$-ergodic invariant measure.

Due to the continuous dependence of $\Pi^+_{\beta,\sigma}(\boldsymbol{\theta})$ on $(\beta,\sigma)$ for each $\boldsymbol{\theta}$, the dominated convergence theorem gives that $\mu^+_{\beta,\sigma}$ depends continuously on $(\beta,\sigma)$ in the topology of weak convergence.

Since $\mu^+_{\beta,\sigma}$ is $G_{\beta,\sigma}$-invariant and $G_{\beta,\sigma}$ is continuous, we have that the support $\cA^+_{\beta,\sigma}$ of $\mu^+_{\beta,\sigma}$ is a  $G_{\beta,\sigma}$-invariant set. Moreover, due to the facts that $\Theta$ is compact and $\tilde{\mathbb{P}}$ has full support in $\Theta$ and $\Pi^+_{\beta,\sigma}$ is continuous, we have that $\cA^+_{\beta,\sigma}$ is precisely the image of $\Theta$ under $\Pi^+_{\beta,\sigma}$. Due to the continuous dependence of $\Pi^+_{\beta,\sigma}(\boldsymbol{\theta})$ on $(\beta,\sigma,\boldsymbol{\theta})$, it then follows that $\cA^+_{\beta,\sigma}$ has continuous dependence on $(\beta,\sigma)$ in Hausdorff distance.

\subsection{Bounds on the $x$-projection of $\cA^+_{\beta,\sigma}$} \label{sec:App_thetaspan}

Recall that we have shown that $\cA^+_{\beta,\sigma}$ is the image of $\Theta$ under $\Pi^+_{\beta,\sigma}$. We will show that
\begin{itemize}
    \item $\cA^+_{\beta,\sigma}$ is contained in $[a^+_{\beta-\sigma},a^+_{\beta+\sigma}] \times [0,2\pi)$, i.e.\ the range of $a^+_{\beta,\sigma}$ over $\Theta$ is contained in $[a^+_{\beta-\sigma},a^+_{\beta+\sigma}]$;
    \item if $\sigma>0$ then the intersection of $\cA_{\beta,\sigma}^+$ with $\cA_{\beta-\sigma}^+=\{a_{\beta-\sigma}^+\} \times [0,2\pi)$ is the point $(a^+_{\beta-\sigma},\pi)$ and the intersection of $\cA_{\beta,\sigma}^+$ with $\cA_{\beta+\sigma}^+=\{a_{\beta+\sigma}^+\} \times [0,2\pi)$ is the point $(a^+_{\beta+\sigma},0)$.
\end{itemize}
First note that for all $\beta \in [-\beta^*,\infty)$,
\[ a^+_\beta = \lim_{n \to \infty} f_\beta^n(\infty). \]
Now fix $(\beta,\sigma) \in S^+$, i.e.\ $\beta-\sigma \in [-\beta^*,\infty)$. So, since
\[ f_{\beta-\sigma} \leq f_{\beta,\sigma,\theta} \leq f_{\beta+\sigma} \]
for all $\theta$, we have that for every $\boldsymbol{\theta} \in \Theta$,
\begin{align*}
a^+_{\beta-\sigma} = \lim_{n \to \infty} f_{\beta-\sigma}^n(\infty) \leq \lim_{n \to \infty} f_{\beta,\sigma,\theta_{-n},n}(\infty) &= a^+_{\beta,\sigma}(\boldsymbol{\theta}) \\
a^+_{\beta+\sigma} = \lim_{n \to \infty} f_{\beta+\sigma}^n(\infty) \geq \lim_{n \to \infty} f_{\beta,\sigma,\theta_{-n},n}(\infty) &= a^+_{\beta,\sigma}(\boldsymbol{\theta}).
\end{align*}
Hence $\cA^+_{\beta,\sigma}$ is contained in $[a^+_{\beta-\sigma},a^+_{\beta+\sigma}] \times [0,2\pi)$.

Now note that $0$ and $\pi$ are fixed points of $g$. So the points $(0)_{i \in \mathbb{Z}}$ and $(\pi)_{i \in \mathbb{Z}}$ are members of $\Theta$, and we have
\begin{align*}
    a^+_{\beta,\sigma}((0)_{i \in \mathbb{Z}}) &= \lim_{n \to \infty} f_{\beta,\sigma,0,n}(\infty) = \lim_{n \to \infty} f_{\beta+\sigma}^n(\infty) = a^+_{\beta+\sigma} \\
    a^+_{\beta,\sigma}((\pi)_{i \in \mathbb{Z}}) &= \lim_{n \to \infty} f_{\beta,\sigma,\pi,n}(\infty) = \lim_{n \to \infty} f_{\beta-\sigma}^n(\infty) = a^+_{\beta-\sigma}.
\end{align*}
Hence the points $(a^+_{\beta+\sigma},0)$ and $(a^+_{\beta-\sigma},\pi)$ belong to $\cA^+_{\beta,\sigma}$. Now supposing that $\sigma>0$, we will show that these points are the only points in the intersection of $\cA^+_{\beta,\sigma}$ respectively with $\cA_{\beta+\sigma}^+$ and with $\cA_{\beta-\sigma}^+$. For this, it is sufficient to show that
\begin{itemize}
    \item[(i)] for every $\boldsymbol{\theta} \in \Theta$ with $\theta_0 \neq \pi$, $a^+_{\beta,\sigma}(\boldsymbol{\theta}) > a^+_{\beta-\sigma}$;
    \item[(ii)] for every $\boldsymbol{\theta} \in \Theta$ with $\theta_0 \neq 0$, $a^+_{\beta,\sigma}(\boldsymbol{\theta}) < a^+_{\beta+\sigma}$.
\end{itemize}
We show (i); (ii) is similar. Fix any $(\theta_i)_i \in \Theta$ with $\theta_0 \neq \pi$. Since $\pi$ is a fixed point of $g$, it follows that $\theta_{-1} \neq \pi$. So since $\sigma>0$ and $\theta_{-1} \neq \pi$, we have that $f_{\beta,\sigma,\theta_{-1}}$ is strictly greater than $f_{\beta-\sigma}$ everywhere. So, using the invariant-graph property of $a^+_{\beta,\sigma}$ applied at the point $(\theta_{i-1})_i$ we have
\begin{equation} \label{eq:neq0} a^+_{\beta,\sigma}((\theta_i)_i) = f_{\beta,\sigma,\theta_{-1}}\Big( a^+_{\beta,\sigma}((\theta_{i-1})_i) \Big) \geq f_{\beta,\sigma,\theta_{-1}}(a^+_{\beta-\sigma}) > f_{\beta-\sigma}(a^+_{\beta-\sigma}) = a^+_{\beta-\sigma}. \end{equation}

\subsection{Lyapunov exponent of $\mu^+_{\beta,\sigma}$} \label{sec:App_LE}

Note that $f'$ is continuous and always takes strictly positive values. So, since $\mu_{\beta,\sigma}^+$ is compactly supported, Birkhoff's ergodic theorem gives that the Lyapunov exponent $\lambda_{\beta,\sigma}(\mu_{\beta,\sigma}^+)$ is well-defined and given by
\[ \lambda_{\beta,\sigma}(\mu_{\beta,\sigma}^+) = \int_{\mathbb{R} \times [0,2\pi)} \ln f'(x) \, \mu_{\beta,\sigma}^+(d(x,\theta)) = \int_\Theta \ln f'(a_{\beta,\sigma}^+(\boldsymbol{\theta})) \, \tilde{\mathbb{P}}(d\boldsymbol{\theta}) . \]
Suppose $\sigma>0$. Then as in \eqref{eq:neq0}, every $\boldsymbol{\theta} \in \Theta$ outside the $\tilde{\mathbb{P}}$-null set $\{\theta_{-1} = \pi\}$ has $a_{\beta,\sigma}^+(\boldsymbol{\theta}) > a^+_{\beta-\sigma} \geq x^*$ and therefore
\[ f'(a_{\beta,\sigma}^+(\boldsymbol{\theta})) < 1; \]
and so it follows that
\[ \lambda_{\beta,\sigma}(\mu_{\beta,\sigma}^+) < 0. \]
Since $\ln f'$ is continuous and the support of $\mu^+_{\beta,\sigma}$ is uniformly bounded over bounded subsets of $S^+$, the continuous dependence of $\mu^+_{\beta,\sigma}$ on $(\beta,\sigma)$ in the topology of weak convergence implies that $\lambda_{\beta,\sigma}(\mu_{\beta,\sigma}^+)$ has continuous dependence on $(\beta,\sigma)$.

\subsection{Attractivity of $\cA^+_{\beta,\sigma}$} \label{sec:App_attrA+}

\begin{lemma}
    Fix $(\beta,\sigma) \in S^+$. The set $[x^*,\infty) \times [0,2\pi)$ is uniformly attracted to $\cA^+_{\beta,\sigma}$.
\end{lemma}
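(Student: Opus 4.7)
The plan is to exhibit, for each initial condition $(x_0,\theta_0) \in [x^*,\infty) \times [0,2\pi)$ and each $n$, an explicit point in $\cA^+_{\beta,\sigma}$ that $\theta$-shadows $G_{\beta,\sigma}^n(x_0,\theta_0)$ with $x$-error at most $\varphi^n(\infty)$, using only facts already built in Sec.~\ref{sec:A1}. Write $(x_n,\theta_n) = G_{\beta,\sigma}^n(x_0,\theta_0)$, so $\theta_n = g^n(\theta_0)$ and $x_n = f_{\beta,\sigma,\theta_0,n}(x_0)$. Since $x_0 \geq x^*$,
\[ x_n \in f_{\beta,\sigma,\theta_0,n}\big([x^*,\infty]\big). \]

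The key observation is that any $\boldsymbol{\theta} \in \Theta$ whose coordinate $\theta_{-n}$ happens to equal $\theta_0$ automatically has $\theta_0$-coordinate equal to $g^n(\theta_0) = \theta_n$, and, from the nested-intersection construction of $a^+_{\beta,\sigma}$ in Sec.~\ref{sec:A1},
\[ a^+_{\beta,\sigma}(\boldsymbol{\theta}) \in f_{\beta,\sigma,\theta_{-n},n}\big([x^*,\infty]\big) = f_{\beta,\sigma,\theta_0,n}\big([x^*,\infty]\big). \]
So I would first observe that such a $\boldsymbol{\theta}$ exists: since $g$ is surjective, one can choose any one-sided backward orbit $\theta_{-1},\theta_{-2},\ldots$ with $g(\theta_{-k-1}) = \theta_{-k}$ starting from $\theta_{-n}:=\theta_0$, and then fill in $\theta_i := g^{i}(\theta_0)$ for $i \geq -n$. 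Denote this orbit by $\boldsymbol{\theta}^{(n,\theta_0)}$ and set
\[ p_n(x_0,\theta_0) := \Pi^+_{\beta,\sigma}\big(\boldsymbol{\theta}^{(n,\theta_0)}\big) = \big(a^+_{\beta,\sigma}(\boldsymbol{\theta}^{(n,\theta_0)}),\,\theta_n\big) \in \cA^+_{\beta,\sigma}. \]
Both $x_n$ and the $x$-coordinate of $p_n(x_0,\theta_0)$ lie in the single interval $f_{\beta,\sigma,\theta_0,n}([x^*,\infty])$, whose diameter was bounded in Sec.~\ref{sec:A1} by $\varphi^n(\infty)$. Since the $\theta$-coordinates of $(x_n,\theta_n)$ and $p_n(x_0,\theta_0)$ coincide, this yields
\[ \mathrm{dist}\big(G_{\beta,\sigma}^n(x_0,\theta_0),\,\cA^+_{\beta,\sigma}\big) \leq \varphi^n(\infty) \]
uniformly over $(x_0,\theta_0) \in [x^*,\infty) \times [0,2\pi)$, and setting $\varphi_n := \varphi^n(\infty)$ (which tends to $0$, as recalled in Sec.~\ref{sec:App_cont}) gives the required uniform attraction.

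There is no real obstacle, since the diameter estimate and the limit $\varphi^n(\infty) \to 0$ are already on hand and the shadowing is forced by the skew-product geometry; the only care needed is to invoke surjectivity of $g$ to produce the two-sided orbit $\boldsymbol{\theta}^{(n,\theta_0)}$, and to note that the distance in $\bbR \times [0,2\pi)$ reduces to the $x$-distance when $\theta$-coordinates match. The same argument will immediately give the analogous statement for $(-\infty,-x^*] \times [0,2\pi)$ and $\cA^-_{\beta,\sigma}$ when $(\beta,\sigma) \in S^-$, by the symmetry under $x \mapsto -x$, $\beta \mapsto -\beta$.
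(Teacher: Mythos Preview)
Your proof is correct and follows essentially the same approach as the paper: both arguments reduce uniform attraction to the diameter bound $\mathrm{diam}\,f_{\beta,\sigma,\theta_0,n}([x^*,\infty]) \leq \varphi^n(\infty)$ from Sec.~\ref{sec:A1}, together with the existence of a point of $\cA^+_{\beta,\sigma}$ over the correct $\theta$-fibre whose $x$-coordinate lies in that same interval. The only cosmetic difference is that the paper fixes a single $\boldsymbol{\theta}$ with $0$-coordinate $\theta_0$, sets $\tilde x = a^+_{\beta,\sigma}(\boldsymbol{\theta})$, and uses invariance of $\cA^+_{\beta,\sigma}$ to track the single forward orbit $G_{\beta,\sigma}^n(\tilde x,\theta_0)$, whereas you build a fresh two-sided orbit $\boldsymbol{\theta}^{(n,\theta_0)}$ for each $n$; since the shift of the paper's $\boldsymbol{\theta}$ would serve as your $\boldsymbol{\theta}^{(n,\theta_0)}$, the two constructions coincide up to an irrelevant choice of backward extension.
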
 \label{lemma:App_attr}

\begin{proof}
    For notational convenience, given $x,\tilde{x} \in \bbR$ and $\theta \in [0,2\pi)$, we write
    \[ \mathrm{dist}\big((x,\theta),(\tilde{x},\theta)\big) = |x-\tilde{x}|. \]
    To show the desired result, we will show that for every $(x,\theta) \in [x^*,\infty) \times [0,2\pi)$ there exists $\tilde{x}$ with $(\tilde{x},\theta) \in \cA^+_{\beta,\sigma}$ such that for all $n \geq 0$,
    \[ \mathrm{dist}\big(G_{\beta,\sigma}^n(x,\theta), G_{\beta,\sigma}^n(\tilde{x},\theta)\big) \leq \varphi^n(\infty). \]    
    Since $\cA^+_{\beta,\sigma}$ is $G_{\beta,\sigma}$-invariant and $\varphi^n(\infty) \to 0$ as $n \to \infty$, this gives the desired result.
    
    Fix $(x,\theta) \in [x^*,\infty) \times [0,2\pi)$. Pick a point $\boldsymbol{\theta} \in \Theta$ whose $0$-coordinate is equal to $\theta$, and set $\tilde{x}=a^+_{\beta,\sigma}(\boldsymbol{\theta})$; so $(\tilde{x},\theta) \in \cA^+_{\beta,\sigma}$. We have that both $x$ and $\tilde{x}$ belong to $[x^*,\infty)$; and so for all $n \geq 0$, since the diameter of $f_{\beta,\sigma,\theta,n}([x^*,\infty])$ is at most $\varphi^n(\infty)$, we have the desired result.
\end{proof}

Now if $\beta-\sigma$ is strictly greater than $-\beta^*$ then the minimal $x$-coordinate of $\cA^+_{\beta,\sigma}$, namely $a^+_{\beta-\sigma}$ (as shown in Sec.~\ref{sec:App_thetaspan}), is strictly greater than $x^*$; and so we immediately have the following corollary:

\begin{cor} \label{cor:App_localattr}
    If $\beta-\sigma$ is strictly greater than $-\beta^*$, then $\cA^+_{\beta,\sigma}$ is a local attractor.
\end{cor}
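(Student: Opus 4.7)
The plan is to read this as a one-line consequence of Lemma~\ref{lemma:App_attr}, using only that $\cA^+_{\beta,\sigma}$ lies strictly to the right of $\{x^*\} \times [0,2\pi)$ under the stated hypothesis.

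First, I would invoke the bound from Sec.~\ref{sec:App_thetaspan}, namely $\cA^+_{\beta,\sigma} \subset [a^+_{\beta-\sigma}, a^+_{\beta+\sigma}] \times [0,2\pi)$. Next, I would recall from the setup in Sec.~\ref{sec:Toy} that $\beta \mapsto a^+_\beta$ is strictly increasing on $[-\beta^*,\infty)$ and that $a^+_{-\beta^*} = x^*$; hence the strict inequality $\beta-\sigma > -\beta^*$ gives $a^+_{\beta-\sigma} > x^*$. Consequently $\cA^+_{\beta,\sigma} \subset (x^*,\infty) \times [0,2\pi)$.

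Now set $U := (x^*,\infty) \times [0,2\pi)$. This $U$ is open in $\bbR \times [0,2\pi)$ and, by the previous paragraph, contains $\cA^+_{\beta,\sigma}$, so it is a neighbourhood of $\cA^+_{\beta,\sigma}$. Since $U \subset [x^*,\infty) \times [0,2\pi)$, Lemma~\ref{lemma:App_attr} supplies a sequence $\varphi_n \geq 0$ tending to $0$ such that every trajectory starting in $[x^*,\infty) \times [0,2\pi)$ stays within distance $\varphi_n$ of $\cA^+_{\beta,\sigma}$ at time $n$; the same sequence witnesses uniform attraction of $U$ to $\cA^+_{\beta,\sigma}$. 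Finally, $\cA^+_{\beta,\sigma}$ is compact (it is the image of the compact set $\Theta$ under the continuous map $\Pi^+_{\beta,\sigma}$ of Sec.~\ref{sec:A1}) and $G_{\beta,\sigma}$-invariant (as recorded in Sec.~\ref{sec:App_ergcts}), so it satisfies the definition of local attractor.

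The main obstacle is essentially nonexistent: the whole corollary is a routine rephrasing. The one substantive point is the passage from the weak inclusion $\cA^+_{\beta,\sigma} \subset [a^+_{\beta-\sigma},a^+_{\beta+\sigma}] \times [0,2\pi)$ to the strict inclusion $\cA^+_{\beta,\sigma} \subset (x^*,\infty) \times [0,2\pi)$; this rests solely on the strict monotonicity of the branch $\beta \mapsto a^+_\beta$, which is already part of the hysteresis picture recalled in Sec.~\ref{sec:Toy}. No new estimate on $\varphi$, $\psi$ or the dynamics is required.
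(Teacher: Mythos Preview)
Your proof is correct and follows essentially the same route as the paper: use the bound $\cA^+_{\beta,\sigma}\subset[a^+_{\beta-\sigma},a^+_{\beta+\sigma}]\times[0,2\pi)$ from Sec.~\ref{sec:App_thetaspan} together with $a^+_{\beta-\sigma}>x^*$ to see that $U=(x^*,\infty)\times[0,2\pi)$ is an open neighbourhood of $\cA^+_{\beta,\sigma}$, then invoke Lemma~\ref{lemma:App_attr}. Your version is slightly more explicit in checking compactness and invariance, but the argument is the same.
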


\subsection{Construction and properties of $\mu_{\beta,\sigma}^-$} \label{sec:App_mu-}

Let us first verify that the involution $\iota \colon (x,\theta) \mapsto (-x,\theta+\pi)$ serves as a conjugacy between $G_{\beta,\sigma}$ and $G_{-\beta,\sigma}$:
\begin{align*}
    \big(G_{\beta,\sigma} \circ \iota\big)(x,\theta) &= G_{\beta,\sigma}(-x,\theta+\pi) \\
    &= (\alpha \atan(-x) + \beta + \sigma\cos(\theta+\pi) \, , \, 3(\theta+\pi)) \\
    &= (-\alpha \atan(x) + \beta - \sigma\cos(\theta) \, , \, 3\theta+3\pi) \\
    &= \big(-(\alpha \atan(x) - \beta +\sigma\cos(\theta)) \, , \, 3\theta+\pi\big) \\
    &= \iota\big( \alpha \atan(x) - \beta +\sigma\cos(\theta) \, , \, 3\theta \big) \\
    &= \big(\iota \circ G_{-\beta,\sigma}\big)(x,\theta).
\end{align*}
Since
\[ S^- = \{(\beta,\sigma) : (-\beta,\sigma) \in S^+\}, \]
for every $(\beta,\sigma) \in S^-$ we can define $\mu_{\beta,\sigma}^-$ to be the pushforward of $\mu_{-\beta,\sigma}^+$ under $\iota$. Since $\iota$ is a conjugacy between $G_{\beta,\sigma}$ and $G_{-\beta,\sigma}$, and $\mu_{-\beta,\sigma}^+$ is a $G_{-\beta,\sigma}$-ergodic measure, we have that $\mu_{\beta,\sigma}^-$ is a $G_{\beta,\sigma}$-ergodic measure; and for all the other properties that we have established for $\mu_{\beta,\sigma}^+$, the analogous properties for $\mu_{\beta,\sigma}^-$ hold.

\subsection{Construction of $r_{\beta,\sigma}$} \label{sec:App_rconstr}

Rather than seeking to define $\cC_{\beta,\sigma}$ directly in terms of the basins of attraction of $\cA^\pm_{\beta,\sigma}$, we will first present a construction of the function $r_{\beta,\sigma} \colon [0,2\pi) \to \bbR$, and we will define $\cC_{\beta,\sigma}$ to be the graph of $r_{\beta,\sigma}$; with these definitions, we will prove all the required properties, including that when both $\cA^-_{\beta,\sigma}$ and $\cA^+_{\beta,\sigma}$ exist, $\cC_{\beta,\sigma}$ is the basin boundary of both $\cA^-_{\beta,\sigma}$ and $\cA^+_{\beta,\sigma}$.

Recall that the parameter region~$\rA$ is defined to be $S^+ \cap S^-$. For all $(\beta,\sigma,\theta) \in \rA \times [0,2\pi)$, since $\beta-\sigma \geq -\beta^*$ we have as in Sec.~\ref{sec:A1} that
\[ f_{\beta,\sigma,\theta}(x^*) \geq x^*; \]
and similarly, since $\beta+\sigma \leq \beta^*$, we have
\[ f_{\beta,\sigma,\theta}(-x^*) \leq f_{\beta+\sigma}(-x^*) \leq f_{\beta^*}(-x^*)=-x^*. \]
Combining these two facts, we have that
\[ [-x^*,x^*] \subset f_{\beta,\sigma,\theta}([-x^*,x^*]) \]
and so
\[ f_{\beta,\sigma,\theta}^{-1}([-x^*,x^*]) \subset [-x^*,x^*]. \]
Therefore, noting that for all $n \geq 0$,
\[ f_{\beta,\sigma,\theta,n+1}^{-1} = f_{\beta,\sigma,\theta,n}^{-1} \circ f_{\beta,\sigma,g^n(\theta)}^{-1} \]
on the range of $f_{\beta,\sigma,\theta,n+1}$, we have (by induction) that the sequence of sets $f_{\beta,\sigma,\theta,n}^{-1}([-x^*,x^*])$ is nested, with diameter bounded above by $\psi^n(2x^*)$. Since $a_{-\beta^*}^-$ is strictly less than $-x^*$, we have that $2x^*$ is strictly less than $x^*-a_{-\beta^*}^-$ and so $\psi^n(2x^*)$ tends to $0$ as $n \to \infty$. Hence, the intersection of the nested sequence of sets $f_{\beta,\sigma,\theta,n}^{-1}([-x^*,x^*])$ is a singleton $\{r_{\beta,\sigma}(\theta)\}$.

\subsection{Properties of $r_{\beta,\sigma}$ analogous to Secs.~\ref{sec:A1}--\ref{sec:App_LE}} \label{sec:App_prop_r}

For all $(\beta,\sigma) \in \rA$, let $\Psi_{\beta,\sigma} \colon [0,2\pi) \to \bbR \times [0,2\pi)$ be the map $\Psi_{\beta,\sigma}(\theta)=(r_{\beta,\sigma}(\theta),\theta)$, and let $\nu_{\beta,\sigma}$ be the pushforward of $\mathbb{P}$ under $\Psi_{\beta,\sigma}$.

Similarly to in Sec.~\ref{sec:App_cont}, we show that the map $(\beta,\sigma,\theta) \mapsto r_{\beta,\sigma}(\theta)$ is continuous on $\rA \times [0,2\pi)$. Fix $(\beta,\sigma,\theta) \in \rA \times [0,2\pi)$ and $\varepsilon>0$. Let $N$ be such that
\[ f_{\beta,\sigma,\theta,N}^{-1}([-x^*,x^*]) \subset B_\varepsilon\big(r_{\beta,\sigma}(\theta)\big), \]
i.e.
\[ [-x^*,x^*] \subset \Big( f_{\beta,\sigma,\theta,N}(r_{\beta,\sigma}(\theta)-\varepsilon) \, , \, f_{\beta,\sigma,\theta,N}(r_{\beta,\sigma}(\theta)+\varepsilon)\Big). \]
So we can find a neighbourhood $U  \subset \bbR \times [0,\infty) \times [0,2\pi)$ of $(\beta,\sigma,\theta)$ small enough that every $(\tilde{\beta},\tilde{\sigma},\tilde{\theta}) \in U$ has
\[ f_{\tilde{\beta},\tilde{\sigma},\tilde{\theta},N}^{-1}([-x^*,x^*]) \subset B_\varepsilon\big(r_{\beta,\sigma}(\theta)\big). \]
So then, every $(\tilde{\beta},\tilde{\sigma},\tilde{\theta})$ in the intersection of $\rA \times [0,2\pi)$ with $U$ has
\[ r_{\tilde{\beta},\tilde{\sigma}}(\tilde{\theta}) \in B_\varepsilon\big(r_{\beta,\sigma}(\theta)\big). \]
Thus we have shown that the map $(\beta,\sigma,\theta) \mapsto r_{\beta,\sigma}(\theta)$ is continuous; and it follows that the map $(\beta,\sigma,\theta) \mapsto \Psi_{\beta,\sigma}(\theta)$ is likewise continuous.

Similarly to in Sec.~\ref{sec:App_inv}, we show that $r_{\beta,\sigma}$ has the invariant-graph property: for all $(\beta,\sigma,\theta) \in \rA \times [0,2\pi)$,
\begin{align*}
    r_{\beta,\sigma}(g(\theta)) &= \lim_{n \to \infty} f_{\beta,\sigma,g(\theta),n}^{-1}(x^*) \\
    &= \lim_{n \to \infty} f_{\beta,\sigma,\theta}\Big( f_{\beta,\sigma,\theta,n+1}^{-1}(x^*) \Big) \\
    &= f_{\beta,\sigma,\theta}\!\left( \lim_{n \to \infty} f_{\beta,\sigma,\theta,n+1}^{-1}(x^*) \right) \qquad \textrm{since $f_{\beta,\sigma,\theta}$ is continuous} \\
    &= f_{\beta,\sigma,\theta}\big( r_{\beta,\sigma}(\theta) \big).
\end{align*}
Similarly to in Sec.~\ref{sec:App_ergcts}, it follows that $\Psi_{\beta,\sigma}$ is a morphism from $g$ to $G_{\beta,\sigma}$, and hence, since $\mathbb{P}$ is a $g$-ergodic invariant measure, we have that $\nu_{\beta,\sigma}$ is a $G_{\beta,\sigma}$-ergodic invariant measure. As in Sec.~\ref{sec:App_ergcts}, we have that: $\nu_{\beta,\sigma}$ depends continuously on $(\beta,\sigma)$ in the topology of weak convergence; the graph $\cC_{\beta,\sigma}:=\{(r_{\beta,\sigma}(\theta),\theta) : \theta \in [0,2\pi)\}$ of $r_{\beta,\sigma}$ is precisely the support of $\nu_{\beta,\sigma}$; and $\cC_{\beta,\sigma}$ has continuous dependence on $(\beta,\sigma)$ in Hausdorff distance.

Similarly to in Sec.~\ref{sec:App_thetaspan}, we show that the range of $r_{\beta,\sigma}$ is contained in $[r_{\beta+\sigma},r_{\beta-\sigma}]$. Since $(\beta,\sigma) \in \rA$, we have that both $\beta-\sigma$ and $\beta+\sigma$ lie in the interval $[-\beta^*,\beta^*]$, and so $r_{\beta+\sigma}$ and $r_{\beta-\sigma}$ both exist. Now note that for all $\beta \in [-\beta^*,\beta^*]$, since $x^* \in [r_\beta,a_\beta^+]$, we have
\[ r_\beta = \lim_{n \to \infty} (f_\beta^n)^{-1}(x^*). \]
Since (by Eq.~\eqref{eq:App_tan})
\[ f_{\beta+\sigma}^{-1}(y) \leq f_{\beta,\sigma,\theta}^{-1}(y) \leq f_{\beta-\sigma}^{-1}(y) \]
for all $\theta \in [0,2\pi)$ and $y \in [-x^*,x^*]$, we have that for all $\theta \in [0,2\pi)$,
\begin{align*}
    r_{\beta-\sigma} &= \lim_{n \to \infty} (f_{\beta-\sigma}^n)^{-1}(x^*) \geq \lim_{n \to \infty} f_{\beta,\sigma,\theta,n}^{-1}(x^*) = r_{\beta,\sigma}(\theta) \\
    r_{\beta+\sigma} &= \lim_{n \to \infty} (f_{\beta+\sigma}^n)^{-1}(x^*) \leq \lim_{n \to \infty} f_{\beta,\sigma,\theta,n}^{-1}(x^*) = r_{\beta,\sigma}(\theta).
\end{align*}
Hence, the range of $r_{\beta,\sigma}$ is contained in $[r_{\beta+\sigma},r_{\beta-\sigma}]$.

Next, similarly to in Sec.~\ref{sec:App_thetaspan}, we show that if $\sigma>0$ then $r_{\beta,\sigma}$ attains the value $r_{\beta-\sigma}$ uniquely at $\theta=\pi$, and attains the value $r_{\beta+\sigma}$ uniquely at $\theta=0$. Since $0$ and $\pi$ are fixed points of $g$, we have
\begin{align*}
    r_{\beta,\sigma}(0) &= \lim_{n \to \infty} f_{\beta,\sigma,0,n}^{-1}(x^*) = \lim_{n \to \infty} (f_{\beta+\sigma}^n)^{-1}(x^*) = r_{\beta+\sigma} \\
    r_{\beta,\sigma}(\pi) &= \lim_{n \to \infty} f_{\beta,\sigma,\pi,n}^{-1}(x^*) = \lim_{n \to \infty} (f_{\beta-\sigma}^n)^{-1}(x^*) = r_{\beta-\sigma}.
\end{align*}
Now assuming $\sigma>0$, every $\theta$ distinct from $\pi$ has $f_{\beta,\sigma,\theta}^{-1}$ being strictly less than $f_{\beta-\sigma}^{-1}$ on $[-x^*,x^*]$, and therefore
\[ r_{\beta,\sigma}(\theta) = f_{\beta,\sigma,\theta}^{-1}\Big( r_{\beta,\sigma}(g(\theta)) \Big) \leq f_{\beta,\sigma,\theta}^{-1}(r_{\beta-\sigma}) < f_{\beta-\sigma}^{-1}(r_{\beta-\sigma}) = r_{\beta-\sigma}. \]
A similar argument gives that every $\theta$ distinct from $0$ has $r_{\beta,\sigma}(\theta)>r_{\beta+\sigma}$.

Similarly to in Sec.~\ref{sec:App_LE}, we have
\[ \lambda_{\beta,\sigma}(\nu_{\beta,\sigma}) = \frac{1}{2\pi} \int_0^{2\pi} \ln f'(r_{\beta,\sigma}(\theta)) \, d\theta. \]
Every $\theta$ outside of $\{0,\pi\}$ has $r_{\beta,\sigma}(\theta)$ lying in the open interval $(r_{\beta+\sigma},r_{\beta-\sigma})$, and hence in the open interval $(-x^*,x^*)$, and so
\[ f'(r_{\beta,\sigma}(\theta)) > 1; \]
and so it follows that
\[ \lambda_{\beta,\sigma}(\nu_{\beta,\sigma}) > 0. \]
As in Sec.~\ref{sec:App_LE}, $\lambda_{\beta,\sigma}(\nu_{\beta,\sigma})$ has continuous dependence on $(\beta,\sigma)$.

\subsection{Intersections of $\cC_{\beta,\sigma}$ with $\cA^\pm_{\beta,\sigma}$} \label{sec:App_inters}

Drawing together the results in Sec.~\ref{sec:App_thetaspan} and the analogous results in Sec.~\ref{sec:App_prop_r}:

For $(\beta,\sigma)$ in the interior of $\rA$ relative to $\bbR \times [0,\infty)$, we have that
\begin{itemize}
    \item $\beta-\sigma>-\beta^*$ and so $r_{\beta-\sigma}<x^*<a^+_{\beta-\sigma}$;
    \item $\beta+\sigma<\beta^*$ and so $a^-_{\beta+\sigma}<-x^*<r_{\beta+\sigma}$;
\end{itemize}
and so
\begin{itemize}
    \item the $x$-projection of $\cA^+_{\beta,\sigma}$ (namely, $[a^+_{\beta-\sigma},a^+_{\beta+\sigma}]$) lies strictly to the right of $x^*$;
    \item the $x$-projection of $\cA^-_{\beta,\sigma}$ (namely, $[a^-_{\beta-\sigma},a^-_{\beta+\sigma}]$) lies strictly to the left of $-x^*$;
    \item the $x$-projection of $\cC_{\beta,\sigma}$ (namely, $[r_{\beta+\sigma},r_{\beta-\sigma}]$) lies strictly between $-x^*$ and $x^*$;
\end{itemize}
and hence $\cC_{\beta,\sigma}$ has empty intersection with $\cA^+_{\beta,\sigma}$ and with $\cA^-_{\beta,\sigma}$.

Now note that
\[ L^\pm = \{ (\beta,\sigma) \in \rA \, : \, \beta \pm \sigma = \pm \beta^* \}. \]
So then:
\begin{itemize}
    \item For $(\beta,\sigma) \in L^-$, we have $r_{\beta-\sigma}=a^+_{\beta-\sigma}=x^*$, and so both $\cA^+_{\beta,\sigma}$ and $\cC_{\beta,\sigma}$ include the point $(x^*,\pi)$. If, furthermore, $\sigma>0$, then this is the only point of intersection of $\cA^+_{\beta,\sigma}$ or $\cC_{\beta,\sigma}$ with $\{x^*\} \times [0,2\pi)$, and hence is the only point of intersection of $\cA^+_{\beta,\sigma}$ with $\cC_{\beta,\sigma}$.
    \item For $(\beta,\sigma) \in L^+$, we have $r_{\beta+\sigma}=a^-_{\beta+\sigma}=-x^*$, and so both $\cA^-_{\beta,\sigma}$ and $\cC_{\beta,\sigma}$ include the point $(-x^*,0)$. If, furthermore, $\sigma>0$, then this is the only point of intersection of $\cA^+_{\beta,\sigma}$ or $\cC_{\beta,\sigma}$ with $\{-x^*\} \times [0,2\pi)$, and hence is the only point of intersection of $\cA^-_{\beta,\sigma}$ with $\cC_{\beta,\sigma}$.
\end{itemize}

\subsection{Basin of attraction of $\cA^\pm_{\beta,\sigma}$ for $(\beta,\sigma) \in \rA$} \label{sec:App_BB}

\begin{prop} \label{prop:App_bb}
    For $(\beta,\sigma) \in \rA$,
    \begin{itemize}
        \item[(i)] every $(x,\theta)$ with $x>r_{\beta,\sigma}(\theta)$ is in the basin of $\cA^+_{\beta,\sigma}$;
        \item[(ii)] every $(x,\theta)$ with $x<r_{\beta,\sigma}(\theta)$ is in the basin of $\cA^-_{\beta,\sigma}$;
        \item[(iii)] hence $\cA^+_{\beta,\sigma}$ and $\cA^-_{\beta,\sigma}$ share the same basin boundary, namely $\cC_{\beta,\sigma}$.
    \end{itemize}
\end{prop}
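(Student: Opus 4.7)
The plan is to establish (i) directly from the monotonicity~\eqref{eq:monotonicity} of the $x$-dynamics together with the construction of $r_{\beta,\sigma}$ via nested preimages in Section~\ref{sec:App_rconstr} and the uniform attractivity of $\cA^+_{\beta,\sigma}$ on $[x^*,\infty)\times[0,2\pi)$ proved in Section~\ref{sec:App_attrA+}; then obtain (ii) by transporting (i) through the conjugacy $\iota$ used in Section~\ref{sec:App_mu-}; and finally deduce (iii) from (i)--(ii), the disjointness of $\cA^+_{\beta,\sigma}$ and $\cA^-_{\beta,\sigma}$, and the forward invariance of $\cC_{\beta,\sigma}$.

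For (i), fix $(x_0,\theta_0)$ with $x_0 > r_{\beta,\sigma}(\theta_0)$ and write $(x_n,\theta_n)=G_{\beta,\sigma}^n(x_0,\theta_0)$. Using the invariant-graph identity $f_{\beta,\sigma,\theta_n}(r_{\beta,\sigma}(\theta_n))=r_{\beta,\sigma}(\theta_{n+1})$ from Section~\ref{sec:App_prop_r} together with the strict monotonicity of each $f_{\beta,\sigma,\theta}$ in $x$, an induction yields $x_n > r_{\beta,\sigma}(\theta_n) \geq -x^*$ for every $n\geq 0$. I then argue that $x_n$ must eventually exceed $x^*$: the case $x_0\geq x^*$ is immediate, and for $x_0\in(r_{\beta,\sigma}(\theta_0),x^*)$ I invoke the fact from Section~\ref{sec:App_rconstr} that the nested preimage intervals $f_{\beta,\sigma,\theta_0,n}^{-1}([-x^*,x^*])\subset[-x^*,x^*]$ collapse to $\{r_{\beta,\sigma}(\theta_0)\}$, their diameters being bounded by $\psi^n(2x^*)\to 0$. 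Thus, for some $n$, $x_0$ lies outside this preimage, so $f_{\beta,\sigma,\theta_0,n}(x_0)\notin[-x^*,x^*]$; together with $x_n > r_{\beta,\sigma}(\theta_n)\geq -x^*$ this forces $x_n > x^*$. Applying the uniform attractivity of $\cA^+_{\beta,\sigma}$ at the iterate $(x_n,\theta_n)$ completes (i).

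Part (ii) follows at no extra cost by transporting (i) through the involution $\iota(x,\theta)=(-x,\theta+\pi)$ of Section~\ref{sec:App_mu-}: the relation $G_{\beta,\sigma}\circ\iota=\iota\circ G_{-\beta,\sigma}$ together with the uniqueness of the $r$-construction forces $r_{\beta,\sigma}(\theta)=-r_{-\beta,\sigma}(\theta+\pi)$, so $\iota$ sends $\cC_{-\beta,\sigma}$ to $\cC_{\beta,\sigma}$ and $\cA^+_{-\beta,\sigma}$ to $\cA^-_{\beta,\sigma}$, and the one-sided inequality $x>r$ becomes $x<r$. For (iii), by (i)--(ii) the basins $B^\pm$ of $\cA^\pm_{\beta,\sigma}$ contain the open sets $U^+=\{x>r_{\beta,\sigma}(\theta)\}$ and $U^-=\{x<r_{\beta,\sigma}(\theta)\}$ respectively; the $x$-projections of $\cA^+_{\beta,\sigma}$ and $\cA^-_{\beta,\sigma}$ computed in Section~\ref{sec:App_thetaspan} are separated, so the two basins are disjoint, forcing $B^\pm\subset U^\pm\cup\cC_{\beta,\sigma}$. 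Continuity of $r_{\beta,\sigma}$ gives $\overline{U^\pm}=U^\pm\cup\cC_{\beta,\sigma}$, whence $\partial B^+=\partial B^-=\cC_{\beta,\sigma}$.

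The main obstacle is the intermediate case $r_{\beta,\sigma}(\theta_0)<x_0<x^*$ in part~(i): monotonicity alone keeps the orbit above the graph but does not by itself force it to leave the interval $[-x^*,x^*]$ in finite time, and even granted that it does leave, one must then identify which boundary ($+x^*$ or $-x^*$) is crossed. The argument resolves this by combining the shrinking of the nested preimages (which gives eventual escape) with the one-sided bound $x_n>r_{\beta,\sigma}(\theta_n)\geq -x^*$ (which pins down the side).
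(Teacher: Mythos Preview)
Your proof is correct and follows essentially the same approach as the paper: for (i), use the shrinking of the nested preimage intervals $f_{\beta,\sigma,\theta,n}^{-1}([-x^*,x^*])$ to $\{r_{\beta,\sigma}(\theta)\}$ to force the orbit into $[x^*,\infty)\times[0,2\pi)$, then invoke Lemma~\ref{lemma:App_attr}. The paper's version of (i) is marginally more direct---it uses only the right endpoint $f_{\beta,\sigma,\theta,N}^{-1}(x^*)\searrow r_{\beta,\sigma}(\theta)$ to obtain $f_{\beta,\sigma,\theta,N}(x)\geq x^*$ immediately, without the auxiliary induction $x_n>r_{\beta,\sigma}(\theta_n)$ or the side-identification step---and handles (ii) by ``similarly'' rather than via the conjugacy $\iota$; but these are stylistic variations, not substantive differences.
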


\begin{proof}
Point~(iii) follows from points~(i) and (ii) due to the continuity of $r_{\beta,\sigma}$. We will prove (i); the proof of (ii) is similar.

Fix $(\beta,\sigma) \in \rA$, and fix $(x,\theta)$ with $x>r_{\beta,\sigma}(\theta)$. Take $N$ such that $f_{\beta,\sigma,\theta,N}^{-1}(x^*) \leq x$. Then $x^* \leq f_{\beta,\sigma,\theta,N}(x)$, i.e.\ $G_{\beta,\sigma}^N(x,\theta) \in [x^*,\infty) \times [0,2\pi)$, and so Lemma~\ref{lemma:App_attr} gives that $(x,\theta)$ is in the basin of $\cA^+_{\beta,\sigma}$.
\end{proof}

\subsection{Basin of attraction of $\cA^+_{\beta,\sigma}$ for $(\beta,\sigma) \in S^+ \setminus \rA$} \label{sec:App_Bfull}

By construction, the region
\[ S^+ \setminus \rA \, = \, S^+ \setminus S^- \, = \, \{(\beta,\sigma) \in \bbR \times [0,\infty) \, : \, \beta-\sigma \geq -\beta^* \ \textrm{ and } \ \beta+\sigma>\beta^* \} \]
partitions into region~$\rB^+$ and region~$\rC^+$. Now
\[ (\beta,\sigma) \not\in \rB^+ \quad \Longleftrightarrow \quad \beta-\sigma \leq \beta^*, \]
and so $\rC^+$ is the set of all members of $S^+ \setminus \rA$ that fulfil this, i.e.
\[ \rC^+ = \{(\beta,\sigma) \in \bbR \times [0,\infty) \, : \, -\beta^* \leq \beta-\sigma \leq \beta^* < \beta+\sigma\}. \]

\begin{prop} \label{prop:App_full}
~

    \begin{itemize}
        \item[(I)] For each $(\beta,\sigma) \in \rC^+$, the basin of $\cA^+_{\beta,\sigma}$ includes both the set $(r_{\beta-\sigma},\infty) \times [0,2\pi)$ and a set of the form $\bbR \times E$ for some $\mathbb{P}$-full measure set $E \subset [0,2\pi)$.
        \item[(II)] For each $(\beta,\sigma) \in \rB^+$, $\cA^+_{\beta,\sigma}$ is a global attractor.
    \end{itemize}
\end{prop}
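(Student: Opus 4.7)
The engine for all three items is the lemma of Sec.~\ref{sec:App_attrA+}: $[x^*,\infty) \times [0,2\pi)$ is uniformly attracted to $\cA^+_{\beta,\sigma}$. So in each case the plan is to route trajectories into this set, possibly only for $\mathbb{P}$-a.e.\ starting $\theta_0$. Parts~(II) and the first item of~(I) follow by monotonicity together with the pointwise bound $f_{\beta,\sigma,\theta} \geq f_{\beta-\sigma}$, which forces $x_n \geq f_{\beta-\sigma}^n(x_0)$. In~(II), $\beta-\sigma>\beta^*$ makes $a^+_{\beta-\sigma}>x^*$ the unique, globally attracting fixed point of $f_{\beta-\sigma}$; in the first item of~(I), $x_0 > r_{\beta-\sigma}$ puts $x_0$ strictly above the repelling fixed point of $f_{\beta-\sigma}$, so $f_{\beta-\sigma}^n(x_0) \to a^+_{\beta-\sigma} \geq x^*$. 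The degenerate boundary cases $\beta-\sigma=\pm\beta^*$, where fixed points collide, are handled directly by monotone convergence of the autonomous iteration.

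The substantive task is the second item of~(I). I would work with the lower bounding trajectory $\ell_n(\theta_0) := \inf_{x_0 \in \bbR} f_{\beta,\sigma,\theta_0,n}(x_0)$, which is finite for $n \geq 1$ and satisfies the recursion $\ell_{n+1}(\theta_0) = f_{\beta,\sigma,g^n\theta_0}(\ell_n(\theta_0))$. Since every finite $x_0$ gives $f_{\beta,\sigma,\theta_0,n}(x_0) > \ell_n(\theta_0)$ strictly, it suffices to show that for $\mathbb{P}$-a.e.\ $\theta_0$ there is some $m$ with $\ell_m(\theta_0) > x^*$; the engine lemma then delivers convergence of every trajectory at that $\theta_0$ to $\cA^+_{\beta,\sigma}$.

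The key input is that $\beta+\sigma > \beta^*$ produces $\eta>0$ and a closed circle-neighborhood $\bar V$ of $0 \in [0,2\pi)$ on which $\beta+\sigma\cos\theta \geq \beta^*+\eta$, and hence, by compactness and positivity, a uniform increment $\epsilon_0>0$ with $f_{\beta,\sigma,\theta}(x)-x \geq \epsilon_0$ on $\bar V \times [c_-,x^*]$ (where $c_- = \inf \rI_{\beta,\sigma}$). Pick $N$ with $N\epsilon_0 > x^*-c_-$. Because $0$ is a $g$-fixed point and $g$ expands by $3$, a sufficiently small circle-neighborhood $W$ of $0$ satisfies $g^k(W) \subset \bar V$ for $k=0,\ldots,N-1$ while retaining positive $\mathbb{P}$-measure. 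Birkhoff applied to $\mathbf{1}_W$ then gives, for $\mathbb{P}$-a.e.\ $\theta_0$, some $n_0 \geq 1$ with $g^{n_0}\theta_0,\ldots,g^{n_0+N-1}\theta_0 \in \bar V$; if $\ell_{n_0+k}(\theta_0) \leq x^*$ for every $k \leq N$, the uniform increment forces $\ell_{n_0+N}(\theta_0) \geq \ell_{n_0}(\theta_0) + N\epsilon_0 > x^*$, a contradiction, so some $\ell_m(\theta_0) > x^*$ as required. The main obstacle I anticipate is precisely this construction --- producing a positive-$\mathbb{P}$-measure set of initial $\theta_0$ whose $g$-orbit dwells in the pushing region $\bar V$ for the required number of consecutive steps, matched to a uniform expansion bound on the $f$-side. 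It goes through cleanly here because $g(0)=0$ and $\cos 0 = 1$ align the extremum of the forcing with a base fixed point; in more general geometries one would need a subtler recurrence argument, perhaps using longer-period periodic orbits of $g$ as focal points.
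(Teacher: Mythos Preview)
Your proposal is correct and follows essentially the same route as the paper: the engine lemma from Sec.~\ref{sec:App_attrA+}, the monotone lower bound $f_{\beta,\sigma,\theta}\geq f_{\beta-\sigma}$ for~(II) and the first half of~(I), and for the second half of~(I) the combination of $g(0)=0$, $\cos 0=1$, and ergodicity of $\mathbb{P}$. The only difference is packaging: where you build a one-step increment $\epsilon_0$ on $\bar V\times[c_-,x^*]$ and then arrange $N$ consecutive visits via $g^k(W)\subset\bar V$, the paper short-circuits this by applying continuity directly to the $N$-fold composition, observing that $f_{\beta,\sigma,0,N}(-\infty)=f_{\beta+\sigma}^N(-\infty)>x^*$ and hence $f_{\beta,\sigma,\theta,N}(-\infty)\geq x^*$ on a whole neighbourhood $U$ of $0$, so that a single visit to $U$ suffices.
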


\begin{proof}
(I) Fix $(\beta,\sigma) \in \rC^+$. Since $\beta-\sigma \in [-\beta^*,\beta^*]$, we have in particular that $r_{\beta-\sigma}$ exists. Now for any $(x,\theta)$ with $x>r_{\beta-\sigma}$:
\begin{itemize}
    \item If $\beta-\sigma=-\beta^*$, then $r_{\beta-\sigma}=x^*$, so $x>x^*$, and so Lemma~\ref{lemma:App_attr} gives that $(x,\theta)$ is in the basin of $\cA^+_{\beta,\sigma}$.
    \item If $\beta-\sigma>-\beta^*$ then
    \[ x^* < a^+_{\beta-\sigma} = \lim_{n \to \infty} f_{\beta-\sigma}^n(x) \leq \liminf_{n \to \infty} f_{\beta,\sigma,\theta,n}(-\infty), \]
    and so there are $n$ for which $f_{\beta,\sigma,\theta,n}(x)>x^*$; so Lemma~\ref{lemma:App_attr} gives that $(x,\theta)$ is in the basin of $\cA^+_{\beta,\sigma}$.
\end{itemize}
Thus we have shown that the basin of $\cA^+_{\beta,\sigma}$ includes $(r_{\beta-\sigma},\infty) \times [0,2\pi)$.

Now since $\beta+\sigma>\beta^*$ (and therefore, also, $\beta+\sigma>-\beta^*$), we have
\[ x^* < a^+_{\beta+\sigma} = \lim_{n \to \infty} f_{\beta+\sigma}^n(-\infty) = \lim_{n \to \infty} f_{\beta,\sigma,0,n}(-\infty). \]
Therefore, we can find a positive integer $N$ and a neighbourhood $U$ of $0$ (in the space of angles) such that for all $\theta \in U$, $f_{\beta,\sigma,\theta,N}(-\infty) \geq x^*$. So $G^N(\bbR \times U) \subset [x^*,\infty) \times [0,2\pi)$. Therefore, by Lemma~\ref{lemma:App_attr}, any initial condition whose trajectory under $G$ enters the set $\bbR \times U$ belongs to the basin of $\cA^+_{\beta,\sigma}$. In other words, letting $E:=\bigcup_{n=0}^\infty g^{-n}(U)$, we have that $\bbR \times E$ is contained in the basin of $\cA^+_{\beta,\sigma}$. And since $\mathbb{P}$ is $g$-ergodic and $\mathbb{P}(U)>0$, we have $\mathbb{P}(E)=1$.

(II) Fix $(\beta,\sigma) \in \rB^+$. Since $\beta-\sigma>\beta^*$, we have
\[ \lim_{n \to \infty} f_{\beta-\sigma}^n(-\infty) = a_{\beta-\sigma}>x^* \]
and so we can find a positive integer $N$ such that for all $\theta \in [0,2\pi)$, $f_{\beta,\sigma,\theta,N}(-\infty) \geq f_{\beta-\sigma}^N(-\infty) \geq x^*$. So $G^N(\bbR \times [0,2\pi)) \subset [x^*,\infty) \times [0,2\pi)$, and hence Lemma~\ref{lemma:App_attr} gives that $\cA^+_{\beta,\sigma}$ is a global attractor.
\end{proof}

The analogous results for the basin of $\cA^-_{\beta,\sigma}$ with $(\beta,\sigma) \in S^- \setminus \rA$ also hold.

\subsection{Basin of attraction of $\mu^\pm_{\beta,\sigma}$} \label{sec:App_basinmu}

Fix $(\beta,\sigma) \in S^+$.

\begin{lemma} \label{lemma:App_contr}
    For every $(x,(\theta_i)_i) \in \bbR \times \Theta$ with $(x,\theta_0)$ in the basin of $\cA^+_{\beta,\sigma}$, we have that
    \[ \left| f_{\beta,\sigma,\theta_0,n}(x) - a^+_{\beta,\sigma}((\theta_{i+n})_i) \right| \to 0 \quad \textrm{as } n \to \infty. \]
\end{lemma}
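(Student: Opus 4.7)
Write $x_n := f_{\beta,\sigma,\theta_0,n}(x)$ and $y_n := a^+_{\beta,\sigma}((\theta_{i+n})_i)$. Iterating the invariant-graph property established in Sec.~\ref{sec:App_inv} gives $y_n = f_{\beta,\sigma,\theta_0,n}(y_0)$, with $y_0 \in [a^+_{\beta-\sigma}, a^+_{\beta+\sigma}] \subset [x^*, \infty)$ by the inclusion from Sec.~\ref{sec:App_thetaspan}. The plan is thus to establish the convergence $|f_{\beta,\sigma,\theta_0,n}(x) - f_{\beta,\sigma,\theta_0,n}(y_0)| \to 0$ by reducing to the uniform contraction on the forward-invariant half-line $[x^*,\infty)$ that is implicit in Sec.~\ref{sec:A1}.

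First I would establish that $x_N \geq x^*$ for some $N \geq 0$. Once this is in hand, both $x_n$ and $y_n$ lie in $f_{\beta,\sigma,\theta_N, n-N}([x^*, \infty))$ for all $n \geq N$ (by forward invariance of $[x^*,\infty)$ under each $f_{\beta,\sigma,\theta}$), and the inductive diameter estimate from Sec.~\ref{sec:A1} bounds this diameter by $\varphi^{n-N}(\infty)$, which tends to $0$; the conclusion follows at once.

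The nontrivial content is therefore to rule out the possibility that $x_n < x^*$ for every $n$. Under this assumption, the basin hypothesis together with the inclusion $\cA^+_{\beta,\sigma} \subset [x^*,\infty) \times [0,2\pi)$ forces $\delta_n := x^* - x_n \to 0$. Since $f' \geq 1$ on $[-x^*,x^*]$, for $n$ with $\delta_n \leq 2x^*$ I would derive
\[
x_{n+1} \;=\; f_{\beta,\sigma,\theta_n}(x^* - \delta_n) \;\leq\; f_{\beta,\sigma,\theta_n}(x^*) - \delta_n \;=\; x^* + h_n - \delta_n,
\]
where $h_n := f_{\beta,\sigma,\theta_n}(x^*) - x^* = \beta^* + \beta + \sigma\cos\theta_n$ is non-negative since $(\beta,\sigma) \in S^+$. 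The standing assumption $x_{n+1} < x^*$ then forces $h_n < \delta_n$ for all sufficiently large $n$. In the interior case $\beta - \sigma > -\beta^*$, one has $h_n \geq \beta - \sigma + \beta^* > 0$ uniformly in $n$, directly contradicting $\delta_n \to 0$. In the boundary case $\beta - \sigma = -\beta^*$, one has $h_n = \sigma(1 + \cos\theta_n)$, and $h_n \to 0$ forces $\theta_n \to \pi$; but $g$ acts locally near $\pi$ as $\pi + \varepsilon \mapsto \pi + 3\varepsilon$, so the only orbits approaching $\pi$ are those eventually landing at $\pi$, after which the recursion reduces to the one-dimensional map $x_{n+1} = f_{-\beta^*}(x_n)$, which sends any $x_N < x^*$ to $a^-_{-\beta^*} < -x^*$ rather than to $x^*$, again contradicting $\delta_n \to 0$. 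The degenerate subcase $\sigma=0$, $\beta=-\beta^*$ is handled identically (there $h_n \equiv 0$, so $\delta_{n+1} \geq \delta_n$ immediately contradicts $\delta_n \to 0$).

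The main obstacle is the boundary case $\beta - \sigma = -\beta^*$ with $\sigma > 0$: there $h_n$ vanishes on the fibre $\theta = \pi$, and ruling out an approach $\theta_n \to \pi$ accompanied by $x_n \uparrow x^*$ requires combining the expansivity of $g$ at the repelling fixed point $\pi$ with the monotone one-dimensional dynamics of $f_{-\beta^*}$ on $(-\infty,x^*)$.
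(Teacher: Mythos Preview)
Your reduction of the first case ($x_N \geq x^*$ for some $N$) via the diameter bound $\varphi^{n-N}(\infty)$ is correct and matches the paper exactly.

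In the second case, however, the displayed inequality goes the wrong way for the conclusion you draw from it. From $f' \geq 1$ on $[-x^*,x^*]$ you correctly obtain the \emph{upper} bound $x_{n+1} \leq x^* + h_n - \delta_n$, but combining this with $x_{n+1} < x^*$ does \emph{not} force $h_n < \delta_n$: both inequalities are perfectly compatible with $h_n \geq \delta_n$. To pin down $h_n$ you need a \emph{lower} bound on $x_{n+1}$, for instance $x_{n+1} \geq x^* + h_n - \alpha\,\delta_n$ from $f' \leq \alpha$; then $x_{n+1}<x^*$ gives $h_n < \alpha\,\delta_n$, and since $\delta_n \to 0$ the rest of your interior and boundary arguments go through. (Your $\sigma=0$ subcase is fine as written, because there you only use $\delta_{n+1} \geq \delta_n$, which does follow from the upper bound.)

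By contrast, the paper does not try to rule out the case $x_n < x^*$ for all $n$; it proves the conclusion directly. From the basin hypothesis it deduces $\beta-\sigma=-\beta^*$, $x_n \to x^*$, and (when $\sigma>0$) $\theta_n \to \pi$; then continuity of $a^+_{\beta,\sigma}$ on $\Theta$ yields $y_n = a^+_{\beta,\sigma}((\theta_{i+n})_i) \to a^+_{\beta,\sigma}((\pi)_i) = a^+_{\beta-\sigma} = x^*$, and $|x_n - y_n| \to 0$ follows. This avoids any analysis of the recursion for $\delta_n$ and any appeal to the expansivity of $g$ at $\pi$.
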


\begin{proof}
Fix $(x,(\theta_i)_i) \in \bbR \times \Theta$ with $(x,\theta_0)$ in the basin of $\cA^+_{\beta,\sigma}$.

First suppose there exists $N \geq 0$ such that $f_{\beta,\sigma,\theta_0,N}(x) \geq x^*$; then since the range of $a^+_{\beta,\sigma}$ is also contained in $[x^*,\infty)$, it follows that for all $n \geq N$,
\begin{align*}
    \left| f_{\beta,\sigma,\theta_0,n}(x) - a^+_{\beta,\sigma}((\theta_{i+n})_i) \right| &= \left| f_{\beta,\sigma,\theta_N,n-N}\Big(f_{\beta,\sigma,\theta_0,N}(x)\Big) - f_{\beta,\sigma,\theta_N,n-N}\Big(a^+_{\beta,\sigma}((\theta_{i+N})_i)\Big) \right| \\
    &\leq \mathrm{diam}(f_{\beta,\sigma,\theta_N,n-N}([x^*,\infty])) \\
    &\leq \varphi^{n-N}(\infty).
\end{align*}
Hence, since $\varphi^{n-N}(\infty) \to 0$ as $n \to \infty$, we have the desired conclusion.

Now suppose instead that $f_{\beta,\sigma,\theta_0,n}(x) < x^*$ for all $n \geq 0$. As in Sec.~\ref{sec:App_inters}, if it were the case that $\beta-\sigma>-\beta^*$, then the $x$-projection of $\cA^+_{\beta,\sigma}$ would lie strictly to the right of $x^*$. So since $(x,\theta_0)$ in the basin of $\cA^+_{\beta,\sigma}$, we must have that $\beta-\sigma=-\beta^*$ (i.e.\ $a^+_{\beta-\sigma}=x^*$) and $\lim_{n \to \infty} f_{\beta,\sigma,\theta_0,n}(x) = x^*$. If $\sigma=0$ then on the whole of $\Theta$ we have
\[ a^+_{\beta,\sigma} = a^+_{-\beta^*,0} = x^* \]
and so the fact that $\lim_{n \to \infty} f_{\beta,\sigma,\theta_0,n}(x) = x^*$ is precisely the required result. So now assume that $\sigma>0$. So the only point in $\cA^+_{\beta,\sigma}$ whose $x$-coordinate is $x^*$ is $(x^*,\pi)$. So since $(x,\theta_0)$ in the basin of $\cA^+_{\beta,\sigma}$, we have that $\theta_n \to \pi$ as $n \to \infty$. Hence, due to the continuity of $a^+_{\beta,\sigma}$,
\[ \lim_{n \to \infty} a^+_{\beta,\sigma}((\theta_{i+n})_i) = a^+_{\beta,\sigma}((\pi)_i) = a^+_{\beta-\sigma} = x^*. \]
Combining this with the fact that $\lim_{n \to \infty} f_{\beta,\sigma,\theta_0,n}(x) = x^*$ gives the required result.
\end{proof}

\begin{prop} \label{prop:App_phys}
    There is a $\mathbb{P}$-full measure set $R \subset [0,2\pi)$ such that every initial condition $(x,\theta)$ in the basin of $\cA^+_{\beta,\sigma}$ with $\theta \in R$ is in the basin of $\mu^+_{\beta,\sigma}$.
\end{prop}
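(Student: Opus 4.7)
The plan is to transport Birkhoff convergence from the ergodic measure $\mu^+_{\beta,\sigma}$ to arbitrary basin points via the shadowing estimate of Lemma~\ref{lemma:App_contr}. The crucial point is that any two forward trajectories sharing their $\theta$-coordinate, one of which lies in the basin of $\cA^+_{\beta,\sigma}$, become asymptotic in $\bbR \times [0,2\pi)$, hence must realise the same weak limit of empirical measures.

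Concretely, I would first apply Birkhoff's theorem to the ergodic system $(G_{\beta,\sigma},\mu^+_{\beta,\sigma})$ (ergodicity was established in Sec.~\ref{sec:App_ergcts}) to obtain a Borel set $B \subset \cA^+_{\beta,\sigma}$ with $\mu^+_{\beta,\sigma}(B)=1$ on which the empirical measure converges weakly to $\mu^+_{\beta,\sigma}$. Since the $\theta$-marginal of $\mu^+_{\beta,\sigma}$ is $\mathbb{P}$ (Sec.~\ref{sec:A1}), a disintegration of $\mu^+_{\beta,\sigma}$ over this marginal together with Fubini yields a set $R \subset [0,2\pi)$ with $\mathbb{P}(R)=1$ such that for each $\theta \in R$ the fibre $B \cap (\bbR \times \{\theta\})$ is non-empty; by inner regularity I may take $R$ Borel.

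Next, fix $\theta \in R$, choose a reference point $(x^{\ast},\theta) \in B$, and let $(x,\theta)$ be any point in the basin of $\cA^+_{\beta,\sigma}$. Since $\cA^+_{\beta,\sigma}$ is the image of $\Theta$ under $\Pi^+_{\beta,\sigma}$ (Sec.~\ref{sec:App_ergcts}), there exists $\boldsymbol{\theta} \in \Theta$ whose $0$-coordinate is $\theta$ and with $a^+_{\beta,\sigma}(\boldsymbol{\theta})=x^{\ast}$. Lemma~\ref{lemma:App_contr} applied to this $\boldsymbol{\theta}$ gives
\begin{equation*}
\bigl| f_{\beta,\sigma,\theta,n}(x) - a^+_{\beta,\sigma}(\tilde{g}^n(\boldsymbol{\theta})) \bigr| \longrightarrow 0.
\end{equation*}
Thus the iterates $G^n_{\beta,\sigma}(x,\theta)$ and $G^n_{\beta,\sigma}(x^{\ast},\theta)$ have the same $\theta$-component $g^n(\theta)$ and their $x$-components become arbitrarily close; by Eq.~\eqref{eq:boundedness} both eventually lie in the compact strip $\overline{\rI_{\beta,\sigma}} \times [0,2\pi)$. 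Uniform continuity of any bounded continuous test function $\varphi$ on this strip then forces the difference of the two Cesàro averages of $\varphi$ along the two trajectories to tend to zero. Since Birkhoff at $(x^{\ast},\theta)\in B$ makes the $(x^{\ast},\theta)$-average tend to $\int \varphi\,d\mu^+_{\beta,\sigma}$, the same is true for the $(x,\theta)$-average. Hence $(x,\theta)$ lies in the basin of $\mu^+_{\beta,\sigma}$.

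The only mildly delicate step is the measurable-selection used to extract $R$ from $B$; this is routine disintegration on a Polish space, and in any case can be bypassed by instead applying Birkhoff upstairs to $(\tilde{g},\tilde{\mathbb{P}})$ on $\Theta$ and pushing the full-measure Birkhoff set forward through the continuous semi-conjugacy $\Pi^+_{\beta,\sigma}$. Everything else is a direct consequence of the ergodicity of $\mu^+_{\beta,\sigma}$, the shadowing bound of Lemma~\ref{lemma:App_contr}, and the global boundedness of $f$ captured by Eq.~\eqref{eq:boundedness}.
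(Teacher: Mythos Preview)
Your proposal is correct and follows essentially the same strategy as the paper: use Birkhoff for the ergodic measure $\mu^+_{\beta,\sigma}$ to obtain a full-measure reference set, then transfer convergence of empirical measures to arbitrary basin points with the same $\theta$-coordinate via the shadowing estimate of Lemma~\ref{lemma:App_contr} and uniform continuity on the compact strip from Eq.~\eqref{eq:boundedness}.

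The only organisational difference is the level at which Birkhoff is invoked. You apply it downstairs to $(G_{\beta,\sigma},\mu^+_{\beta,\sigma})$ on $\bbR\times[0,2\pi)$ and then disintegrate over the $\theta$-marginal to extract $R$; the paper instead lifts to the skew product $\tilde{G}_{\beta,\sigma}$ on $\bbR\times\Theta$ with the graph measure $\tilde{\mu}^+_{\beta,\sigma}$, applies Birkhoff there, and projects the resulting $\tilde{\mathbb{P}}$-full set $\Theta_1\subset\Theta$ to $R$ via the $\theta_0$-coordinate. The paper's route sidesteps the disintegration/selection issue entirely (since $\tilde{\Pi}^+_{\beta,\sigma}$ is a graph embedding over $\Theta$), which is precisely the bypass you sketch in your final paragraph. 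Either way the substantive content---ergodicity from Sec.~\ref{sec:App_ergcts}, the asymptotic tracking from Lemma~\ref{lemma:App_contr}, and the Ces\`aro comparison---is identical. One small remark: you do not actually need a \emph{measurable} selection $\theta\mapsto x^\ast(\theta)$, only the existence of some $(x^\ast,\theta)\in B$ for each $\theta\in R$, which the disintegration already provides.
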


\begin{proof}
Define
\begin{align*}
    \tilde{G}_{\beta,\sigma} \colon \bbR \times \Theta &\to \bbR \times \Theta \\
    \tilde{G}_{\beta,\sigma}(x,(\theta_i)_i) &= (f_{\beta,\sigma,\theta_0}(x),(\theta_{i+1})_i)
\end{align*}
and
\begin{align*}
    \tilde{\Pi}^+_{\beta,\sigma} \colon \Theta &\to \bbR \times \Theta \\
    \tilde{\Pi}^+_{\beta,\sigma}(\boldsymbol{\theta}) &= (a^+_{\beta,\sigma}(\boldsymbol{\theta}),\boldsymbol{\theta}).
\end{align*}
Let $\tilde{\mu}^+_{\beta,\sigma}$ be the pushforward of $\tilde{\mathbb{P}}$ under $\tilde{\Pi}^+_{\beta,\sigma}$. By the same argument as in Sec.~\ref{sec:App_ergcts}, we have that $\tilde{\Pi}^+_{\beta,\sigma}$ is a morphism from $\tilde{g}$ to $\tilde{G}_{\beta,\sigma}$ and therefore $\tilde{\mu}^+_{\beta,\sigma}$ is a $\tilde{G}_{\beta,\sigma}$-ergodic invariant measure. So $\tilde{\mu}^+_{\beta,\sigma}$-almost every point is in the basin of $\tilde{\mu}^+_{\beta,\sigma}$ under $\tilde{G}_{\beta,\sigma}$. Thus, in other words, for $\tilde{\mathbb{P}}$-almost every $\boldsymbol{\theta}$, $\tilde{\Pi}^+_{\beta,\sigma}(\boldsymbol{\theta})$ is in the basin of $\tilde{\mu}^+_{\beta,\sigma}$ under $\tilde{G}_{\beta,\sigma}$. So let $\Theta_1 \subset \Theta$ be a $\tilde{\mathbb{P}}$-full measure set such that for every $\boldsymbol{\theta} \in \Theta_1$, $\tilde{\Pi}^+_{\beta,\sigma}(\boldsymbol{\theta})$ is in the basin of $\tilde{\mu}^+_{\beta,\sigma}$ under $\tilde{G}_{\beta,\sigma}$. Let $R \subset [0,2\pi)$ be the image of $\Theta_1$ under the projection $(\theta_i)_i \mapsto \theta_0$. So $R$ is a $\mathbb{P}$-full measure subset of $[0,2\pi)$.

Fix $(x,\theta)$ in the basin of $\cA^+_{\beta,\sigma}$ with $\theta \in R$, and fix a bounded uniformly continuous function $h \colon \bbR \times [0,2\pi) \to \bbR$. Take $(\theta_i)_i \in \Theta_1$ with $\theta_0=\theta$. Write $\tilde{h}(x',(\theta_i')_i)=h(x',\theta_0')$ for all $(x',(\theta_i')_i) \in \bbR \times \Theta$. Then for all $N \geq 1$,
\begin{align*}
    d_N :=& \ \int_{\bbR \times [0,2\pi)} h \, d\mu^+_{\beta,\sigma} - \frac{1}{N} \sum_{n=0}^{N-1} h(G_{\beta,\sigma}^n(x,\theta)) \\
    =& \ \int_{\bbR \times \Theta} \tilde{h} \, d\tilde{\mu}^+_{\beta,\sigma} - \frac{1}{N} \sum_{n=0}^{N-1} h(f_{\beta,\sigma,\theta_0,n}(x),\theta_n) \\
    =& \ \Bigg( \int_{\bbR \times \Theta} \tilde{h} \, d\tilde{\mu}^+_{\beta,\sigma} - \frac{1}{N} \sum_{n=0}^{N-1} h(a^+_{\beta,\sigma}((\theta_{i+n})_i),\theta_n) \Bigg) \\ &\hspace{20mm} + \frac{1}{N} \sum_{n=0}^{N-1}\bigg( h(f_{\beta,\sigma,\theta_0,n}(x),\theta_n) - h(a^+_{\beta,\sigma}((\theta_{i+n})_i),\theta_n) \bigg) \\
    =& \ \Bigg( \underbrace{\int_{\bbR \times \Theta} \tilde{h} \, d\tilde{\mu}^+_{\beta,\sigma} - \frac{1}{N} \sum_{n=0}^{N-1} \tilde{h}(\tilde{G}_{\beta,\sigma}^n(\tilde{\Pi}^+_{\beta,\sigma}((\theta_i)_i)))}_{A_N} \Bigg) \\ &\hspace{20mm} + \frac{1}{N} \! \sum_{n=0}^{N-1}\bigg( \! \underbrace{h(f_{\beta,\sigma,\theta_0,n}(x),\theta_n) - h(a^+_{\beta,\sigma}((\theta_{i+n})_i),\theta_n)}_{B_n} \bigg).
\end{align*}
Since $\tilde{\Pi}^+_{\beta,\sigma}((\theta_i)_i))$ is in the basin of $\tilde{\mu}^+_{\beta,\sigma}$ under $\tilde{G}_{\beta,\sigma}$ and $\tilde{h}$ is bounded and continuous, we have that $A_N \to 0$ as $N \to \infty$. Since $h$ is uniformly continuous, Lemma~\ref{lemma:App_contr} gives that $B_n \to 0$ as $n \to \infty$. Thus $d_N \to 0$ as $N \to \infty$, i.e.\ $(x,\theta)$ is in the basin of $\mu^+_{\beta,\sigma}$.
\end{proof}

By Lemma~\ref{lemma:App_attr}, the basin of $\cA^+_{\beta,\sigma}$ includes $[x^*,\infty) \times [0,2\pi)$; hence it follows from Proposition~\ref{prop:App_phys} that $\mu^+_{\beta,\sigma}$ is a physical measure and is the only ergodic invariant measure with support contained in $[x^*,\infty) \times [0,2\pi)$ whose $\theta$-marginal is $\mathbb{P}$. Moreover, if $(\beta,\sigma) \in S^+ \setminus \rA$, then it follows from Proposition~\ref{prop:App_phys} and Proposition~\ref{prop:App_full}(I) that $\mu^+_{\beta,\sigma}$ is the only ergodic invariant measure whose $\theta$-marginal is $\mathbb{P}$.

(Each of the above two statements about uniqueness of an ergodic invariant measure then translates into a statement about uniqueness of an invariant measure by Lemma~\ref{lemma:App_ergdec}.)

The analogous results for $\mu^-_{\beta,\sigma}$ holds.

\subsection{The $x$-projection of invariant-measure supports in $\rD$} \label{sec:App_Dinv}

First observe that $\rD$ is contained in the complement of $\rB^- \cup \rB^+$.

Given any $(\beta,\sigma)$, a compact set $S \subset \bbR \times [0,2\pi)$ is said to be \emph{positively invariant} under $G_{\beta,\sigma}$ if $G_{\beta,\sigma}(S)$ is a subset of $S$.

\begin{prop} \label{prop:App_big_ga}
    For each $(\beta,\sigma)$ outside of $\rB^- \cup \rB^+$, the set $[a^-_{\beta-\sigma},a^+_{\beta+\sigma}] \times [0,2\pi)$ is positively invariant and for every initial condition $(x,\theta) \in \bbR \times [0,2\pi)$, the distance of $G_{\beta,\sigma}^n(x,\theta)$ from $[a^-_{\beta-\sigma},a^+_{\beta+\sigma}] \times [0,2\pi)$ tends to $0$ as $n \to \infty$.
\end{prop}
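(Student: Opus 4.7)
The plan is to split the proposition into its two assertions: positive invariance of $R_{\beta,\sigma} := [a^-_{\beta-\sigma}, a^+_{\beta+\sigma}] \times [0,2\pi)$, and attraction into $R_{\beta,\sigma}$ from any initial condition. Both will rest on the same monotonicity sandwich. First observe that the hypothesis $(\beta,\sigma) \not\in \rB^- \cup \rB^+$ unpacks as $\beta+\sigma \geq -\beta^*$ and $\beta-\sigma \leq \beta^*$, which are exactly the conditions under which $a^+_{\beta+\sigma}$ and $a^-_{\beta-\sigma}$ are defined as fixed points of $f_{\beta+\sigma}$ and $f_{\beta-\sigma}$ respectively. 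Throughout, I would exploit the pointwise sandwich
\[ f_{\beta-\sigma}(x) \;\leq\; f_{\beta,\sigma,\theta}(x) \;\leq\; f_{\beta+\sigma}(x) \qquad \forall \, x \in \bbR, \, \theta \in [0,2\pi), \]
which is immediate from the monotonicity of $f_\gamma$ in the parameter $\gamma$ combined with $\cos\theta \in [-1,1]$.

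Positive invariance of $R_{\beta,\sigma}$ is then essentially a one-line calculation: for $x \in [a^-_{\beta-\sigma}, a^+_{\beta+\sigma}]$ and any $\theta$, the sandwich together with the monotonicity in $x$ of each $f_{\beta \pm \sigma}$ and the fact that the endpoints are fixed points gives
\[ a^-_{\beta-\sigma} = f_{\beta-\sigma}(a^-_{\beta-\sigma}) \leq f_{\beta-\sigma}(x) \leq f_{\beta,\sigma,\theta}(x) \leq f_{\beta+\sigma}(x) \leq f_{\beta+\sigma}(a^+_{\beta+\sigma}) = a^+_{\beta+\sigma}. \]

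For the attraction, fix any $(x_0, \theta_0)$, let $x_n$ denote the $x$-component of $G^n_{\beta,\sigma}(x_0, \theta_0)$, and iterate the sandwich. Using monotonicity of each $f_\gamma$ in $x$, a straightforward induction gives the bounding-orbit inequality
\[ z_n := f^n_{\beta-\sigma}(x_0) \;\leq\; x_n \;\leq\; f^n_{\beta+\sigma}(x_0) =: y_n \qquad \forall \, n \geq 0, \]
which is exactly the principle already noted in Eq.~\eqref{eq:attracting}. The sequence $(y_n)$ is the forward orbit of the continuous, monotone one-dimensional map $f_{\beta+\sigma}$, whose image is a bounded interval; hence $y_{n+1}-y_n$ has constant sign, so $(y_n)$ is monotone and bounded, and therefore converges to a limit $L$ which by continuity is a fixed point of $f_{\beta+\sigma}$. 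Under the hypothesis $\beta+\sigma \geq -\beta^*$, the ordering $a^- \leq r \leq a^+$ of the (possibly coinciding) fixed points of $f_{\beta+\sigma}$ from Sec.~\ref{sec:Toy} gives $L \leq a^+_{\beta+\sigma}$, so $\limsup_n x_n \leq a^+_{\beta+\sigma}$. The symmetric argument for $(z_n)$ yields $\liminf_n x_n \geq a^-_{\beta-\sigma}$; together these give that the distance of $G^n_{\beta,\sigma}(x_0,\theta_0)$ from $R_{\beta,\sigma}$ tends to $0$.

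The only mild subtlety is the claim that every fixed point of $f_{\beta+\sigma}$ is bounded above by $a^+_{\beta+\sigma}$ (and symmetrically for $f_{\beta-\sigma}$ and $a^-_{\beta-\sigma}$). This reduces to a brief case split on whether $\beta \pm \sigma$ lies inside or outside $[-\beta^*,\beta^*]$: in the monostable regime the relevant endpoint is the unique fixed point, and in the bistable regime the ordering $a^- \leq r \leq a^+$ makes the bound automatic. I do not anticipate any serious obstacle — the argument is genuine one-dimensional monotone-map analysis lifted fibrewise onto the skew product.
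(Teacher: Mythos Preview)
Your proposal is correct and follows essentially the same monotonicity-sandwich argument as the paper. The only cosmetic difference is that the paper bounds $x_n$ between $f_{\beta-\sigma}^n(-\infty)$ and $f_{\beta+\sigma}^n(\infty)$ rather than between $f_{\beta-\sigma}^n(x_0)$ and $f_{\beta+\sigma}^n(x_0)$; starting the bounding orbits at $\pm\infty$ forces their limits to be exactly $a^-_{\beta-\sigma}$ and $a^+_{\beta+\sigma}$, which sidesteps the small case split you flag at the end.
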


\begin{proof}
Since $(\beta,\sigma) \not\in \rB^-$, we have that $\beta+\sigma \geq -\beta^*$, and so $a^+_{\beta+\sigma}$ exists and for all $\theta \in [0,2\pi)$,
\begin{equation} \label{eq:-bpm+}
\limsup_{n \to \infty} f_{\beta,\sigma,\theta,n}(\infty) \leq \lim_{n \to \infty} f_{\beta+\sigma}^n(\infty) = a^+_{\beta+\sigma}.
\end{equation}
Similarly, since $(\beta,\sigma) \not\in \rB^+$, we have that $\beta-\sigma \leq \beta^*$, and so $a^-_{\beta-\sigma}$ exists and for all $\theta \in [0,2\pi)$,
\begin{equation} \label{eq:-bpm-}
\liminf_{n \to \infty} f_{\beta,\sigma,\theta,n}(-\infty) \geq \lim_{n \to \infty} f_{\beta-\sigma}^n(-\infty) = a^-_{\beta-\sigma}.
\end{equation}
Due to the monotonicity of the maps $f_{\beta,\sigma,\theta,n}$, combining Eqs.~\eqref{eq:-bpm+} and \eqref{eq:-bpm-} gives the desired result.
\end{proof}

As a consequence of Proposition~\ref{prop:App_big_ga}, we have that for all $(\beta,\sigma)$ outside of $\rB^- \cup \rB^+$, the support of every $G_{\beta,\sigma}$-invariant probability measure is contained in $[a^-_{\beta-\sigma},a^+_{\beta+\sigma}] \times [0,2\pi)$. Moreover, since $[a^-_{\beta-\sigma},a^+_{\beta+\sigma}] \times [0,2\pi)$ is positively invariant and $[a^-_{\beta-\sigma},a^+_{\beta+\sigma}]$ is compact, the random Krylov-Bogolyubov theorem~\cite[Corollary~6.13]{Crauel_RPM} together with Lemma~\ref{lemma:App_ergdec} gives that there exists at least one $G_{\beta,\sigma}$-ergodic invariant measure with $\theta$-marginal $\mathbb{P}$.

\begin{prop} \label{prop:App_endpoints}
    For $(\beta,\sigma) \in \rD$, the support of every $G_{\beta,\sigma}$-invariant measure $\mu$ with $\theta$-marginal $\mathbb{P}$ includes the points $(a^-_{\beta-\sigma},\pi)$ and $(a^+_{\beta+\sigma},0)$.
\end{prop}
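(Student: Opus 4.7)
The plan is to leverage the fact that $0$ and $\pi$ are both fixed points of the base map $g$, so that the fibers $\bbR \times \{0\}$ and $\bbR \times \{\pi\}$ are invariant under $G_{\beta,\sigma}$ and the fiber dynamics reduce respectively to the autonomous maps $f_{\beta+\sigma}$ (since $\cos 0 = 1$) and $f_{\beta-\sigma}$ (since $\cos \pi = -1$). For $(\beta,\sigma) \in \rD$ we have $\beta+\sigma > \beta^\ast$ and $\beta-\sigma < -\beta^\ast$, so these autonomous maps are strictly monotonic with unique, globally attracting fixed points $a^+_{\beta+\sigma}$ and $a^-_{\beta-\sigma}$ respectively.

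First I would record that $S := \mathrm{supp}(\mu)$ is compact --- by closedness together with Proposition~\ref{prop:App_big_ga}, which forces $S \subset [a^-_{\beta-\sigma},a^+_{\beta+\sigma}] \times [0,2\pi)$ --- and that $S$ is forward-invariant under $G_{\beta,\sigma}$, by the standard observation that $G_{\beta,\sigma}^{-1}(S^c)$ is an open set of $\mu$-measure zero and therefore contained in $S^c$. Next I would upgrade ``$\theta$-marginal equals $\mathbb{P}$'' to ``$\pi_\theta(S) = [0,2\pi)$''. Being the continuous image of a compact set, $\pi_\theta(S)$ is compact, hence closed. Moreover, for every non-empty open $V \subset [0,2\pi)$ one has $\mathbb{P}(V) > 0$, so $\mu(\pi_\theta^{-1}(V)) > 0$ and $S$ must meet $\pi_\theta^{-1}(V)$; hence $V \cap \pi_\theta(S) \neq \emptyset$, showing that $\pi_\theta(S)$ is also dense in $[0,2\pi)$ and therefore equal to the whole circle.

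In particular there exists $x^\ast \in \bbR$ with $(x^\ast,0) \in S$. Forward-invariance then gives $G_{\beta,\sigma}^k(x^\ast,0) = (f_{\beta+\sigma}^k(x^\ast),0) \in S$ for every $k \geq 0$. Since $\beta+\sigma > \beta^\ast$, $a^+_{\beta+\sigma}$ is the unique fixed point of the strictly increasing map $f_{\beta+\sigma}$, with $f_{\beta+\sigma}(x) > x$ to its left and $f_{\beta+\sigma}(x) < x$ to its right, so $f_{\beta+\sigma}^k(x^\ast)$ converges monotonically to $a^+_{\beta+\sigma}$; closedness of $S$ then delivers $(a^+_{\beta+\sigma},0) \in S$. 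The symmetric argument with $\pi$ in place of $0$ (using $g(\pi)=\pi$, $f_{\beta,\sigma,\pi} = f_{\beta-\sigma}$, and the fact that $\beta-\sigma < -\beta^\ast$ makes $a^-_{\beta-\sigma}$ the unique globally attracting fixed point of $f_{\beta-\sigma}$) gives $(a^-_{\beta-\sigma},\pi) \in S$. No step here is a real obstacle; the only moderately delicate point is the upgrade from a full-Lebesgue-measure projection to a full projection, which genuinely uses both closedness of $S$ and the full support of $\mathbb{P}$.
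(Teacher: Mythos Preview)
Your argument is correct and is genuinely different from the paper's. The paper proves a preparatory lemma (Lemma~\ref{lemma:App_globcontr}) showing that for any $\theta_0$ whose forward $g$-orbit accumulates at $\pi$ (resp.\ $0$), the whole fibre $\bbR \times \{\theta_0\}$ is eventually mapped into any prescribed neighbourhood of $(a^-_{\beta-\sigma},\pi)$ (resp.\ $(a^+_{\beta+\sigma},0)$); it then picks a point $(x_0,\theta_0)\in\mathrm{supp}\,\mu$ with $\theta_0$ having a dense $g$-orbit (available because $\mathbb{P}$ is ergodic) and concludes that the endpoints lie in the closure of its forward trajectory. You instead exploit directly that $0$ and $\pi$ are fixed by $g$: you first show that the $\theta$-projection of $\mathrm{supp}\,\mu$ is all of $[0,2\pi)$ (closed plus dense), grab a point of the support sitting over $\theta=0$ (resp.\ $\theta=\pi$), and iterate within that invariant fibre, where the dynamics is the globally contracting one-dimensional map $f_{\beta+\sigma}$ (resp.\ $f_{\beta-\sigma}$). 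Your route is shorter and avoids both the auxiliary lemma and the appeal to dense orbits; the paper's lemma, on the other hand, is a stronger standalone statement (uniform fibre contraction from \emph{any} $\theta_0$ with the right recurrence) that may be of independent use. One cosmetic point: you use $x^\ast$ for the $x$-coordinate of the chosen support point, which clashes with the paper's reserved notation $x^\ast=\sqrt{\alpha-1}$; pick a different symbol.
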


We start with a lemma.

\begin{lemma} \label{lemma:App_globcontr}
~
    \begin{itemize}
        \item[(I)] For any $(\beta,\sigma)$ outside $S^+$, for every $\theta_0 \in [0,2\pi)$ with $\pi \in \overline{\{g^i(\theta_0)\}_{i \geq 0}}$ and every neighbourhood $U \subset \bbR \times [0,2\pi)$ of $(a^-_{\beta-\sigma},\pi)$, there exists $n \geq 0$ such that $G_{\beta,\sigma}^n(\bbR \times \{\theta_0\})$ is contained in $U$.
        \item[(II)] For any $(\beta,\sigma)$ outside $S^-$, for every $\theta_0 \in [0,2\pi)$ with $0 \in \overline{\{g^i(\theta_0)\}_{i \geq 0}}$ and every neighbourhood $U \subset \bbR \times [0,2\pi)$ of $(a^+_{\beta+\sigma},0)$, there exists $n \geq 0$ such that $G_{\beta,\sigma}^n(\bbR \times \{\theta_0\})$ is contained in $U$.
    \end{itemize}
\end{lemma}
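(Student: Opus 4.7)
I will prove part~(I); part~(II) follows by the symmetric argument with $\pi$ replaced by $0$ (also a fixed point of $g$ at which $g$ locally acts as multiplication by $3$) and $f_{\beta-\sigma}$ replaced by $f_{\beta+\sigma}$. The underlying idea: the fibre map at $\theta=\pi$ is $h := f_{\beta-\sigma}$, and since $(\beta,\sigma) \notin S^+$ means $\beta - \sigma < -\beta^*$, $h$ has $a^-_{\beta-\sigma}$ as its unique fixed point on $\overline{\bbR}$, globally attracting. The accumulation hypothesis, combined with the local expansion of $g$ near $\pi$ by the finite factor $3$, will let me arrange for some block $\theta_m, \ldots, \theta_{m+N-1}$ to lie entirely in a small neighbourhood of $\pi$; during such a block the fibre composition is a uniformly small perturbation of $h^N$ and hence contracts a fixed compact positively invariant set into any prescribed neighbourhood of $a^-_{\beta-\sigma}$.

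Without loss of generality, take $U = B_\varepsilon(a^-_{\beta-\sigma}) \times B_\varepsilon(\pi)$ for small $\varepsilon>0$. First, by Eq.~\eqref{eq:boundedness}, each fibre map sends $\bbR$ into the bounded interval $\rI_{\beta,\sigma}$; let $K := \overline{\rI_{\beta,\sigma}}$, a compact set positively invariant under every fibre map. Second, since $h^n(K) \to \{a^-_{\beta-\sigma}\}$ in Hausdorff distance, choose $N$ with $h^N(K) \subset B_{\varepsilon/2}(a^-_{\beta-\sigma})$. Third, by uniform continuity of the $N$-fold composition $(\vartheta_0,\ldots,\vartheta_{N-1},x) \mapsto (f_{\beta,\sigma,\vartheta_{N-1}} \circ \cdots \circ f_{\beta,\sigma,\vartheta_0})(x)$ on the compact set $[0,2\pi)^N \times K$, there exists $\delta \in (0,\pi/3)$ such that whenever $|\vartheta_j-\pi|<\delta$ for all $j=0,\ldots,N-1$, this composition maps $K$ into $B_\varepsilon(a^-_{\beta-\sigma})$.

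Finally, using that $g(\theta) - \pi = 3(\theta - \pi)$ for $\theta \in (2\pi/3, 4\pi/3)$, set $\eta := \min(\delta/3^{N-1}, \varepsilon/3^N)$; then any $\theta_\ast \in B_\eta(\pi)$ satisfies $|g^j(\theta_\ast) - \pi| = 3^j|\theta_\ast - \pi|$ for $j=0,\ldots,N$, whence $g^j(\theta_\ast) \in B_\delta(\pi)$ for $j<N$ and $g^N(\theta_\ast) \in B_\varepsilon(\pi)$. The hypothesis $\pi \in \overline{\{g^i(\theta_0)\}_{i \geq 0}}$ provides some $m \geq 1$ with $\theta_m := g^m(\theta_0) \in B_\eta(\pi)$ (and $g(\pi)=\pi$ handles the case $\theta_0=\pi$). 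Then with $n := m+N$,
\[ G_{\beta,\sigma}^n(\bbR \times \{\theta_0\}) = f_{\beta,\sigma,\theta_m,N}\bigl(f_{\beta,\sigma,\theta_0,m}(\bbR)\bigr) \times \{g^N(\theta_m)\} \subset f_{\beta,\sigma,\theta_m,N}(K) \times B_\varepsilon(\pi) \subset U, \]
where $f_{\beta,\sigma,\theta_0,m}(\bbR) \subset \rI_{\beta,\sigma} \subset K$ since $m \geq 1$.

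The delicate step is the nested choice of $N$ (for $h^N$ to crush $K$), then $\delta$ (for $N$-step continuity), then $\eta$ (so that $N$ consecutive iterates of $g$ stay within $\delta$ of $\pi$), each dictated by the finite local expansion rate $3$ of $g$ at $\pi$. This finite rate is precisely what allows the accumulation hypothesis—a priori much weaker than prolonged nearness—to suffice: a single visit within $\delta/3^{N-1}$ automatically propagates into a full $N$-step window of nearness, which is all the fibre-contraction estimate needs.
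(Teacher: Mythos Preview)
Your proof is correct and follows essentially the same approach as the paper: both exploit that $\pi$ is a fixed point of $g$ where the fibre map reduces to $f_{\beta-\sigma}$ with global attractor $a^-_{\beta-\sigma}$, choose $N$ so that $N$ iterates at $\pi$ crush a compact set, perturb by continuity, and then use the accumulation hypothesis to land near $\pi$. The only minor differences are that the paper works directly on $\overline{\bbR}$ rather than your compact $K$ after one iterate, and the paper uses abstract continuity of $\theta \mapsto f_{\beta,\sigma,\theta,m}$ and of $g^m$ at the fixed point $\pi$ rather than your explicit computation with the expansion factor $3$---so your emphasis on the finite rate being ``precisely what allows'' the argument slightly overstates its role, since mere continuity of $g$ at its fixed point already suffices.
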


\begin{proof}
We prove (I); the proof of (II) is similar. Fix $(\beta,\sigma)$ outside $S^+$. So $\beta-\sigma<-\beta^*$, and so the singleton $\{a^-_{\beta-\sigma}\}$ is a global attractor of $f_{\beta-\sigma}$. Fix $\theta_0 \in [0,2\pi)$ with $\pi \in \overline{\{g^i(\theta_0)\}_{i \geq 0}}$. Without loss of generality, take $U$ to be the open square of side-length $2\varepsilon$ centred on $(a^-_{\beta-\sigma},\pi)$ for some $\varepsilon>0$.

Since $\pi$ is a fixed point of $g$, we can find $m \geq 0$ such that
\[ f_{\beta,\sigma,\pi,m}(\overline{\bbR}) = f_{\beta-\sigma}^m(\overline{\bbR}) \subset \big(a^-_{\beta-\sigma}-\varepsilon,a^-_{\beta-\sigma}+\varepsilon\big). \]
So, by the compactness of $\overline{\bbR}$ and the continuity of the map $(x,\theta) \mapsto f_{\beta,\sigma,\theta,m}(x)$ from $\overline{\bbR} \times [0,2\pi)$ to $\bbR$, we can then find $\delta>0$ such that for all $\theta \in (\pi-\delta,\pi+\delta)$,
\[ f_{\beta,\sigma,\theta,m}(\overline{\bbR}) \subset \big(a^-_{\beta-\sigma}-\varepsilon,a^-_{\beta-\sigma}+\varepsilon\big). \]
Using again the fact that $\pi$ is a fixed point of $g$, since $g$ is continuous we can find $\delta_1 \in (0,\delta)$ such that for every $\theta \in (\pi-\delta_1,\pi+\delta_1)$, $g^m(\theta) \in (\pi-\varepsilon,\pi+\varepsilon)$. Since $\pi \in \overline{\{g^i(\theta_0)\}_{i \geq 0}}$, we can find $n_0 \geq 0$ such that $g^{n_0}(\theta_0) \in (\pi-\delta_1,\pi+\delta_1)$. So now, letting $n:=n_0+m$, we have
\[ g^n(\theta_0) = g^m(g^{n_0}(\theta_0)) \in (\pi-\varepsilon,\pi+\varepsilon) \]
and
\[ f_{\beta,\sigma,\theta_0,n}(\bbR) \subset f_{\beta,\sigma,g^{n_0}(\theta_0),m}(\bbR) \subset \big(a^-_{\beta-\sigma}-\varepsilon,a^-_{\beta-\sigma}+\varepsilon\big), \]
and therefore
\[ G_{\beta,\sigma}^n(\bbR \times \{\theta_0\}) \subset U. \qedhere \]
\end{proof}

\begin{proof}[Proof of Proposition~\ref{prop:App_endpoints}]
Fix $(\beta,\sigma) \in \rD$, and fix a $G_{\beta,\sigma}$-invariant measure $\mu$ with $\theta$-marginal $\mathbb{P}$. Since $\mathbb{P}$ is a $g$-ergodic invariant probability measure, the trajectory of $\mathbb{P}$-almost every point $\theta_0$ under $g$ is dense in $[0,2\pi)$. So taking $(x_0,\theta_0) \in \mathrm{supp}\,\mu$ with $\theta_0$ having a dense trajectory, Lemma~\ref{lemma:App_globcontr}(I) gives that $(a^-_{\beta-\sigma},\pi)$ is in the closure of the $G_{\beta,\sigma}$-trajectory of $(x_0,\theta_0)$, and Lemma~\ref{lemma:App_globcontr}(II) gives that $(a^+_{\beta+\sigma},0)$ is in the closure of the $G_{\beta,\sigma}$-trajectory of $(x_0,\theta_0)$.
\end{proof}

\subsection{Globally contractive dynamics} \label{sec:App_GC}

We will need the following general fact:

\begin{lemma} \label{lemma:App_convprob}
    Let $(X_n)_{n \geq 1}$ be a uniformly bounded sequence of nonnegative-valued random variables that converges in probability to $0$. Then $\liminf_{N \to \infty} \frac{1}{N} \sum_{n=1}^N X_n$ is almost surely equal to $0$.
\end{lemma}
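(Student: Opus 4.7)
The plan is to reduce almost-sure behaviour of the Cesàro averages to an $L^1$ computation, via the bounded convergence theorem followed by Fatou's lemma. Let $M$ be a uniform bound on $(X_n)_{n \geq 1}$, so $0 \leq X_n \leq M$ almost surely.

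First I would observe that convergence in probability together with the uniform bound $M$ implies convergence in $L^1$: by the bounded convergence theorem (applied to $X_n$, which is bounded and tends to $0$ in probability), $\mathbb{E}[X_n] \to 0$ as $n \to \infty$. Next, a Cesàro-means argument shows that
\[ \mathbb{E}\!\left[\frac{1}{N}\sum_{n=1}^N X_n\right] \;=\; \frac{1}{N}\sum_{n=1}^N \mathbb{E}[X_n] \;\longrightarrow\; 0 \quad \text{as } N\to\infty. \]

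Now I would apply Fatou's lemma to the nonnegative random variables $Y_N := \tfrac{1}{N}\sum_{n=1}^N X_n$ to conclude that
\[ \mathbb{E}\!\left[\liminf_{N \to \infty} Y_N\right] \;\leq\; \liminf_{N \to \infty} \mathbb{E}[Y_N] \;=\; 0. \]
Since $\liminf_{N \to \infty} Y_N$ is almost surely nonnegative and has nonpositive expectation, it must vanish almost surely, giving the claimed result.

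There is essentially no serious obstacle here: every step is a textbook consequence of the bounded convergence theorem, Cesàro summation, and Fatou's lemma. The only mild subtlety worth flagging is the initial passage from convergence in probability to convergence in $L^1$, which relies crucially on the uniform boundedness hypothesis; without it, the lemma would fail.
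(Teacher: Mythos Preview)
Your proof is correct and follows essentially the same route as the paper: both use uniform boundedness to upgrade convergence in probability to $L^1$, then pass to Ces\`aro means to get $\mathbb{E}[Y_N]\to 0$. The only difference is the finishing move: the paper observes that $Y_N\to 0$ in $L^1$ (hence in probability) and extracts an almost-surely convergent subsequence, whereas you apply Fatou's lemma directly to conclude $\mathbb{E}[\liminf Y_N]\le 0$; your ending is marginally cleaner, but the arguments are otherwise identical.
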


\begin{proof}
Since $(X_n)_n$ is uniformly bounded, so is $\big(\frac{1}{N} \sum_{n=1}^N X_n\big)_N$. Now for a uniformly bounded sequence of random variables, convergence in probability is equivalent to convergence in $L^1$; so
\[ 0 = \lim_{n \to \infty} \mathbb{E}[X_n] = \lim_{N \to \infty} \frac{1}{N} \sum_{n=1}^N \mathbb{E}[X_n] = \lim_{N \to \infty} \mathbb{E}\!\left[ \frac{1}{N} \sum_{n=1}^N X_n \right], \]
and therefore $\frac{1}{N} \sum_{n=1}^N X_n$ converges in probability to $0$ as $N \to \infty$. Since convergence in probability implies the existence of a subsequence on which we have almost sure convergence, the result follows.
\end{proof}

Now for any $(\beta,\sigma) \in \bbR \times [0,\infty)$, $n \geq 1$ and $\theta \in [0,2\pi)$, let
\[ X_{\beta,\sigma,n}(\theta) = f_{\beta,\sigma,\theta,n}(\infty) - f_{\beta,\sigma,\theta,n}(-\infty). \]
Similarly to in Sec.~\ref{sec:A1}, note that for each $(\theta_i)_{i \in \mathbb{Z}} \in \Theta$, the sequence of sets $f_{\beta,\sigma,\theta_{-n},n}(\overline{\bbR})$ is nested. So let
\[ \Theta_{\beta,\sigma} = \left\{ (\theta_i)_i \in \Theta \, : \, \bigcap_{n=0}^\infty f_{\beta,\sigma,\theta_{-n},n}(\overline{\bbR}) \textrm{ is a singleton} \right\}, \]
and for each $(\theta_i)_i \in \Theta_{\beta,\sigma}$, let $a_{\beta,\sigma}((\theta_i)_i) \in \bbR$ be such that
\[ \bigcap_{n=0}^\infty f_{\beta,\sigma,\theta_{-n},n}(\overline{\bbR}) = \{a_{\beta,\sigma}((\theta_i)_i)\}. \]
Now let us define $\tilde{\rD}$ to be the set of all $(\beta,\sigma) \in \bbR \times [0,\infty)$ for which $\tilde{\mathbb{P}}(\Theta_{\beta,\sigma})=1$; and similarly to in Sec.~\ref{sec:A1}, for each $(\beta,\sigma) \in \tilde{\rD}$, let $\mu_{\beta,\sigma}$ be the pushforward of $\tilde{\mathbb{P}}$ under the map $\Pi_{\beta,\sigma} \colon \boldsymbol{\theta} \mapsto (a_{\beta,\sigma}(\boldsymbol{\theta}),\theta_0)$, where $\theta_0$ denotes the $0$-coordinate of $\boldsymbol{\theta}$. We will now see that this matches the definitions of $\tilde{\rD}$ and $\mu_{\beta,\sigma}$ given in Sec.~\ref{sec:bulletpoints}.

\begin{prop} \label{prop:App_tildeD}
    For $(\beta,\sigma) \in \bbR \times [0,\infty)$, the following statements are equivalent:
    \begin{itemize}
        \item[(i)] $(\beta,\sigma) \in \tilde{\rD}$;
        \item[(ii)] over the probability space $([0,2\pi),\mathbb{P})$, $X_{\beta,\sigma,n}$ converges in probability to $0$ as $n \to \infty$;
        \item[(iii)] there is only one $G_{\beta,\sigma}$-invariant measure with $\theta$-marginal $\mathbb{P}$.
    \end{itemize}
    If these statements hold, $\mu_{\beta,\sigma}$ is the unique $G_{\beta,\sigma}$-invariant measure with $\theta$-marginal $\mathbb{P}$.
\end{prop}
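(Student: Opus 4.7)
The plan is to prove (i) $\Leftrightarrow$ (ii) directly by monotonicity, and to close the triangle with (i) $\Rightarrow$ (iii) and its contrapositive $\neg$(i) $\Rightarrow$ $\neg$(iii); the identification of $\mu_{\beta,\sigma}$ as the unique invariant measure in the final statement will fall out of the proof of (i) $\Rightarrow$ (iii).

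For (i) $\Leftrightarrow$ (ii), I would exploit the fact that $\{f_{\beta,\sigma,\theta_{-n},n}(\overline{\bbR})\}_{n \geq 0}$ is a nested decreasing family of compact intervals (by Eq.~\eqref{eq:boundedness}) whose intersection is a singleton iff the widths tend to $0$. Under $\tilde{\mathbb{P}}$ the coordinate $\theta_{-n}$ is $\mathbb{P}$-distributed, so the width $X_{\beta,\sigma,n}(\theta_{-n})$ has the same marginal law as $X_{\beta,\sigma,n}$ has under $\mathbb{P}$; moreover nestedness forces $n \mapsto X_{\beta,\sigma,n}(\theta_{-n})$ to be $\tilde{\mathbb{P}}$-a.s.\ nonincreasing, hence $\tilde{\mathbb{P}}$-a.s.\ convergent. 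Condition (i) says this a.s.\ limit vanishes, condition (ii) says the same sequence converges in probability to $0$, and for a monotone uniformly bounded sequence these coincide. At this point I would also verify that $\mu_{\beta,\sigma}$ (as defined just before the proposition) is a $G_{\beta,\sigma}$-invariant probability measure with $\theta$-marginal $\mathbb{P}$ by checking the invariant-graph identity $a_{\beta,\sigma}(\tilde g \boldsymbol\theta) = f_{\beta,\sigma,\theta_0}(a_{\beta,\sigma}(\boldsymbol\theta))$, which follows from the same telescoping used in Sec.~\ref{sec:App_inv}.

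The core step is (i) $\Rightarrow$ (iii), which I would prove by lifting to the natural extension. Define $\tilde G_{\beta,\sigma}(x,\boldsymbol\theta) := (f_{\beta,\sigma,\theta_0}(x), \tilde g \boldsymbol\theta)$ on $\bbR \times \Theta$ and $p_0(x,\boldsymbol\theta) := (x,\theta_0)$; a direct check gives $p_0 \circ \tilde G_{\beta,\sigma} = G_{\beta,\sigma} \circ p_0$. Given any $G_{\beta,\sigma}$-invariant $\mu$ with $\theta$-marginal $\mathbb{P}$, let $M$ be the set of probability measures on $\bbR \times \Theta$ with $\boldsymbol\theta$-marginal $\tilde{\mathbb{P}}$ and whose pushforward under $p_0$ equals $\mu$. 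By Eq.~\eqref{eq:boundedness} every $G_{\beta,\sigma}$-invariant measure is supported in a common compact $x$-window, so $M$ is non-empty (take the disintegrated product $d\mu_{\theta_0}(x)\, d\tilde{\mathbb{P}}(\boldsymbol\theta)$), convex, weak*-compact, and stable under the continuous affine operator $(\tilde G_{\beta,\sigma})_*$; Markov-Kakutani supplies a $\tilde G_{\beta,\sigma}$-invariant $\tilde\mu \in M$. Disintegrating $\tilde\mu = \int \tilde\mu_{\boldsymbol\theta}\, d\tilde{\mathbb{P}}(\boldsymbol\theta)$, invariance yields $(f_{\beta,\sigma,\theta_0})_* \tilde\mu_{\boldsymbol\theta} = \tilde\mu_{\tilde g \boldsymbol\theta}$ for $\tilde{\mathbb{P}}$-a.e.\ $\boldsymbol\theta$, and iterating backward for each fixed $n \geq 0$ gives $\mathrm{supp}(\tilde\mu_{\boldsymbol\theta}) \subset f_{\beta,\sigma,\theta_{-n},n}(\overline{\bbR})$ for $\tilde{\mathbb{P}}$-a.e.\ $\boldsymbol\theta$. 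Under (i) the intersection over $n$ is $\{a_{\beta,\sigma}(\boldsymbol\theta)\}$ for $\tilde{\mathbb{P}}$-a.e.\ $\boldsymbol\theta$, so $\tilde\mu_{\boldsymbol\theta} = \delta_{a_{\beta,\sigma}(\boldsymbol\theta)}$ and thus $\mu = (p_0)_* \tilde\mu = \mu_{\beta,\sigma}$, which simultaneously delivers (iii) and identifies the unique invariant measure.

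For $\neg$(i) $\Rightarrow$ $\neg$(iii), monotonicity of each $f_{\beta,\sigma,\theta,n}$ makes the limits $\overline{a}(\boldsymbol\theta) := \lim_n f_{\beta,\sigma,\theta_{-n},n}(\infty)$ and $\underline{a}(\boldsymbol\theta) := \lim_n f_{\beta,\sigma,\theta_{-n},n}(-\infty)$ exist for every $\boldsymbol\theta \in \Theta$, with $\underline{a} \leq \overline{a}$ and equality precisely on $\Theta_{\beta,\sigma}$. The same invariant-graph computation used for $a_{\beta,\sigma}$ shows that $\boldsymbol\theta \mapsto (\overline{a}(\boldsymbol\theta),\theta_0)$ and $\boldsymbol\theta \mapsto (\underline{a}(\boldsymbol\theta),\theta_0)$ intertwine $\tilde g$ with $G_{\beta,\sigma}$, so their pushforwards $\overline{\mu}$ and $\underline{\mu}$ of $\tilde{\mathbb{P}}$ are $G_{\beta,\sigma}$-invariant with $\theta$-marginal $\mathbb{P}$. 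If (i) fails then $\tilde{\mathbb{P}}(\{\underline{a} < \overline{a}\})>0$, and strict monotonicity of $\atan$ gives $\int \atan(x)\, d\overline{\mu} > \int \atan(x)\, d\underline{\mu}$, so $\overline{\mu} \neq \underline{\mu}$, violating (iii). The main obstacle will be the Markov-Kakutani lift step — verifying the required properties of $M$ cleanly and handling the union of countably many $\tilde{\mathbb{P}}$-null sets from the backward iteration; the remaining disintegration, monotone-convergence and invariant-graph ingredients are by now routine within this paper.
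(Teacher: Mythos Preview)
Your proposal is correct. The directions (i)$\Leftrightarrow$(ii) and $\neg$(i)$\Rightarrow$$\neg$(iii) are essentially the paper's arguments: the paper also identifies the law of $X_{\beta,\sigma,n}$ under $\mathbb{P}$ with the width $\tilde X_{\beta,\sigma,n}$ of the pullback interval under $\tilde{\mathbb{P}}$, and for the failure of uniqueness constructs the same two extremal invariant graphs $\dot a^\pm_{\beta,\sigma}$ (your $\overline a,\underline a$), though it separates their pushforwards by the ``equal law plus pointwise inequality implies a.s.\ equality'' trick rather than your $\arctan$ test function.

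Where you genuinely diverge is (i)$\Rightarrow$(iii). The paper argues by contradiction at the Birkhoff-average level: it invokes Lemma~\ref{lemma:App_convprob} to turn convergence in probability of $X_{\beta,\sigma,n}$ into $\liminf_N \frac1N\sum_{n=1}^N X_{\beta,\sigma,n}=0$ a.s., picks a $\theta$ lying simultaneously in the $\theta$-projections of the basins of two putative distinct ergodic measures, and uses a $1$-Lipschitz test function to force the two Birkhoff averages to agree. Your route instead lifts an arbitrary invariant $\mu$ to the natural extension via Markov--Kakutani and squeezes the disintegration $\tilde\mu_{\boldsymbol\theta}$ into the collapsing pullback intervals, reading off $\mu=\mu_{\beta,\sigma}$ directly. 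Your argument is more structural and yields the identification of $\mu$ constructively rather than by contradiction, at the cost of the Markov--Kakutani and disintegration machinery; the paper's argument is more elementary (Birkhoff plus a Ces\`aro lemma) but needs the ergodic-decomposition Lemma~\ref{lemma:App_ergdec} and the auxiliary Lemma~\ref{lemma:App_convprob}. The obstacles you flag---closedness/compactness of $M$ and the countable union of null sets in the backward iteration---are both routine here (compact $x$-window from Eq.~\eqref{eq:boundedness}, $\Theta$ compact, and countable unions of null sets are null).
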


\begin{proof}
For all $(\theta_i)_{i \in \mathbb{Z}} \in \Theta$ and $n \geq 1$, let
\[ \tilde{X}_{\beta,\sigma,n}((\theta_i)_i) = f_{\beta,\sigma,\theta_{-n},n}(\infty) - f_{\beta,\sigma,\theta_{-n},n}(\infty). \]
Define
\begin{align*}
    \dot{a}^-_{\beta,\sigma}((\theta_i)_i) &= \lim_{n \to \infty} f_{\beta,\sigma,\theta_{-n},n}(-\infty) \\
    \dot{a}^+_{\beta,\sigma}((\theta_i)_i) &= \lim_{n \to \infty} f_{\beta,\sigma,\theta_{-n},n}(\infty) \\
    \tilde{X}_{\beta,\sigma} &= \dot{a}^+_{\beta,\sigma} - \dot{a}^-_{\beta,\sigma} = \lim_{n \to \infty} \tilde{X}_{\beta,\sigma,n}
\end{align*}
So $\bigcap_{n=0}^\infty f_{\beta,\sigma,\theta_{-n},n}(\overline{\bbR})$ is precisely $[\dot{a}^-_{\beta,\sigma}((\theta_i)_i),\dot{a}^+_{\beta,\sigma}((\theta_i)_i)]$.

We first show the equivalence of (i) and (ii). For all $n \geq 1$, since $\mathbb{P}$ is precisely the $\theta_{-n}$-marginal of $\tilde{\mathbb{P}}$, the law of $\tilde{X}_{\beta,\sigma,n}$ over $(\Theta,\tilde{\mathbb{P}})$ is the same as the law of $X_{\beta,\sigma,n}$ over $([0,2\pi),\mathbb{P})$. Hence, 
as $n \to \infty$, $X_{\beta,\sigma,n}$ converges in law to the law of $\tilde{X}_{\beta,\sigma}$. But $\tilde{X}_{\beta,\sigma} \overset{\tilde{\mathbb{P}}\textrm{-a.s.}}{=} 0$ if and only if $\dot{a}^-_{\beta,\sigma} \overset{\tilde{\mathbb{P}}\textrm{-a.s.}}{=} \dot{a}^+_{\beta,\sigma}$, i.e.\ if and only if $(\beta,\sigma) \in \tilde{\rD}$.

We next show that (iii)$\Rightarrow$(i). Let $\dot{\mu}^\pm_{\beta,\sigma}$ be the pushforward of $\tilde{\mathbb{P}}$ under the map $\dot{\Pi}^\pm_{\beta,\sigma} \colon \boldsymbol{\theta} \mapsto (\dot{a}^\pm_{\beta,\sigma}(\boldsymbol{\theta}),\theta_0)$. Note that the $x$-marginal of $\dot{\mu}^\pm_{\beta,\sigma}$ is precisely the law of $\dot{a}^\pm_{\beta,\sigma}$ over $(\Theta,\tilde{\mathbb{P}})$. Now just as in Sec.~\ref{sec:App_inv}, $\dot{a}^\pm_{\beta,\sigma}$ fulfils the invariant-graph property; and therefore, as in Secs.~\ref{sec:A1} and \ref{sec:App_ergcts}, $\dot{\mu}^-_{\beta,\sigma}$ and $\dot{\mu}^+_{\beta,\sigma}$ are $G_{\beta,\sigma}$-invariant measures with $\theta$-marginal $\mathbb{P}$. Therefore, assuming (iii), we have $\dot{\mu}^-_{\beta,\sigma}=\dot{\mu}^+_{\beta,\sigma}$ and so the law of $\dot{a}^-_{\beta,\sigma}$ is the same as the law of $\dot{a}^+_{\beta,\sigma}$. But since $\dot{a}^-_{\beta,\sigma} \leq \dot{a}^+_{\beta,\sigma}$ throughout $\Theta$, it then follows that $\dot{a}^-_{\beta,\sigma} \overset{\tilde{\mathbb{P}}\textrm{-a.s.}}{=} \dot{a}^+_{\beta,\sigma}$.

To show (i)$\Rightarrow$(ii), suppose for a contradiction that (i) holds and (ii) does not. On the basis of Lemma~\ref{lemma:App_ergdec}, let $\mu_{\beta,\sigma}^1$ and $\mu_{\beta,\sigma}^2$ be distinct $G_{\beta,\sigma}$-ergodic invariant measures with $\theta$-marginal $\mathbb{P}$, and let $E_1,E_2 \subset [0,2\pi)$ be the $\theta$-projection of the basin of $\mu_{\beta,\sigma}^1,\mu_{\beta,\sigma}^2$ respectively. Now $E_1$ and $E_2$ are $\mathbb{P}$-full measure sets; and since $(X_{\beta,\sigma,n})_{n \geq 1}$ is uniformly bounded by $\alpha\pi$, Lemma~\ref{lemma:App_convprob} gives that $\liminf_{N \to \infty} \frac{1}{N} \sum_{n=1}^N X_{\beta,\sigma,n} \overset{\mathbb{P}\textrm{-a.s.}}{=} 0$. So fix a point $\theta \in E_1 \cap E_2$ such that $\liminf_{N \to \infty} \frac{1}{N} \sum_{n=1}^N X_{\beta,\sigma,n}(\theta)=0$. Fix $x_1,x_2$ such that $(\theta,x_1)$ and $(\theta,x_2)$ are in the basin of $\mu_{\beta,\sigma}^1$ and $\mu_{\beta,\sigma}^2$ respectively. Let $h \colon \bbR \times [0,2\pi)$ be a $1$-Lipschitz function such that $\int_{\bbR \times [0,2\pi)} h \, d\mu_{\beta,\sigma}^1 \neq \int_{\bbR \times [0,2\pi)} h \, d\mu_{\beta,\sigma}^2$. We have
\begin{align*}
    0 &< \left| \int_{\bbR \times [0,2\pi)} h \, d\mu_{\beta,\sigma}^1 - \int_{\bbR \times [0,2\pi)} h \, d\mu_{\beta,\sigma}^2 \right| \\
    &= \lim_{N \to \infty} \left| \frac{1}{N} \sum_{n=1}^N \bigg( h(G_{\beta,\sigma}^n(\theta,x_1)) - h(G_{\beta,\sigma}^n(\theta,x_2)) \bigg) \right| \\
    &\leq \liminf_{N \to \infty} \frac{1}{N} \sum_{n=1}^N |f_{\beta,\sigma,\theta,n}(x_1)-f_{\beta,\sigma,\theta_0,n}(x_2)| \\
    &\leq \liminf_{N \to \infty} \frac{1}{N} \sum_{n=1}^N X_{\beta,\sigma,n}(\theta) \\
    &=0,
\end{align*}
giving a contradiction.

If the statements (i)--(iii) hold, then as in the proof of the direction (iii)$\Rightarrow$(i), $\mu_{\beta,\sigma}$ is a $G_{\beta,\sigma}$-invariant measure with $\theta$-marginal $\mathbb{P}$.
\end{proof}

Now let $\tilde{\rD}_{\lambda<0}$ be as defined in Sec.~\ref{sec:bulletpoints}, namely, $\tilde{\rD}_{\lambda<0} \subset \tilde{\rD}$ is the set of all $(\beta,\sigma) \in \tilde{\rD}$ for which $\lambda_{\beta,\sigma}(\mu_{\beta,\sigma})<0$.

\begin{prop} \label{prop:App_expconv}
    For all $(\beta,\sigma) \in \tilde{\rD}_{\lambda<0}$, $X_{\beta,\sigma,n}$ exponentially converges to $0$ $\mathbb{P}$-a.s.\ as $n \to \infty$.
\end{prop}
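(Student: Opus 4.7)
The plan is to combine a mean-value-theorem identity for the width $X_{\beta,\sigma,n}(\theta)$ with Birkhoff's ergodic theorem applied both on the natural extension $(\tilde{g},\tilde{\mathbb{P}})$ of $(g,\mathbb{P})$ and on the base system $(g,\mathbb{P})$ itself, then exploit the negativity of $\lambda := \lambda_{\beta,\sigma}(\mu_{\beta,\sigma})$ via a Lipschitz comparison with a trajectory on the attractor.

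First, by the chain rule and the mean value theorem applied to the strictly increasing maps $f_{\beta,\sigma,\theta,k}$, I would observe that for each $n\ge 1$,
\[ \ln X_{\beta,\sigma,n}(\theta) \;=\; \ln(\alpha\pi) \;+\; \sum_{k=1}^{n-1} \ln f'(\eta_k(\theta)) \]
for some $\eta_k(\theta)$ in the interval $J_k(\theta) := [f_{\beta,\sigma,\theta,k}(-\infty),\,f_{\beta,\sigma,\theta,k}(\infty)] \subset \rI_{\beta,\sigma}$. Next, disintegrating $\tilde{\mathbb{P}}$ over the $\theta_0$-coordinate and applying Birkhoff's theorem on $(\tilde{g},\tilde{\mathbb{P}})$ to the bounded continuous function $\boldsymbol{\theta}' \mapsto \ln f'(a_{\beta,\sigma}(\boldsymbol{\theta}'))$, for $\mathbb{P}$-a.e.\ $\theta$ I can select a backward orbit $\boldsymbol{\theta}\in\Theta$ with $\theta_0=\theta$ such that the attractor trajectory $x_k := a_{\beta,\sigma}(\tilde{g}^k\boldsymbol{\theta}) = f_{\beta,\sigma,\theta,k}(x_0)$ lies in $J_k(\theta)$ and satisfies $\frac{1}{n}\sum_{k=0}^{n-1}\ln f'(x_k) \to \lambda$. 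Since then $|\eta_k(\theta)-x_k|\le X_{\beta,\sigma,k}(\theta)$ and $\ln f'$ is Lipschitz on the compact interval $\overline{\rI_{\beta,\sigma}}$ with some constant $L$, I obtain the key estimate
\[ \ln X_{\beta,\sigma,n}(\theta) \;\le\; \ln(\alpha\pi) + \sum_{k=1}^{n-1}\ln f'(x_k) + L\sum_{k=1}^{n-1} X_{\beta,\sigma,k}(\theta). \]

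The main obstacle is controlling this last term: I need $\frac{1}{N}\sum_{n=1}^N X_{\beta,\sigma,n}(\theta) \to 0$ for $\mathbb{P}$-a.e.\ $\theta$, which is stronger than the liminf-only conclusion of Lemma~\ref{lemma:App_convprob}, since the hypothesis provides only convergence in probability. The trick is the pointwise monotonicity
\[ X_{\beta,\sigma,n+m}(\theta) \;\le\; X_{\beta,\sigma,m}(g^n\theta) \qquad \text{for all } n,m\ge 1, \]
which follows from the inclusion $f_{\beta,\sigma,\theta,n+m}(\overline{\bbR}) \subset f_{\beta,\sigma,g^n\theta,m}(\overline{\bbR})$. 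For each fixed $m$, Birkhoff applied to the bounded measurable function $\theta \mapsto X_{\beta,\sigma,m}(\theta)$ on $(g,\mathbb{P})$ then yields $\limsup_N \frac{1}{N}\sum_{n=1}^N X_{\beta,\sigma,n}(\theta) \le \int X_{\beta,\sigma,m}\,d\mathbb{P}$ $\mathbb{P}$-a.s. Since $X_{\beta,\sigma,m}\to 0$ in probability and is uniformly bounded by $\alpha\pi$, dominated convergence makes the right-hand side tend to $0$ as $m\to\infty$; a countable union over $m$ combined with $X_{\beta,\sigma,n}\ge 0$ then forces the Cesaro average to vanish $\mathbb{P}$-a.s. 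Dividing the key estimate by $n$ and taking $\limsup$ finally yields $\limsup_n \frac{1}{n}\ln X_{\beta,\sigma,n}(\theta) \le \lambda < 0$ $\mathbb{P}$-a.s., which is the claimed exponential convergence.
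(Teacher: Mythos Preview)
Your proof is correct and follows a genuinely different route from the paper. The paper invokes an external stable manifold theorem \cite[Proposition~3.3]{Malicet2017} to obtain, for $\mu_{\beta,\sigma}$-almost every point, a local interval of some radius $\delta_1$ that contracts exponentially; it then runs a hitting-time argument, using Lemma~\ref{lemma:App_convprob} (applied to indicators) to see that $X_{\beta,\sigma,n}(\theta)\le\delta_1$ with upper density one, and Birkhoff to see that the orbit visits the good set $Y$ with positive frequency, so that at some time $n_0$ both events occur and exponential contraction takes over from then on. Your approach is more elementary and entirely self-contained: the mean-value-theorem identity plus the Lipschitz comparison with an attractor trajectory $x_k=a_{\beta,\sigma}(\tilde g^k\boldsymbol{\theta})$ reduces the problem to showing that the Ces\`aro averages $\frac{1}{N}\sum_{n=1}^N X_{\beta,\sigma,n}(\theta)$ vanish $\mathbb{P}$-almost surely, and your monotonicity observation $X_{\beta,\sigma,n+m}(\theta)\le X_{\beta,\sigma,m}(g^n\theta)$ together with Birkhoff for each fixed $m$ upgrades the liminf conclusion of Lemma~\ref{lemma:App_convprob} to a genuine limit. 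Besides avoiding the external stable-manifold input, your argument yields the sharp rate $\limsup_{n}\frac{1}{n}\ln X_{\beta,\sigma,n}\le\lambda$, whereas the paper's hitting-time argument only gives \emph{some} exponential rate inherited from the local stable manifold. One cosmetic point: the function $\boldsymbol{\theta}'\mapsto\ln f'(a_{\beta,\sigma}(\boldsymbol{\theta}'))$ is only defined on the $\tilde{\mathbb{P}}$-full set $\Theta_{\beta,\sigma}$ rather than all of $\Theta$, so ``bounded measurable'' is the correct hypothesis for Birkhoff there, not ``bounded continuous''; this does not affect the argument.
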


\begin{proof}
Fix $(\beta,\sigma) \in \tilde{\rD}_{\lambda<0}$. By a stable manifold theorem~\cite[Proposition~3.3]{Malicet2017}, for $\mu_{\beta,\sigma}$-almost all $(x,\theta) \in \bbR \times [0,2\pi)$, there exists $\delta>0$ such that
\[ f_{\beta,\sigma,\theta,n}(x+\delta) - f_{\beta,\sigma,\theta,n}(x-\delta) \to 0 \textrm{ exponentially as } n \to \infty. \]
So take a value $\delta_1>0$ sufficiently small that the set
\[ Y := \left\{ (x,\theta) \, : \, f_{\beta,\sigma,\theta,n}(x+\delta_1) - f_{\beta,\sigma,\theta,n}(x-\delta_1) \to 0 \textrm{ exponentially as } n \to \infty \right\} \]
has $\mu_{\beta,\sigma}(Y)>0$. Let
\[ Z = \left\{ (x,\theta) \in \bbR \times [0,2\pi) \, : \, \lim_{n \to \infty} \frac{1}{N} \#\left\{n \in \{1,\ldots,N\} : G_{\beta,\sigma}^n(x,\theta) \in Y \right\} =
\mu_{\beta,\sigma}(Y) \right\}. \]
Since $\mu_{\beta,\sigma}$ is a $G_{\beta,\sigma}$-ergodic probability measure, we have that $\mu_{\beta,\sigma}(Z)=1$. Now since $X_{\beta,\sigma,n}$ converges in probability to $0$ as $n \to \infty$, by Lemma~\ref{lemma:App_convprob} (applied to $\mathbbm{1}_{\{X_{\beta,\sigma,n} > \delta_1\}}$) there is a $\mathbb{P}$-full measure set $E_0 \subset [0,2\pi)$ such that for all $\theta \in E_0$,
\[ \limsup_{n \to \infty} \frac{1}{N} \#\left\{n \in \{1,\ldots,N\} : X_{\beta,\sigma,n}(\theta) \leq \delta_1 \right\} =
1. \]
Let $E$ be the intersection of $E_0$ and the $\theta$-projection of $Z$. So $E$ is a $\mathbb{P}$-full measure set. We will show that for all $\theta \in E$, $X_{\beta,\sigma,n}(\theta) \to 0$ exponentially as $n \to \infty$. Fix $\theta \in E$, and fix $x$ such that $(x,\theta) \in Z$. Since
\begin{align*}
    \limsup_{n \to \infty} \frac{1}{N} \#\left\{n \in \{1,\ldots,N\} : X_{\beta,\sigma,n}(\theta) \leq \delta_1 \right\} &= 1 \quad \textrm{and} \\
    \lim_{n \to \infty} \frac{1}{N} \#\left\{n \in \{1,\ldots,N\} : G_{\beta,\sigma}^n(x,\theta) \in Y \right\} &> 0,
\end{align*}
there must exist $n_0 \geq 1$ with the properties that $X_{\beta,\sigma,n_0}(\theta) \leq \delta_1$ and $G_{\beta,\sigma}^{n_0}(x,\theta) \in Y$. But
\begin{itemize}
    \item the fact that $X_{\beta,\sigma,n_0}(\theta) \leq \delta_1$ implies that
    \[ f_{\beta,\sigma,\theta,n_0}(\overline{\bbR}) \subset [f_{\beta,\sigma,\theta,n_0}(x)-\delta_1,f_{\beta,\sigma,\theta,n_0}(x)+\delta_1] =: I_{n_0} \]
    and therefore
    \begin{equation} \label{eq:App_In0} f_{\beta,\sigma,\theta,n_0+n}(\overline{\bbR}) \subset f_{\beta,\sigma,g^{n_0}(\theta),n}(I_{n_0}) \end{equation}
    for all $n \geq 0$;
    \item the fact that $G_{\beta,\sigma}^{n_0}(x,\theta) \in Y$ implies that
    \begin{equation} \label{eq:App_In0contr} \mathrm{diam}(f_{\beta,\sigma,g^{n_0}(\theta),n}(I_{n_0})) \to 0 \textrm{ exponentially as } n \to \infty. \end{equation}
\end{itemize}
Combining Eqs.~\ref{eq:App_In0} and \eqref{eq:App_In0contr} gives that $X_{\beta,\sigma,n}(\theta) = \mathrm{diam}(f_{\beta,\sigma,\theta,n}(\overline{\bbR})) \to 0$ exponentially as $n \to \infty$.
\end{proof}

\begin{prop} \label{prop:App_pi/2}
    For all $\sigma \geq 0$, there is no $\mathbb{P}$-full measure subset of $[0,2\pi)$ on which $X_{0,\sigma,n} \to 0$ uniformly as $n \to \infty$.
\end{prop}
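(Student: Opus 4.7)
The plan is to exhibit a single exceptional angle $\theta^\ast \in [0,2\pi)$ at which $X_{0,\sigma,n}(\theta^\ast)$ stays bounded away from $0$ for every $\sigma\geq 0$, and then upgrade this pointwise obstruction into an obstruction to uniform convergence on any $\mathbb{P}$-full measure set by combining continuity of $X_{0,\sigma,n}$ in $\theta$ with the fact that a full measure subset of the circle is dense.

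The candidate will be $\theta^\ast=\pi/2$. Since $g(\pi/2)=3\pi/2$ and $g(3\pi/2)\equiv \pi/2 \pmod{2\pi}$, the set $\{\pi/2,\,3\pi/2\}$ is a period-$2$ orbit of $g$, and, crucially, $\cos$ vanishes at both points. Hence, for $\beta=0$, the forcing $\sigma\cos(g^i(\theta^\ast))$ is identically zero along this orbit regardless of $\sigma$, so
\[ f_{0,\sigma,\pi/2,n} \ = \ f_0^n, \qquad \text{where } f_0(x)=\alpha\atan(x). \]
Since $\alpha>1$, the unforced map $f_0$ has attracting fixed points $\pm a_0$ with monotone convergence $f_0^n(\infty) \downarrow a_0$ and $f_0^n(-\infty)\uparrow -a_0$, so that
\[ X_{0,\sigma,n}(\pi/2) \ = \ f_0^n(\infty)-f_0^n(-\infty) \ \longrightarrow \ 2a_0 \ > \ 0. \]

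To finish, I would argue by contradiction: suppose $E\subset[0,2\pi)$ were a $\mathbb{P}$-full measure set on which the convergence $X_{0,\sigma,n}\to 0$ is uniform, i.e.\ $\sup_{\theta\in E}X_{0,\sigma,n}(\theta)\to 0$. For each fixed $n$ the map $\theta\mapsto X_{0,\sigma,n}(\theta)$ is continuous on $[0,2\pi)$ viewed as a circle, because $g$ is a continuous self-map of the circle and each factor $f_{0,\sigma,\theta}(\pm\infty)=\pm\alpha\pi/2+\sigma\cos\theta$ depends continuously on $\theta$, so the finite composition $f_{0,\sigma,\theta,n}(\pm\infty)$ does too. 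Because $E$ is a $\mathbb{P}$-full measure subset of the circle, it is dense, and continuity then promotes the uniform bound on $E$ to the same uniform bound on all of $[0,2\pi)$; in particular $X_{0,\sigma,n}(\pi/2)\to 0$, contradicting the preceding display.

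The only genuinely delicate step is ensuring continuity of $X_{0,\sigma,n}$ on $[0,2\pi)$ and hence the sup-over-dense-set argument, which requires working with the circle topology on $[0,2\pi)$ under which $g^k$ is continuous (on the half-open interval $g^k$ has jumps, and a dense subset would no longer suffice). This is already built into the paper's standing conventions, so the argument is essentially immediate once $\pi/2$ has been identified as the right exceptional angle.
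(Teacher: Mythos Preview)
Your proof is correct and follows essentially the same approach as the paper: identify the period-$2$ orbit $\{\pi/2,3\pi/2\}$ of $g$ on which $\cos$ vanishes so that $f_{0,\sigma,\pi/2,n}=f_0^n$ and $X_{0,\sigma,n}(\pi/2)\to 2a_0>0$, then use continuity of $X_{0,\sigma,n}$ together with density of any $\mathbb{P}$-full measure set to rule out uniform convergence to $0$. Your explicit attention to the circle topology (needed so that $g^k$ is continuous) is a welcome clarification of a point the paper leaves implicit.
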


\begin{proof}
The set $\{\frac{\pi}{2},\frac{3\pi}{2}\}$ is a $2$-periodic orbit of $g$ on which $\cos=0$, and hence
\[ f_{0,\sigma,\frac{\pi}{2},n} = f_0^n \]
for all $n \geq 0$. Hence
\[ \lim_{n \to \infty} X_{0,\sigma,n}(\tfrac{\pi}{2}) = \left(\lim_{n \to \infty} f_0^n(\infty) \right) - \left(\lim_{n \to \infty} f_0^n(-\infty) \right) = 2a_0 > 0. \]
Since $X_{0,\sigma,n}$ is continuous for each $n$, and $\tfrac{\pi}{2}$ belongs to the closure of every $\mathbb{P}$-full measure set, we then have the desired result.
\end{proof}

\subsection{Continuous dependence of $a$ and $\mu$} \label{sec:App_ctsD}

Let $\mathfrak{S}=\{(\beta,\sigma,\boldsymbol{\theta}) \in \bbR \times [0,\infty) \times \Theta \, : \, 
\boldsymbol{\theta} \in \Theta_{\beta,\sigma} \}$. Then exactly the same argument as in Sec.~\ref{sec:App_cont} with $\mathfrak{S}$ in place of $S^+ \times \Theta$ and with $\overline{\bbR}$ in place of $[x^*,\infty]$ yields that the map $(\beta,\sigma,\boldsymbol{\theta}) \mapsto a_{\beta,\sigma}(\boldsymbol{\theta})$ is continuous on $\mathfrak{S}$.

Consequently, for each $(\beta,\sigma) \in \tilde{\rD}$, the image of $\Theta_{\beta,\sigma}$ under $\Pi_{\beta,\sigma}$ is contained in the support $\cA_{\beta,\sigma}$ of $\mu_{\beta,\sigma}$.

Moreover (similarly to in Sec.~\ref{sec:App_ergcts}), over $(\beta,\sigma) \in \tilde{\rD}$, $\mu_{\beta,\sigma}$ depends continuously on $(\beta,\sigma)$ in the topology of weak convergence: Fix a sequence $(\beta_n,\sigma_n)$ in $\tilde{\rD}$ converging to a point $(\beta,\sigma) \in \tilde{\rD}$. The set $\Theta_{\beta,\sigma} \cap \bigcap_n \Theta_{\beta_n,\sigma_n}$ is a $\tilde{\mathbb{P}}$-full measure set on which $\Pi_{\beta_n,\sigma_n}$ converges pointwise to $\Pi_{\beta,\sigma}$, and so the dominated convergence theorem gives that $\mu_{\beta_n,\sigma_n}$ converges weakly to $\mu_{\beta,\sigma}$.

\subsection{Basin of $\cA_{\beta,\sigma}$ and $\mu_{\beta,\sigma}$ in 
$\tilde{\rD}_{\lambda<0}$} \label{sec:App_BasinD}

For $(\beta,\sigma) \in \tilde{\rD}$, we said in the proof of Proposition~\ref{prop:App_tildeD} that $\dot{a}^+_{\beta,\sigma}$ and $\dot{a}^-_{\beta,\sigma}$ fulfil the invariant-graph property; hence, since
\[ \Theta_{\beta,\sigma}=\{\boldsymbol{\theta} \in \Theta \, : \, \dot{a}^+_{\beta,\sigma}(\boldsymbol{\theta}) = \dot{a}^-_{\beta,\sigma}(\boldsymbol{\theta}) \}, \]
we have that $g(\Theta_{\beta,\sigma}) \subset \Theta_{\beta,\sigma}$ and $a_{\beta,\sigma}$ fulfils the invariant-graph property: for all $(\theta_i)_i \in \Theta_{\beta,\sigma}$ and $n \geq 0$,
\begin{equation} \label{eq:App_a-inv} a_{\beta,\sigma}((\theta_{i+n})_i) = f_{\beta,\sigma,\theta_0,n}\big( a_{\beta,\sigma}((\theta_i)_i) \big). \end{equation}
Hence, a consequence of Proposition~\ref{prop:App_expconv} is the following.

\begin{cor} \label{cor:App_attrinv}
    Fix $(\beta,\sigma) \in \tilde{\rD}_{\lambda<0}$. There is a $\tilde{\mathbb{P}}$-full measure set $\tilde{E} \subset \Theta_{\beta,\sigma}$ such that for every $(x,(\theta_i)_i) \in \bbR \times \tilde{E}$,
    \[ \left| f_{\beta,\sigma,\theta_0,n}(x) - a_{\beta,\sigma}((\theta_{i+n})_i) \right| \to 0 \quad \textrm{as } n \to \infty. \]
\end{cor}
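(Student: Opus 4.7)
The plan is to reduce the corollary directly to Proposition~\ref{prop:App_expconv} via the invariant-graph property of $a_{\beta,\sigma}$, turning the claim about convergence of forward orbits to the moving point $a_{\beta,\sigma}((\theta_{i+n})_i)$ into a claim about the shrinking width of the fibre images.

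First I would invoke Proposition~\ref{prop:App_expconv} to obtain a $\mathbb{P}$-full measure set $E \subset [0,2\pi)$ such that $X_{\beta,\sigma,n}(\theta_0) \to 0$ for every $\theta_0 \in E$. Then I would set
\[ \tilde{E} \;:=\; \Theta_{\beta,\sigma} \,\cap\, \{(\theta_i)_i \in \Theta \,:\, \theta_0 \in E\}. \]
Since $\tilde{\mathbb{P}}(\Theta_{\beta,\sigma}) = 1$ by definition of $\tilde{\rD}$, and the $\theta_0$-marginal of $\tilde{\mathbb{P}}$ is $\mathbb{P}$, the set $\tilde{E}$ has $\tilde{\mathbb{P}}$-full measure; it is also contained in $\Theta_{\beta,\sigma}$, as required by the statement.

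Next, for any $(x,(\theta_i)_i) \in \bbR \times \tilde{E}$, I would apply the invariant-graph identity~\eqref{eq:App_a-inv} with the base point $(\theta_i)_i \in \Theta_{\beta,\sigma}$ to rewrite
\[ a_{\beta,\sigma}((\theta_{i+n})_i) \;=\; f_{\beta,\sigma,\theta_0,n}\big(a_{\beta,\sigma}((\theta_i)_i)\big). \]
The quantity to be controlled is therefore $\big|f_{\beta,\sigma,\theta_0,n}(x) - f_{\beta,\sigma,\theta_0,n}(a_{\beta,\sigma}((\theta_i)_i))\big|$. Both $x$ and $a_{\beta,\sigma}((\theta_i)_i)$ lie in $\overline{\bbR}$, and each $f_{\beta,\sigma,\theta_0,n}$ is monotonic by~\eqref{eq:monotonicity}; hence the difference is bounded above by the diameter of $f_{\beta,\sigma,\theta_0,n}(\overline{\bbR})$, which is exactly $X_{\beta,\sigma,n}(\theta_0)$. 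Since $\theta_0 \in E$, this diameter tends to $0$ as $n \to \infty$, yielding the claim.

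I do not foresee a genuine obstacle: the argument is a transparent composition of three ingredients already in place, namely the invariant-graph identity~\eqref{eq:App_a-inv}, the monotonicity~\eqref{eq:monotonicity}, and the almost sure exponential contraction of fibre widths from Proposition~\ref{prop:App_expconv}. The only point requiring a line of care is to check that the $\mathbb{P}$-full set supplied by the proposition and the $\tilde{\mathbb{P}}$-full set $\Theta_{\beta,\sigma}$ intersect in a $\tilde{\mathbb{P}}$-full subset of $\Theta$, which is immediate from the definition of $\tilde{\mathbb{P}}$ as the measure whose every coordinate marginal is $\mathbb{P}$.
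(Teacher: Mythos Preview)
Your proposal is correct and follows essentially the same route as the paper: define $\tilde{E}$ as the intersection of $\Theta_{\beta,\sigma}$ with the preimage under $\theta_0$ of the $\mathbb{P}$-full set from Proposition~\ref{prop:App_expconv}, use the invariant-graph identity~\eqref{eq:App_a-inv} to rewrite $a_{\beta,\sigma}((\theta_{i+n})_i)$, and bound the resulting difference by $X_{\beta,\sigma,n}(\theta_0)$. The paper's proof is slightly terser but the ingredients and logical flow are identical.
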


\begin{proof}
On the basis of Proposition~\ref{prop:App_expconv}, let $E \subset [0,2\pi)$ be a $\mathbb{P}$-full measure set on which $X_{\beta,\sigma,n}$ converges pointwise to $0$. Let $\tilde{E}$ be the set of points in $\Theta_{\beta,\sigma}$ whose $0$-coordinate belongs to $E$. Then due to \eqref{eq:App_a-inv}, for all $(x,(\theta_i)_i) \in \bbR \times \tilde{E}$ we have
\[ \left| f_{\beta,\sigma,\theta_0,n}(x) - a_{\beta,\sigma}((\theta_{i+n})_i) \right| \leq \mathrm{diam}(f_{\beta,\sigma,\theta_0,n}(\bbR)) = X_{\beta,\sigma,n}(\theta_0) \to 0 \quad \textrm{as } n \to \infty. \qedhere \]
\end{proof}

\begin{cor} \label{cor:App_Dbasin}
    Fix $(\beta,\sigma) \in \tilde{\rD}_{\lambda<0}$. There is a $\mathbb{P}$-full measure set $E_0 \subset [0,2\pi)$ such that $\bbR \times E_0$ is contained in the basin of $\cA_{\beta,\sigma}$.
\end{cor}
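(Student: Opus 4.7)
The plan is to build $E_0$ as the set of angles $\theta$ for which there exists a two-sided $g$-orbit $\boldsymbol{\theta} \in \tilde{E}$ (provided by Corollary~\ref{cor:App_attrinv}) extending $\theta$ to the past, and then to use such a $\boldsymbol{\theta}$ to produce a trajectory in $\cA_{\beta,\sigma}$ that is shadowed by the trajectory of $(x,\theta)$ under $G_{\beta,\sigma}$ for any $x \in \bbR$. The shadowing is exactly what Corollary~\ref{cor:App_attrinv} is designed to give us; the real work is just in extracting a properly Borel full-measure $E_0$ from the statement of that corollary.

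Concretely, I would proceed as follows. Let $\pi \colon \Theta \to [0,2\pi)$ denote the projection $(\theta_i)_i \mapsto \theta_0$, so that $\pi_\ast \tilde{\mathbb{P}} = \mathbb{P}$. Applying Corollary~\ref{cor:App_attrinv} gives a set $\tilde{E} \subset \Theta_{\beta,\sigma}$ with $\tilde{\mathbb{P}}(\tilde{E}) = 1$. Disintegrate $\tilde{\mathbb{P}}$ along $\pi$ to obtain conditional probabilities $\{\tilde{\mathbb{P}}_\theta\}_{\theta \in [0,2\pi)}$ (which exist, as $\Theta$ is a compact metric, hence standard Borel, space), and set
\[ E_0 \;=\; \{\theta \in [0,2\pi) : \tilde{\mathbb{P}}_\theta(\tilde{E}) = 1\}. \]
This is a Borel set, $\mathbb{P}(E_0) = 1$ by Fubini, and by construction every $\theta \in E_0$ admits some $\boldsymbol{\theta} \in \tilde{E}$ with $\pi(\boldsymbol{\theta}) = \theta$.

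Now fix $\theta \in E_0$ and $x \in \bbR$, and pick such a $\boldsymbol{\theta} = (\theta_i)_i$. Since $\tilde{g}$ preserves $\Theta_{\beta,\sigma}$, each shift $\tilde{g}^n \boldsymbol{\theta}$ lies in $\Theta_{\beta,\sigma}$, so by the end of Sec.~\ref{sec:App_ctsD} the point $\Pi_{\beta,\sigma}(\tilde{g}^n \boldsymbol{\theta})$ lies in $\cA_{\beta,\sigma}$; and by the invariant-graph identity~\eqref{eq:App_a-inv},
\[ \Pi_{\beta,\sigma}(\tilde{g}^n \boldsymbol{\theta}) \;=\; \bigl(a_{\beta,\sigma}(\tilde{g}^n \boldsymbol{\theta}), g^n(\theta)\bigr) \;=\; G_{\beta,\sigma}^n\bigl(\Pi_{\beta,\sigma}(\boldsymbol{\theta})\bigr). \]
The product-metric distance between $G_{\beta,\sigma}^n(x,\theta) = (f_{\beta,\sigma,\theta,n}(x), g^n(\theta))$ and this point in $\cA_{\beta,\sigma}$ is therefore $|f_{\beta,\sigma,\theta,n}(x) - a_{\beta,\sigma}(\tilde{g}^n \boldsymbol{\theta})|$, which tends to $0$ by Corollary~\ref{cor:App_attrinv}. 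Hence $(x,\theta)$ lies in the basin of $\cA_{\beta,\sigma}$, as required.

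The only genuine obstacle is the measurability step: the continuous image $\pi(\tilde{E})$ is a priori only analytic, which is why I use the regular disintegration (equivalently, Fubini applied to $\mathbbm{1}_{\tilde{E}}$) to carve out a Borel $\mathbb{P}$-conull $E_0$ sitting inside it. Once that is secured, the rest of the proof reduces to reading off two facts already established in the appendix, namely the invariant-graph property of $a_{\beta,\sigma}$ and the inclusion $\Pi_{\beta,\sigma}(\Theta_{\beta,\sigma}) \subset \cA_{\beta,\sigma}$, combined with the exponential contraction of $X_{\beta,\sigma,n}$ that underlies Corollary~\ref{cor:App_attrinv}.
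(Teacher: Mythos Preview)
Your argument is correct and follows the same route as the paper: take $\tilde{E}$ from Corollary~\ref{cor:App_attrinv}, project to get $E_0$, then for each $\theta\in E_0$ pick $\boldsymbol{\theta}\in\tilde{E}$ over $\theta$ and use the invariant-graph property together with $\Pi_{\beta,\sigma}(\Theta_{\beta,\sigma})\subset\cA_{\beta,\sigma}$ to exhibit a point of $\cA_{\beta,\sigma}$ in the same fibre as $G_{\beta,\sigma}^n(x,\theta)$ whose $x$-coordinate differs by at most $X_{\beta,\sigma,n}(\theta)\to 0$. The only difference is that the paper simply sets $E_0=\pi(\tilde{E})$, whereas you go through a disintegration to secure Borel measurability; your extra care is harmless but not strictly needed here, since in the paper's construction $\tilde{E}=\pi^{-1}(E)\cap\Theta_{\beta,\sigma}$ for a Borel $\mathbb{P}$-full $E$, and in any case the projection of a Borel set is analytic and hence Lebesgue measurable.
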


\begin{proof}
Let $\tilde{E}$ be as in Corollary~\ref{cor:App_attrinv}, and let $E_0$ be the $\theta_0$-projection of $\tilde{E}$. Fix $(x,\theta) \in \bbR \times E_0$, and take a point $(\theta_i)_i \in \tilde{E}$ with $0$-coordinate $\theta$. Since, the image of $\Theta_{\beta,\sigma}$ under $\Pi_{\beta,\sigma}$ is contained in $\cA_{\beta,\sigma}$, we have in particular that
\[ \Big( a_{\beta,\sigma}((\theta_{i+n})_i) \, , \, g^n(\theta) \Big) \in \cA_{\beta,\sigma}. \]
for all $n \geq 0$. So then, from Corollary~\ref{cor:App_attrinv}, we have that the distance of $G^n(x,\theta)$ from $\cA_{\beta,\sigma}$ tends to $0$ as $n \to \infty$.
\end{proof}

\begin{prop} \label{prop:App_physD}
    Fix $(\beta,\sigma) \in \tilde{\rD}_{\lambda<0}$. There is a $\mathbb{P}$-full measure set $R \subset [0,2\pi)$ such that $\bbR \times R$ is contained in the basin of $\mu_{\beta,\sigma}$.
\end{prop}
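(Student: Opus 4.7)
The plan is to mimic the structure of the proof of Proposition~\ref{prop:App_phys} in Sec.~\ref{sec:App_basinmu}, with Corollary~\ref{cor:App_attrinv} playing the role that Lemma~\ref{lemma:App_contr} played there. First I would lift the dynamics to the extended space $\bbR \times \Theta$ by defining $\tilde{G}_{\beta,\sigma}(x,(\theta_i)_i) = (f_{\beta,\sigma,\theta_0}(x),(\theta_{i+1})_i)$ and the graph map $\tilde{\Pi}_{\beta,\sigma} \colon \Theta_{\beta,\sigma} \to \bbR \times \Theta$, $\boldsymbol{\theta} \mapsto (a_{\beta,\sigma}(\boldsymbol{\theta}),\boldsymbol{\theta})$, which is well-defined $\tilde{\mathbb{P}}$-a.e.\ on $\Theta$. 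Let $\tilde{\mu}_{\beta,\sigma}$ be the pushforward of $\tilde{\mathbb{P}}$ under $\tilde{\Pi}_{\beta,\sigma}$. The invariant-graph property of $a_{\beta,\sigma}$ noted in Sec.~\ref{sec:App_BasinD} implies that $\tilde{\Pi}_{\beta,\sigma}$ is a morphism from $(\tilde{g},\tilde{\mathbb{P}})$ to $\tilde{G}_{\beta,\sigma}$, so $\tilde{\mu}_{\beta,\sigma}$ is $\tilde{G}_{\beta,\sigma}$-ergodic.

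Next, by the Birkhoff ergodic theorem applied to $\tilde{\mu}_{\beta,\sigma}$, there is a $\tilde{\mathbb{P}}$-full measure set $\Theta_1 \subset \Theta_{\beta,\sigma}$ such that for every $\boldsymbol{\theta} \in \Theta_1$, $\tilde{\Pi}_{\beta,\sigma}(\boldsymbol{\theta})$ lies in the basin of $\tilde{\mu}_{\beta,\sigma}$ under $\tilde{G}_{\beta,\sigma}$. Let $\tilde{E} \subset \Theta_{\beta,\sigma}$ be the full-measure set from Corollary~\ref{cor:App_attrinv}, and define $R \subset [0,2\pi)$ to be the $\theta_0$-projection of $\Theta_1 \cap \tilde{E}$; since both sets have full $\tilde{\mathbb{P}}$-measure, $R$ has full $\mathbb{P}$-measure.

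The main claim is then: for every $(x,\theta) \in \bbR \times R$, the Birkhoff averages of any bounded uniformly continuous $h \colon \bbR \times [0,2\pi) \to \bbR$ along the $G_{\beta,\sigma}$-trajectory of $(x,\theta)$ converge to $\int h\,d\mu_{\beta,\sigma}$. To verify this, fix $(\theta_i)_i \in \Theta_1 \cap \tilde{E}$ with $\theta_0=\theta$, and set $\tilde{h}(x',(\theta_i')_i):=h(x',\theta_0')$. Write the difference
\[ \frac{1}{N}\sum_{n=0}^{N-1}h(G_{\beta,\sigma}^n(x,\theta)) - \int h\,d\mu_{\beta,\sigma} \]
as the sum of $\bigl(\frac{1}{N}\sum_{n=0}^{N-1}\tilde{h}(\tilde{G}_{\beta,\sigma}^n(\tilde{\Pi}_{\beta,\sigma}((\theta_i)_i))) - \int \tilde{h}\,d\tilde{\mu}_{\beta,\sigma}\bigr)$, which tends to $0$ because $(\theta_i)_i \in \Theta_1$, and a Cesàro average of the terms $h(f_{\beta,\sigma,\theta_0,n}(x),\theta_n) - h(a_{\beta,\sigma}((\theta_{i+n})_i),\theta_n)$, each of which tends to $0$ by uniform continuity of $h$ combined with Corollary~\ref{cor:App_attrinv}.

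The argument is essentially a direct transcription of the $\mu^+$ proof, and I do not expect any real obstacle: the only subtlety is ensuring that $R$ works for \emph{every} $x \in \bbR$ and not just a.e.\ $x$, but this is exactly what Corollary~\ref{cor:App_attrinv} buys us, since its conclusion is stated for every $x$ (the contraction is uniform in the $x$-fibre because $\mathrm{diam}(f_{\beta,\sigma,\theta_0,n}(\bbR))=X_{\beta,\sigma,n}(\theta_0) \to 0$ along $\tilde{E}$). Together with Corollary~\ref{cor:App_Dbasin} this establishes that $\mu_{\beta,\sigma}$ is a physical measure.
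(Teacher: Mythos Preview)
Your proposal is correct and follows essentially the same approach as the paper's proof: the paper simply says to rerun the argument of Proposition~\ref{prop:App_phys} with $a_{\beta,\sigma}$ in place of $a^+_{\beta,\sigma}$, intersecting the resulting $\Theta_1$ with the set $\tilde{E}$ from Corollary~\ref{cor:App_attrinv}, and using Corollary~\ref{cor:App_attrinv} in place of Lemma~\ref{lemma:App_contr}. You have spelled out exactly these steps, including the key observation that Corollary~\ref{cor:App_attrinv} gives the fibrewise contraction for \emph{every} $x \in \bbR$, which is what allows $R$ to depend only on $\theta$.
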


\begin{proof}
Construct $\Theta_1 \subset \Theta$ as in the proof of Proposition~\ref{prop:App_phys} with $a_{\beta,\sigma}$ in place of $a^+_{\beta,\sigma}$. Let $\Theta_2=\Theta_1 \cap \tilde{E}$, where $\tilde{E}$ is as in Corollary~\ref{cor:App_attrinv}. Let $R$ be the $\theta_0$-projection of $\Theta_2$. Then for any $(x,\theta) \in \bbR \times R$, one shows that $(x,\theta)$ is in the basin of $\mu_{\beta,\sigma}$ by the same argument as in the proof of Proposition~\ref{prop:App_phys}, using $\Theta_2$ in place of $\Theta_1$ and using Corollary~\ref{cor:App_attrinv} in place of Lemma~\ref{lemma:App_contr}.
\end{proof}

\end{document}